\let\coloneqq\relax
\definecolor{myrefcolor}{rgb}{0.067,0.5,0.5}
\newcolumntype{x}[1]{>{\centering\arraybackslash}p{#1}}
\def\clearthms#1{ \@for\tname:=#1\do{\cleartheorem\tname} }
\newtheorem*{thm*}{Theorem}
\newtheorem{thm}{Theorem}
\newtheorem{prop}[thm]{Proposition}
\newtheorem*{prop*}{Proposition}
\newtheorem{lemma}[thm]{Lemma}
\newtheorem*{lemma*}{Lemma}
\newtheorem*{cor*}{Corollary}
\newtheorem*{cj*}{Conjecture}
\newtheorem*{Def*}{Definition}
\newtheorem{remark}[thm]{Remark}
\def\thmhead@plain#1#2#3{%
  \thmname{#1}\thmnumber{\@ifnotempty{#1}{ }\@upn{#2}}%
  \thmnote{ {\the\thm@notefont#3}}}
\let\thmhead\thmhead@plain
\theoremstyle{definition}
\newcommand{\bb}{\begin{equation}\begin{aligned}\hspace{0pt}}
\newcommand{\be}{\begin{equation}\begin{aligned}\hspace{0pt}}
\newcommand{\bbb}{\begin{equation*}\begin{aligned}}
\newcommand{\ee}{\end{aligned}\end{equation}}
\newcommand{\eee}{\end{aligned}\end{equation*}}
\newcommand*{\coloneqq}{\mathrel{\vcenter{\baselineskip0.5ex \lineskiplimit0pt \hbox{\scriptsize.}\hbox{\scriptsize.}}} =}
\newcommand{\eqt}[1]{\stackrel{\mathclap{\scriptsize \mbox{#1}}}{=}}
\newcommand{\leqt}[1]{\stackrel{\mathclap{\scriptsize \mbox{#1}}}{\leq}}
\newcommand{\ketbra}[1]{\ket{#1}\!\!\bra{#1}}
\newcommand{\id}{\mathds{1}}
\newcommand{\C}{\mathds{C}}
\DeclareMathOperator{\Tr}{Tr}
\DeclareMathAlphabet{\pazocal}{OMS}{zplm}{m}{n}
\newcommand{\HH}{\pazocal{H}}
\newcommand{\lsmatrix}{\left(\begin{smallmatrix}}
\newcommand{\rsmatrix}{\end{smallmatrix}\right)}
\newcommand*\rel@kern[1]{\kern#1\dimexpr\macc@kerna}
\newcommand*\widebar[1]{%
  \begingroup
  \def\mathaccent##1##2{%
    \rel@kern{0.8}%
    \overline{\rel@kern{-0.8}\macc@nucleus\rel@kern{0.2}}%
    \rel@kern{-0.2}%
  }%
  \macc@depth\@ne
  \let\math@bgroup\@empty \let\math@egroup\macc@set@skewchar
  \mathsurround\z@ \frozen@everymath{\mathgroup\macc@group\relax}%
  \macc@set@skewchar\relax
  \let\mathaccentV\macc@nested@a
  \macc@nested@a\relax111{#1}%
  \endgroup
}
\tikzset{meter/.append style={draw, inner sep=10, rectangle, font=\vphantom{A}, minimum width=30, line width=.8, path picture={\draw[black] ([shift={(.1,.3)}]path picture bounding box.south west) to[bend left=50] ([shift={(-.1,.3)}]path picture bounding box.south east);\draw[black,-latex] ([shift={(0,.1)}]path picture bounding box.south) -- ([shift={(.3,-.1)}]path picture bounding box.north);}}}
\tikzset{roundnode/.append style={circle, draw=black, fill=gray!20, thick, minimum size=10mm}}
\tikzset{squarenode/.style={rectangle, draw=black, fill=none, thick, minimum size=10mm}}
\definecolor{Blues5seq1}{RGB}{239,243,255}
\definecolor{Blues5seq2}{RGB}{189,215,231}
\definecolor{Blues5seq3}{RGB}{107,174,214}
\definecolor{Blues5seq4}{RGB}{49,130,189}
\definecolor{Blues5seq5}{RGB}{8,81,156}
\definecolor{Greens5seq1}{RGB}{237,248,233}
\definecolor{Greens5seq2}{RGB}{186,228,179}
\definecolor{Greens5seq3}{RGB}{116,196,118}
\definecolor{Greens5seq4}{RGB}{49,163,84}
\definecolor{Greens5seq5}{RGB}{0,109,44}
\definecolor{Reds5seq1}{RGB}{254,229,217}
\definecolor{Reds5seq2}{RGB}{252,174,145}
\definecolor{Reds5seq3}{RGB}{251,106,74}
\definecolor{Reds5seq4}{RGB}{222,45,38}
\definecolor{Reds5seq5}{RGB}{165,15,21}
\newtheorem{example}{Example}
\newcommand*{\addFileDependency}[1]{
  \typeout{(#1)}
  \@addtofilelist{#1}
  \IfFileExists{#1}{}{\typeout{No file #1.}}
}
\definecolor{tealblue}{HTML}{00AEB3}
\pgfplotsset{width=10cm,compat=1.9}
\newcommand{\nocontentsline}[3]{}
\let\origcontentsline\addcontentsline
\newcommand\stoptoc{\let\addcontentsline\nocontentsline}
\newcommand\resumetoc{\let\addcontentsline\origcontentsline}
\newcommand{\fu}{\small Dahlem Center for Complex Quantum Systems, Freie Universit\"{a}t Berlin, 14195 Berlin, Germany}
\newcommand{\pis}{NEST, Scuola Normale Superiore and Istituto Nanoscienze, Piazza dei Cavalieri 7, IT-56126 Pisa, Italy}
\definecolor{antonio}{rgb}{.2,.5,.1}
\renewcommand{\ketbra}[2]{\ket{#1}\!\bra{#2}}
\begin{document}
\title{Energy-independent tomography of Gaussian states}
\author{Lennart Bittel}
  \thanks{\{\href{mailto:l.bittel@fu-berlin.de}{l.bittel}, \href{mailto:jense@zedat.fu-berlin.de}{jense}, \href{mailto:a.mele@fu-berlin.de}{a.mele}\}@fu-berlin.de}
\affiliation{\fu}
\author{Francesco A. Mele}
\thanks{\{\href{mailto:francesco.mele@sns.it}{francesco.mele}\}@sns.it}
\affiliation{\pis}
\author{Jens Eisert}
  \thanks{\{\href{mailto:l.bittel@fu-berlin.de}{l.bittel}, \href{mailto:jense@zedat.fu-berlin.de}{jense}, \href{mailto:a.mele@fu-berlin.de}{a.mele}\}@fu-berlin.de}
\affiliation{\fu}
\author{Antonio A. Mele}
  \thanks{\{\href{mailto:l.bittel@fu-berlin.de}{l.bittel}, \href{mailto:jense@zedat.fu-berlin.de}{jense}, \href{mailto:a.mele@fu-berlin.de}{a.mele}\}@fu-berlin.de}
\affiliation{\fu}
\date{\today}

\begin{abstract}
The exploration of tomography of bosonic Gaussian states is presumably as old as quantum optics, but only recently, their precise and rigorous study have been moving into the focus of attention, motivated by technological developments. In this work, we present an efficient and experimentally feasible Gaussian state tomography algorithm with provable recovery trace-distance guarantees, whose sample complexity depends only on the number of modes, and—remarkably—is independent of the state's photon number or energy, up to doubly logarithmic factors. Our algorithm yields a doubly-exponential improvement over existing methods, and it employs operations that are readily accessible in experimental settings: the preparation of an auxiliary squeezed vacuum, passive Gaussian unitaries, and homodyne detection. At its core lies an adaptive strategy that systematically reduces the total squeezing of the system, enabling efficient tomography. Quite surprisingly, this proves that estimating a Gaussian state in trace distance is generally more efficient than directly estimating its covariance matrix. 
Our algorithm is particularly well-suited for applications in quantum metrology and sensing, where highly squeezed—and hence high-energy—states are commonly employed. As a further contribution, we establish improved sample complexity bounds for standard heterodyne tomography, equipping this widely used protocol with rigorous trace-norm guarantees.
\end{abstract}

\hypersetup{%
       pdftitle = {Energy-independent tomography of Gaussian states},
       pdfauthor = {Lennart Bittel, Franscesco A. Mele, Jens Eisert and Antonio A. Mele},
       pdfsubject = {Gaussian tomography},
       pdfkeywords = {Tomography, Gaussian states, energy independent, Trace distance, one norm, sampling efficient, squeezing, coherent, thermal, heterodyne, homodyne, generaldyne, quantum optics, continuous-variable quantum information}, 
      }

\maketitle
\stoptoc
\subsection{Introduction}
Quantum state tomography~\cite{cramer_efficient_2010, haah_optimal_2021,odonnell_efficient_2016}—the task of reconstructing an unknown quantum state from measurement data—is a fundamental primitive in quantum physics. It provides the most detailed form of system identification and underpins a wide range of applications, including benchmarking and device calibration~\cite{eisert_quantum_2020,kliesch_theory_2021}. In quantum optics specifically, the need to characterize states of light prepared in the laboratory has been central since the field’s inception~\cite{smithey_measurement_1993,leonhardt_measuring_1995,lvovsky_continuousvariable_2009}.

With the rapid progress of quantum technologies—from qubit-based to continuous variable photonic platforms \cite{Fusion,Xanadu,alexander2024manufacturable,Borealis,Zhong_2020,QuantumPhotoThermodynamics}—there has recently been a growing demand for tomographic protocols that are not only experimentally feasible but also come with rigorous, quantitative recovery guarantees~\cite{anshu2023survey}. For many applications of quantum state tomography~\cite{anshu2023survey}, the relevant metric for quantifying reconstruction accuracy is often the \emph{trace distance}~\cite{NC,MARK,KHATRI} between the reconstructed and true states, widely regarded as the most operationally meaningful way to distinguish quantum states~\cite{HELSTROM, Holevo1976}. While substantial progress in this context has been made in finite-dimensional settings—including efficient protocols for structured families of quantum states~\cite{Cramer_2010,Lanyon_2017,grewal2023efficient,leone2023learning,hangleiter2024bell,mele2024learningquantumstatescontinuous, bittel2024optimalestimatestracedistance,fanizza2023learning,mele2024efficient,arunachalam2023optimal,iyer2025mildlyinteractingfermionicunitariesefficiently,austin2025efficientlylearningfermionicunitaries,montanaro2017learning,rouzé2023learning,landau2024learningquantumstatesprepared,kim2024learningstatepreparationcircuits,bittel2024optimaltracedistanceboundsfreefermionic}—only recently has attention shifted toward \emph{continuous-variable} (CV) systems, where the infinite-dimensional Hilbert space introduces qualitatively new conceptual and technical challenges~\cite{mele2024learningquantumstatescontinuous, bittel2024optimalestimatestracedistance, fanizza2024efficienthamiltonianstructuretrace, holevo2024estimatestracenormdistancequantum, gandhari_precision_2023, becker_classical_2023, oh2024entanglementenabled, liu2025quantumlearningadvantagescalable,coroi2025exponentialadvantagecontinuousvariablequantum,fawzi2024optimalfidelityestimationbinary, wu2024efficient,möbus2023dissipationenabledbosonichamiltonianlearning,upreti2024efficientquantumstateverification,zhao2025complexityquantumtomographygenuine}.
Importantly, physical CV states are never truly arbitrary infinite-dimensional quantum states—they are constrained by physical resources, implying finite mean energy or total photon number~\cite{BUCCO}. 

A particularly important subclass of CV states is constituted
by \emph{Gaussian states}, which are Gibbs states of quadratic Hamiltonians in the bosonic quadrature operators and completely characterized by their first and second moments (i.e., their displacement vector and covariance matrix)~\cite{BUCCO,weedbrook12}. These states are ubiquitous in quantum optics, as they naturally emerge in a wide range of experimental settings, including squeezed light generation, thermal noise modeling, and interferometric setups, such as those used in gravitational wave detectors~\cite{Schnabel_2017,Schnabel2010,SGravi2,PhysRevLett.121.160502, PhysRevLett.86.5870,q_sensing_cv,nguyen2024digitalreconstructionsqueezedlight,BUCCO}. They also play a central role in quantum communication, sensing, and simulation~\cite{RennerPhD,weedbrook12,Lloyd1999,Gottesman2001,Menicucci2006,Mirrahimi_2014,Ofek_nature2016,error_corr_boson,Guillaud_2019,alexander2024manufacturable,holwer,Wolf2007,TGW,PLOB,Die-Hard-2-PRL,mele2023maximum, mele2024quantum,Die-Hard-2-PRA,mele2023optical,Borealis,Zhong_2020,SupremacyReview}. Tomography of Gaussian states—typically performed via homodyne or heterodyne detection to estimate the state’s first and second moments—is standard practice in CV experiments~\cite{BUCCO,teo2020highlyaccurategaussianprocess,Esposito_2014,Kumar_2022,kawasaki2024highrategenerationstatetomography}. However, existing experimental procedures often lack rigorous error bounds in trace distance and, crucially, their sample complexity tends to scale poorly with the state’s energy~\cite{mele2024learningquantumstatescontinuous}.

In this work, we first equip the standard heterodyne tomography protocol for Gaussian states—routinely implemented in optical laboratories around the world~\cite{Schnabel_2017,nguyen2024digitalreconstructionsqueezedlight, _eh_ek_2015,teo2020highlyaccurategaussianprocess,Esposito_2014,Kumar_2022,kawasaki2024highrategenerationstatetomography}—with rigorous recovery guarantees in trace distance. Specifically, we show that heterodyne measurements performed on \( \mathcal{O}(nE^2/\varepsilon^2) \) copies of an unknown \( n \)-mode Gaussian state with mean total energy \( E \) suffice to reconstruct the state to accuracy \( \varepsilon \) in trace distance. Importantly, this result establishes a simple and experimentally accessible criterion for successful tomography using the most widely adopted measurement scheme in quantum optics, thereby offering practical guidance to experimentalists. Furthermore, it improves upon the previous state-of-the-art sample complexity of \( \mathcal{O}(nE^4/\varepsilon^2) \)~\cite{bittel2024optimalestimatestracedistance,fanizza2024efficienthamiltonianstructuretrace}. Notably, this shows that estimating Gaussian states in trace distance scales similar to estimating its covariance matrix in trace distance.

Strikingly, we go beyond this standard paradigm. To the best of our knowledge, all existing tomography protocols exhibit sample complexity that scales with the state’s energy~\cite{mele2024learningquantumstatescontinuous,bittel2024optimalestimatestracedistance,fanizza2024efficienthamiltonianstructuretrace}, rendering them increasingly impractical in high-energy regimes—for instance, when dealing with highly squeezed states, which play a central role in cutting-edge experiments in fundamental physics~\cite{Schnabel_2017,Schnabel2010,SGravi2,PhysRevLett.121.160502, PhysRevLett.86.5870,q_sensing_cv,nguyen2024digitalreconstructionsqueezedlight,BUCCO}. Yet, no fundamental lower bound precludes the possibility of circumventing this energy dependence for structured families of states such as Gaussian states~\cite{mele2024learningquantumstatescontinuous}.

This raises a natural question—one whose positive resolution would lead to immediate and practical resource savings in routine Gaussian tomography experiments~\cite{Schnabel_2017,_eh_ek_2015,teo2020highlyaccurategaussianprocess,Esposito_2014,Kumar_2022,kawasaki2024highrategenerationstatetomography,nguyen2024digitalreconstructionsqueezedlight}:
\begin{quote}
\centering
\emph{Can we design a Gaussian state tomography algorithm whose sample complexity is independent of the state's energy?}
\end{quote}
In this work, we answer this question effectively in the affirmative. We present an adaptive and experimentally feasible Gaussian state tomography algorithm that relies exclusively on standard tools available in optical laboratories: auxiliary squeezed state preparation, passive Gaussian operations, and homodyne detection. If the mean total energy of the unknown state is denoted by \( E \), we show that the sample complexity of our protocol scales as \( \log\log(E) \)—a quantity that remains effectively constant (on the order of 10), even when \( E \) is as large as the energy of the observable universe. In contrast, previous state-of-the-art protocols exhibit sample complexities that scale polynomially with \( E \); for instance, standard heterodyne tomography based solely on covariance matrix estimation scales as \( \mathcal{O}(E^2) \), as discussed earlier. Our method thus achieves a double-exponential improvement in energy dependence, while retaining the scaling in the number of modes \( n \). This results in a substantial reduction in experimental cost for high-energy states.

The core idea behind our protocol is to first determine—using only a small number of measurements—whether significant squeezing is initially present in the unknown state. If so, we adaptively suppress this squeezing by applying passive Gaussian unitaries in combination with suitably prepared auxiliary squeezed inputs, effectively implementing generalized heterodyne (or generaldyne) measurements~\cite{BUCCO}. Once the effective squeezing is sufficiently reduced, we perform standard covariance estimation in the rotated basis, and then invert the transformation to reconstruct the original state.
Importantly, this protocol relies exclusively on \emph{offline} squeezing—that is, squeezing applied to auxiliary vacuum states prior to the measurement process—as is routinely available in quantum optics laboratories. At no point is \emph{online} or active squeezing required, making the scheme both experimentally accessible and practically robust.

Moreover, we show that even the mild \( \log\log(E) \) dependence can be removed entirely—yielding a fully energy-independent tomography scheme—if access to the transposed state is available. In this setting, a simple and experimentally feasible procedure based on double homodyne measurements on the state and its transpose suffices. Recently, considerable attention has been devoted to the fact that access to the transposed state can yield substantial advantages in a variety of quantum learning tasks~\cite{King_2024,montanaro2017learningstabilizerstatesbell,Gross_2021,Haug_2024,Haug_2025,Miyazaki_2022,Miyazaki_2024,grewal2024improvedstabilizerestimationbell,wu2024efficient}. Our work contributes to this growing line of research by demonstrating that a similar \emph{transpose-induced} quantum advantage also arises in the tomography of Gaussian states.

These results are grounded in several technical contributions of independent interest. First, we derive a new perturbation bound for the trace distance between Gaussian states in terms of their first two moments, featuring a favourable functional dependence on the state's energy. This advances the rapidly developing toolkit of trace distance bounds for bosonic systems~\cite{bittel2024optimalestimatestracedistance,mele2024learningquantumstatescontinuous,holevo2024estimatestracenormdistancequantum,fanizza2024efficienthamiltonianstructuretrace,mele2025achievableratesnonasymptoticbosonic,mele2025symplecticranknongaussianquantum,holevo2024estimatesburesdistancebosonic}. Second, we conduct a careful sample complexity analysis, leveraging tools from statistical learning theory~\cite{vershynin2020high} on the estimation of covariance matrices from samples of classical Gaussian distributions~\cite{BUCCO}.


\begin{figure*}[t!]
    \centering
    \includegraphics[width=0.99\linewidth]{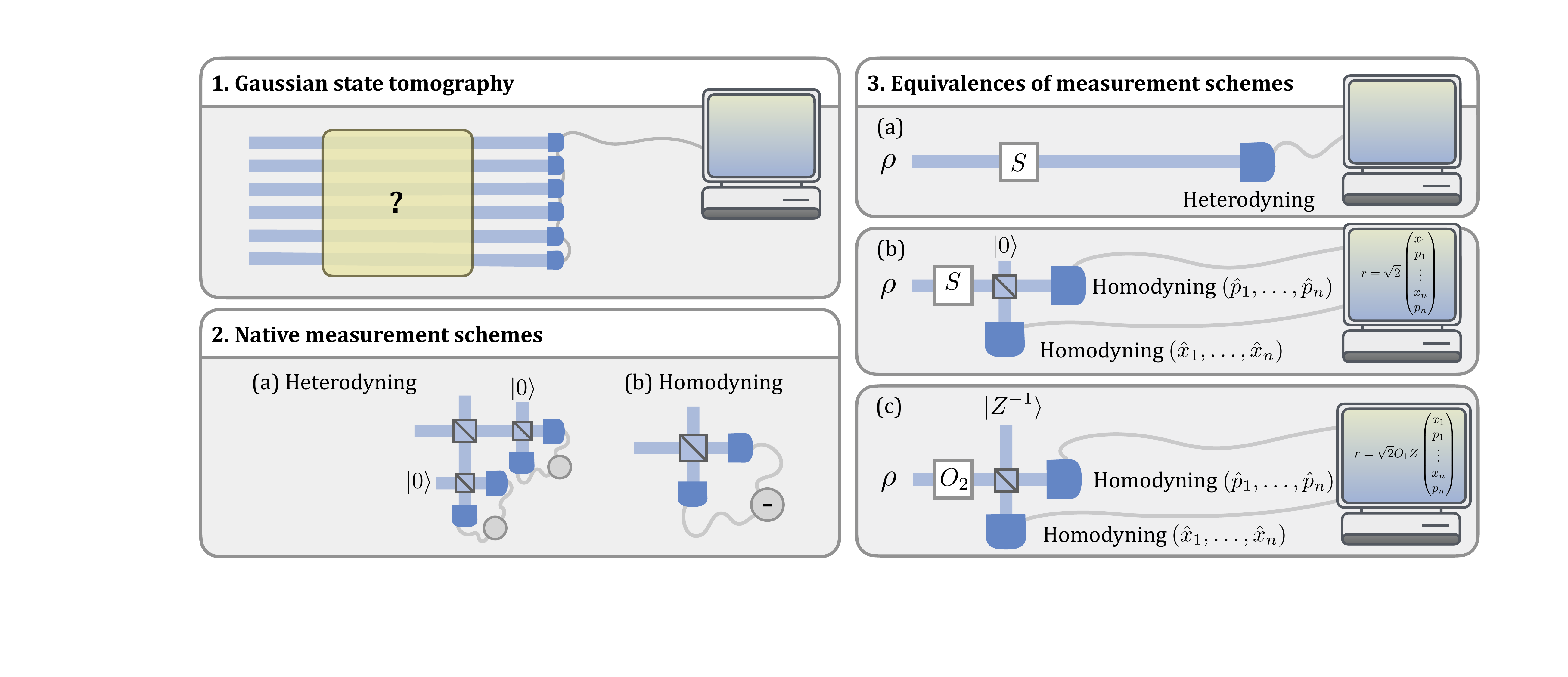}
    \caption{
    1. This work focuses on Gaussian quantum state tomography: reconstructing an unknown bosonic Gaussian state from measurement data, with rigorous trace-distance guarantees.
    2. Common measurements in this setting include experimentally accessible heterodyne (double homodyne) and homodyne detection.
    3. Our adaptive protocol avoids online squeezing by relying solely on passive linear optics and offline squeezing. It simulates the active Gaussian unitary associated with the symplectic matrix \( S = O_1 Z O_2 \) (Euler decomposition) by applying passive transformations to the input state, injecting an auxiliary squeezed vacuum \( \ket{Z^{-1}}\coloneqq U_{Z^{-1}} \ket{0} \) (with \( U_{Z^{-1}} \) the Gaussian unitary implementing the inverse squeezing)
    featuring offline squeezing, 
    followed by homodyne detection and classical post-processing.}
    \label{fig:BigPlot}
\end{figure*}



\subsection{Preliminaries}

The trace distance 
between quantum states 
\( \rho \) and 
\( \sigma \) 
is defined 
in terms of the trace
norm \( 
\|\cdot\|_1 
\)
as 
\cite{wilde_quantum_2013,WATROUS}
\begin{align}
    D_{\mathrm{tr}}(\rho, \sigma) \coloneqq \frac{1}{2} \|\rho - \sigma\|_1.
\end{align} 
This metric is operationally meaningful, as it determines the optimal success probability in distinguishing \( \rho \) from \( \sigma \) in a single-shot measurement, according to the Holevo–Helstrom theorem~\cite{BUCCO}.

We now introduce some relevant notation for CV systems; see the Appendix or Refs.~\cite{BUCCO,weedbrook12} for further details. A CV system with \( n \) bosonic modes is described on \( L^2(\mathbb{R}^n) \), with canonical quadratures 
\( \hat{\mathbf{R}} \coloneqq (\hat{x}_1, \hat{p}_1, \dots, \hat{x}_n, \hat{p}_n)^\intercal \) satisfying \( [\hat{R}_j, \hat{R}_k] = i\,\Omega_{j,k} \), where 
\(\Omega \coloneqq \bigoplus_{i=1}^n \begin{psmallmatrix} 0 & 1 \\ -1 & 0 \end{psmallmatrix} \) is the symplectic form. The total energy operator is \( \hat{E}_n \coloneqq \sum_{j=1}^n \left( {\hat{x}_j^2} + {\hat{p}_j^2} \right) /2\), and a state has bounded mean energy \( {\rm Tr}(\rho \hat{E}_n) \le E \).
Gaussian states are Gibbs states of quadratic Hamiltonians in the quadratures. They are fully specified by the first moment \( m(\rho) \coloneqq 
{\rm Tr}[\hat{\mathbf{R}} \rho] \) and the covariance matrix \( V(\rho) \), which encodes second moments and satisfies the uncertainty relation \( V + i \Omega \ge 0 \)~\cite{BUCCO}. Any such \( V \) admits a \emph{Williamson decomposition} \( V = S D S^\intercal \), with \( S \in \mathrm{Sp}(2n) \) symplectic and \( D = \mathrm{diag}(\nu_1, \nu_1, \dots, \nu_n, \nu_n) \) containing the symplectic eigenvalues.
Gaussian unitaries are unitary operations generated by quadratic Hamiltonians in the quadratures. \emph{Active} and \emph{passive} Gaussian unitaries are fully specified by their associated symplectic matrices \( S \in \mathrm{Sp}(2n) \). A Gaussian unitary is called \emph{passive} if \( S \in \mathrm{O}(2n) \cap \mathrm{Sp}(2n) \); otherwise, it is called \emph{active}. Passive operations—like beam splitters and phase shifters—can be implemented easily with linear optics, whereas active ones, such as squeezing, are more experimentally demanding. Any \( S \in \mathrm{Sp}(2n) \) admits an Euler decomposition \( S = O_1 Z O_2 \), where \( O_{1},O_{2} \in \mathrm{Sp}(2n) \cap \mathrm{O}(2n) \), and \( Z = \mathrm{diag}(z_1, 1/z_1, \dots, z_n, 1/z_n) \) with \( z_j \geq 1 \).

Standard measurement strategies in CV systems routinely used in the lab include \emph{homodyne}, \emph{heterodyne}, and more generally, \emph{generaldyne} detection. Homodyne 
measurements on Gaussian states correspond to sampling from a classical Gaussian distribution determined by specific quadrature operators and the corresponding marginal of the state covariance matrix. Heterodyne detection, on the other hand, simultaneously probes both quadratures with added noise—effectively sampling from a Gaussian distribution with the state covariance matrix shifted by the identity. In practice, heterodyne detection is implemented by interfering the input state with a vacuum auxiliary state at a balanced beam splitter, followed by homodyne measurements on both output ports. More generally, \emph{generaldyne} measurements form a broad class of Gaussian POVMs—including homodyne and heterodyne as special cases—which can be realized by combining the input state with an auxiliary Gaussian state via a passive Gaussian unitary, followed by homodyne detection.


\subsection{Tomography using only heterodyne measurements}

\begin{algorithm}[H]
\caption[Heterodyne tomography of Gaussian states]{Heterodyne tomography of Gaussian states}
\label{algo_het}
\begin{algorithmic}[1]
\State \textbf{Input}: $N_t$ copies of the Gaussian state $\rho$, failure probability $\delta$
\For{$j \in [N_t]$}
    \State $\hat r_j \gets \mathrm{Heterodyne}(\rho)$
\EndFor
\State $\hat{m} \gets \frac{1}{N_t} \sum_{j=1}^{N_t} \hat r_j$
\State $\hat{V} \gets \frac{1}{N_t(1-\zeta)} \sum_{j=1}^{N_t} (\hat r_j - \hat m)(\hat r_j - \hat m)^\top - \mathbb{1}$
\State \textbf{Return}: $\hat{m}$, $\hat{V}$
\end{algorithmic}
where $\zeta\coloneqq\frac{2\chi}{\sqrt{N_t}} + \frac{2\chi^2}{N_t}$ 
and $\chi \coloneqq \sqrt{2n} + \sqrt{2 \log(2/\delta)}$.
\end{algorithm}

We begin by analyzing the performance guarantees—in trace distance—of the heterodyne tomography protocol routinely implemented in optical laboratories~\cite{BUCCO,weedbrook12}.
Our result provides this standard method with rigorous recovery guarantees, linking experimental practice to trace distance bounds often required in applications. See Theorem~\ref{th:heterotom} in the Appendix for explicit constants and the full proof.
\begin{thm}[{\normalfont (Non-adaptive heterodyne tomography, informal; see Thm.~\ref{th:heterotom})}]
Let \( \varepsilon, \delta \in (0,1) \), and let \( \rho \coloneqq \rho(V, m) \) be an \( n \)-mode Gaussian state with covariance matrix \( V \) and first moment \( m \).  
Then, by performing heterodyne measurements and reconstructing the empirical first moment \( \hat{m} \) and covariance matrix \( \hat{V} \) as in Algorithm~\ref{algo_het}, we obtain an estimated state \( \hat{\rho} \coloneqq \rho(\hat{V}, \hat{m}) \) satisfying
\begin{align}
   D_{\mathrm{tr}}(\rho, \hat{\rho}) \leq 4.3 \left( 2n + \Tr(V^{-1}) \right) \frac{\chi}{\sqrt{N}},
\end{align}
with probability at least \( 1 - \delta \), where \( \chi = \sqrt{2n} + \sqrt{2 \log(2/\delta)} \).  
In particular, to ensure \( D_{\mathrm{tr}}(\rho, \hat{\rho}) \le \varepsilon \) with probability at least \( 1 - \delta \), it suffices to use
\begin{align}
   N = O\left( \frac{n E^2}{\varepsilon^2} \right)
\end{align}
measurement shots, where \( E \) is the energy of the state.
\end{thm}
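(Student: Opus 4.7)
The plan is to split the argument into two largely independent ingredients and combine them at the end. The first is a purely statistical analysis: heterodyne detection on $\rho(V,m)$ produces i.i.d.\ samples $\hat r_j$ from the classical Gaussian distribution $N(m,V+\mathbb{1})$, the identity shift being the vacuum noise injected at the balanced beam splitter, so Algorithm~\ref{algo_het} is precisely empirical-mean and de-biased empirical-covariance estimation for this sample. The second is an analytic perturbation bound for the trace distance between Gaussian states in terms of their first two moments, which I would invoke as a black box since it is one of the paper's headline technical contributions.

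For the statistical step I would apply a Vershynin-type concentration inequality for Gaussian sample covariances in $\mathbb{R}^{2n}$: with probability at least $1-\delta$,
\[
(1-\zeta)(V+\mathbb{1}) \preceq \frac{1}{N}\sum_{j=1}^{N}(\hat r_j-\hat m)(\hat r_j-\hat m)^\intercal \preceq (1+\zeta)(V+\mathbb{1}),
\]
with $\zeta = 2\chi/\sqrt N+2\chi^2/N$, together with the matching mean concentration $\|(V+\mathbb{1})^{-1/2}(\hat m-m)\|\lesssim \chi/\sqrt N$. The prefactor $1/(1-\zeta)$ in the algorithm is exactly what turns the two-sided bound into the one-sided inequality $\hat V\succeq V$, which is the regime in which the perturbation estimate is cleanest. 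From the sandwich bound one extracts quantitative control on $\hat V-V$ in the $V$-rescaled norms and, by converting operator norm to trace norm, also $\|(V+\mathbb{1})^{-1/2}(\hat V-V)(V+\mathbb{1})^{-1/2}\|_1 \le 2n\cdot O(\chi/\sqrt N)$.

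Plugging these estimates into the Gaussian perturbation bound, whose natural form is controlled by quantities of the type $\Tr\bigl(V^{-1}(\hat V-V)\bigr)$ and $(\hat m-m)^\intercal V^{-1}(\hat m-m)$, the two contributions assemble exactly into the $(2n+\Tr(V^{-1}))$ prefactor of the theorem: the $2n$ comes from the dimensional cost of going from operator-norm to trace-norm control on the relative covariance error, whereas the $\Tr(V^{-1})$ encodes the enhanced sensitivity of the trace distance along strongly squeezed directions, where $V$ has small eigenvalues. This yields the first displayed inequality, with the numerical constant $4.3$ obtained by carefully tracking the absolute constants in the Vershynin concentration estimate and in the perturbation lemma.

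For the sample complexity statement I would bound $\Tr(V^{-1})$ by the energy. Using the Williamson and Euler decompositions $V=SDS^\intercal$ with $S=O_1 Z O_2$, together with the identity $E = \tfrac14\Tr(V) + \tfrac12\|m\|^2$ (up to a ground-state constant), the squeezing factors $z_j\ge 1$ contribute symmetrically to both $\Tr(V)$ and $\Tr(V^{-1})$, so that $\Tr(V^{-1})=O(E+n)$. Substituting into $(2n+\Tr(V^{-1}))\chi/\sqrt N \le \varepsilon$ and using $\chi=O(\sqrt{n+\log(1/\delta)})$ then solves to $N=O(nE^2/\varepsilon^2)$, as claimed. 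I expect the hard part of the proof to be the perturbation lemma itself: securing the covariance prefactor $\Tr(V^{-1})$, rather than the weaker $\|V^{-1}\|\cdot 2n$ one would obtain from generic operator-norm bounds, requires a careful non-commutative analysis of Gaussian states—precisely the new CV perturbation estimate the paper advertises—and this is what ultimately drives the quadratic, rather than the previously known quartic, dependence on the energy.
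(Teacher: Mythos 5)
Your proposal follows essentially the same route as the paper: heterodyne outcomes as classical Gaussian samples, relative-error concentration for the empirical mean and covariance with the $1/(1-\zeta)$ debiasing enforcing $\hat V\succeq V$, insertion into the moment-based trace-distance perturbation bound so that the error assembles into $\Tr\bigl(V^{-1}(\hat V - V)\bigr)\lesssim \zeta\,(2n+\Tr(V^{-1}))$, and finally $\Tr(V^{-1})\le\Tr(V)\le 4E$ (via $V^{-1}\le\Omega V\Omega^\intercal$, equivalently your symmetric-squeezing argument) to get $N=O(nE^2/\varepsilon^2)$. The only slip is the normalization of the heterodyne distribution, which in the paper's convention is $\mathcal{N}(m,(V+\mathbb{1})/2)$ rather than $\mathcal{N}(m,V+\mathbb{1})$; this only affects the constants and is absorbed by the algorithm's rescaling.
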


This improves upon the previously best-known bounds of \( O(n E^4 / \varepsilon^2) \)~\cite{bittel2024optimalestimatestracedistance,fanizza2024efficienthamiltonianstructuretrace}. Specifically, our sample complexity scales as
\begin{align}
N = O\left( \frac{n^3+n\, \Tr^2(V^{-1})}{\varepsilon^2} \right),
\end{align}
where the trace term satisfies \( \Tr^2(V^{-1}) \le n^2 \lambda_{\min}^{-2}(V) \), with \( \lambda_{\min}(V) \) denoting the smallest eigenvalue of \( V \).

This implies that the energy dependence in the bound above arises solely from squeezing. For non-squeezed states—such as thermal or coherent states, or any state preparable via passive Gaussian unitaries—the scaling improves to \( O(n^3 / \varepsilon^2) \), as \( \lambda_{\min}(V) \) remains lower bounded by a constant.

The key observation underlying this theorem is that heterodyne measurements on a Gaussian state \( \rho(V, m) \) yield samples from a classical Gaussian distribution~\cite{BUCCO} with mean \( m \) and covariance \( (V + \mathbb{1})/2 \). This allows us to apply standard concentration bounds to estimate the first and second moments, which in turn translate into a trace distance bound between \( \rho \) and its estimate \( \hat{\rho} \) (the Gaussian state associated with the estimated moments).

To this end, we derive a scale-invariant perturbation bound---of independent interest---that quantifies how closeness in moments implies closeness in trace distance for Gaussian states.
\begin{thm}[{\normalfont (Perturbation bound for Gaussian states; see Thm.~\ref{th:pertboundAPP})}]
\label{thm_upp_bound_gFIRST}
Let \( \rho(V, m) \) and \( \rho(W, t) \) be \( n \)-mode Gaussian states with first moments \( m, t \in \mathbb{R}^{2n} \) and covariance matrices \( V, W \in \mathbb{R}^{2n \times 2n} \).
Then the trace distance between the two states satisfies
\begin{samepage}
\begin{align}
    D_{\mathrm{tr}}(\rho(V, m), \rho(W, t)) 
    &\le \frac{1}{2} \left\| V^{-1/2} (m - t) \right\|_2  \\
    &+ \frac{1\!+ \!\sqrt{3}}{8} \Tr\!\left[ (V^{-1}\! +\! W^{-1}) |V\! -\! W| \right], \nonumber
\end{align}
where $|A|=\sqrt{\!A^2}$ refers to the matrix absolute value for Hermitian matrices.
\end{samepage}
\end{thm}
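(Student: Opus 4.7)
The plan is to decompose the trace distance via the triangle inequality into a contribution from the shift in the first moment (with covariance held fixed) and one from the shift in the covariance (with first moment held fixed), and to bound the two pieces independently. Concretely, I would write
$$D_{\mathrm{tr}}(\rho(V,m),\rho(W,t)) \le D_{\mathrm{tr}}(\rho(V,m),\rho(V,t)) + D_{\mathrm{tr}}(\rho(V,t),\rho(W,t)),$$
so that the first term carries the $V^{-1/2}(m-t)$ contribution (note that this particular triangulation, rather than going through $\rho(W,m)$, is what forces $V$ and not $W$ to appear in the displacement bound).

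For the first term, the two Gaussian states differ only by a unitary displacement along $t-m$, so they share identical symplectic spectra. I would bound this piece either by computing the Gaussian fidelity for equal-covariance states in closed form, obtaining an expression of the form $F^{2} = \exp(-c\,(m-t)^\intercal V^{-1}(m-t))$, and then applying the Fuchs--van de Graaf inequality $D_{\mathrm{tr}}\le\sqrt{1-F^{2}}$ together with the elementary estimate $1-e^{-x}\le x$; or, equivalently, by observing that the quantum Fisher information matrix of the displaced family $\{\rho(V,m+\theta)\}_{\theta}$ equals $V^{-1}$ and integrating the infinitesimal Bures-length bound on trace distance along the straight line from $m$ to $t$. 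Either route produces exactly the prefactor $\tfrac{1}{2}$ appearing in the theorem.

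For the second term, after translating so that $t=0$, the target is to show
$$D_{\mathrm{tr}}(\rho(V,0),\rho(W,0)) \le \tfrac{1+\sqrt{3}}{8}\,\Tr\!\left[(V^{-1}+W^{-1})|V-W|\right].$$
My approach would be an interpolation argument: introduce a smooth path $V_{s}$ from $V$ to $W$ (for instance $V_{s}=(1-s)V+sW$) and write $\rho(W,0)-\rho(V,0)=\int_{0}^{1}\partial_{s}\rho(V_{s},0)\,ds$, so that the trace norm is bounded by $\int_{0}^{1}\|\partial_{s}\rho(V_{s},0)\|_{1}\,ds$. Using the Williamson decomposition, one can express $\rho(V_{s},0)$ as a symplectic conjugation of a product of single-mode thermal states; differentiating and exploiting unitary invariance of the trace norm, the task reduces to controlling the derivative of the quadratic Hamiltonian $\log\rho(V_{s},0)$ in a state-dependent trace norm, which naturally produces the combination $V_{s}^{-1}|\dot V_{s}|$. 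A symmetrization between the two endpoints $V$ and $W$ of the path is what replaces a single inverse by the symmetric sum $V^{-1}+W^{-1}$.

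The main obstacle I expect is precisely this last step: extracting the sharp prefactor $(1+\sqrt{3})/8$ rather than a loose universal constant. The algebraic form of the number strongly suggests that the argument does not use naive one-sided interpolation but rather optimizes over either the shape of the interpolation path or an auxiliary threshold separating a low-squeezing regime (where a Taylor expansion of the trace norm is sharp) from a high-squeezing regime (where a cruder bound suffices), with the extremum being the positive root of a quadratic of the form $x^{2}-2x-2=0$. Verifying that the derivative of $\rho(V_{s},0)$ can indeed be controlled pointwise by $|V-W|$ weighted by $V_{s}^{-1}$, uniformly in $s$, and then discharging this optimization cleanly, is the analytically delicate part of the proof.
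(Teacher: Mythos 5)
Your overall architecture coincides with the paper's: a triangle inequality separating the displacement and covariance contributions, a reduction to diagonal covariance via the Williamson decomposition, an integral of the trace-norm derivative along the linear path $V_s=(1-s)V+sW$, and the observation that the symmetric weight $V^{-1}+W^{-1}$ comes from the operator convexity of the inverse, $(V+s(W-V))^{-1}\le(1-s)V^{-1}+sW^{-1}$, integrated over $s$. The genuine gap is the central analytic step, which you only gesture at: the pointwise bound $\|\partial_s\rho(V_s,0)\|_1\le\tfrac{1+\sqrt{3}}{2}\,\Tr[V_s^{-1}|\dot V_s|]$. The paper does \emph{not} obtain this by controlling $\partial_s\log\rho(V_s,0)$ in a state-dependent norm; it starts from a closed-form expression for the derivative of the state itself as a double commutator, $\partial_s\rho=-\tfrac14\sum_{k,\ell}(\Omega\dot V_s\Omega^\top)_{k\ell}[R_k,[R_\ell,\rho]]$, evaluates it explicitly in the Fock basis for a product of thermal states, splits it into an $a^\dagger a^\dagger$ block $A$ and an $aa^\dagger$ block $B$, and bounds each trace norm by hand using rank-one decompositions and Jensen's inequality. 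The constant arises from $\tfrac12\sqrt{2\|A_\tau\|_1^2+\|B_\tau\|_1^2}+\tfrac12\|B_\tau\|_1$ combined with $2\max(\|A_\tau\|_1,\|B_\tau\|_1)\le\|\tilde X_\tau\|_1$, i.e.\ from $\sqrt{3}/4+1/4$ --- not from optimizing a threshold or a quadratic $x^2-2x-2=0$ as you conjecture. Without this (or an equivalent) computation, both the weight $V_s^{-1}$ and the prefactor are unsupported; a route through the logarithm would additionally need an integral representation for the derivative of the exponential and does not obviously reproduce the $(D+\mathbb{1})^{-1/2}(\cdot)(D+\mathbb{1})^{-1/2}$ sandwich that makes the bound energy-friendly.

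On the displacement term, the paper again bounds the derivative directly, $\|\sum_k t_k[R_k,\rho(D,0)]\|_1\le2\|(D+\mathbb{1})^{-1/2}t\|_2$, by the same Fock-basis technique, and integrates. Your Fuchs--van de Graaf route is legitimate but does not yield the prefactor you claim: in the paper's conventions (vacuum has $V=\mathbb{1}$, $m(\ket{\alpha})=\sqrt{2}(\Re\alpha,\Im\alpha)$) two coherent states give $F^2=e^{-\frac12\delta^\top V^{-1}\delta}$, so $\sqrt{1-F^2}\le\tfrac{1}{\sqrt{2}}\|V^{-1/2}\delta\|_2$, and since for pure states $D_{\mathrm{tr}}=\sqrt{1-F^2}$ exactly, the ratio $D_{\mathrm{tr}}/\|V^{-1/2}\delta\|_2$ tends to $\tfrac{1}{\sqrt2}$ as $\delta\to0$. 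Hence no correct argument can produce a constant below $\tfrac{1}{\sqrt2}$ here, and your assertion that either route ``produces exactly the prefactor $\tfrac12$'' fails. (This also flags an issue worth raising with the statement itself: the paper's $\tfrac12$ traces back to a spurious factor of $\alpha$ inside $\int_0^1\|V^{-1/2}\alpha(m-t)\|_2\,d\alpha$; integrating its own derivative lemma correctly gives $D_{\mathrm{tr}}\le\|V^{-1/2}(m-t)\|_2$ for that term.)
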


The proof of this theorem is provided in the Appendix (Theorem~\ref{th:pertboundAPP}).



\subsection{Adaptive and effectively energy-independent tomography}
The core result of this work is a tomography protocol for Gaussian states whose sample complexity is effectively independent of the state's energy. The scheme is formally described in Algorithm~\ref{algo_ad}. Its performance guarantees are summarized below, and the proof is given in Theorem~\ref{th:indeptom} in the Appendix.
\begin{thm}[{\normalfont (Adaptive Gaussian state tomography; see Thm.~\ref{th:indeptom})}]
\label{th:adapTOMmain}
Let \( \varepsilon, \delta \in (0,1) \), and let \( \rho \) be an unknown \( n \)-mode Gaussian state with covariance matrix \( V \).
Then, by performing the adaptive sampling strategy described in Algorithm~\ref{algo_ad}, one obtains an estimate \( \hat{\rho} \) such that \( D_{\mathrm{tr}}(\rho, \hat{\rho}) \le \varepsilon \) with probability at least \( 1 - \delta \), using at most
\begin{align}
   N = \left( 80k + \frac{463 \, n^2}{\varepsilon^2} \right) \chi^2  
     = O\left( n \log \log \left( E \right)+\frac{n^3}{\varepsilon^2} \right)
\end{align}
measurement shots, where \( k \coloneqq \log \log \left( \|V^{-1}\|_{\infty} \right) \) is the number of times the adaptive unsqueezing is performed and \( \chi \coloneqq \sqrt{2n} + \sqrt{2 \log((k+1)/\delta)} \), and $E$ is the energy.
\end{thm}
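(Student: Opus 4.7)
The plan is to reduce the theorem to two earlier ingredients: the non-adaptive heterodyne bound of Theorem~\ref{th:heterotom}, whose cost is governed by $2n + \Tr(V^{-1})$, and the perturbation inequality of Theorem~\ref{thm_upp_bound_gFIRST}, which converts small multiplicative errors in the covariance into small trace distance. The adaptive pre-processing serves only to replace $\rho$ by an effective state with bounded $\|V^{-1}\|_\infty$, after which one final heterodyne round at cost $O(n^3/\varepsilon^2)$ will suffice.

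First, I would make precise the adaptive loop pictured in Fig.~\ref{fig:BigPlot}. At the $i$th iteration the protocol holds an accumulated symplectic $\hat T_i \in \mathrm{Sp}(2n)$; the next batch of measurements is not physically performed on $\hat T_i^{-1}\rho\hat T_i^{-\intercal}$, but is \emph{simulated} by a generaldyne construction. Using the Euler decomposition $\hat T_i = O_1 Z O_2$, one applies the passive unitaries $O_1,O_2$ to $\rho$, injects the auxiliary offline-squeezed vacuum $U_{Z^{-1}}\ket{0}$, and performs homodyne detection on the joint system. The classical outcomes are statistically equivalent to heterodyne samples of the Gaussian state $\rho_i$ with covariance $V_i = \hat T_i^{-1} V \hat T_i^{-\intercal}$; crucially, no online squeezing is ever invoked. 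Using $O(n\chi^2)$ such samples I would build a coarse covariance estimate $\hat V_i$, read off its symplectic factor $\hat S_i$ from a Williamson decomposition, and update $\hat T_{i+1} \coloneqq \hat T_i \hat S_i$.

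The main obstacle---and the key to the doubly-logarithmic scaling---is to quantify how much the residual squeezing indicator $\|V_i^{-1}\|_\infty$ shrinks per round. The target bound is $\|V_{i+1}^{-1}\|_\infty \lesssim \sqrt{\|V_i^{-1}\|_\infty}$, i.e.\ each iteration roughly halves $\log\|V_i^{-1}\|_\infty$. Morally this is one Newton step for a matrix square root: a constant-size sample budget $O(n\chi^2)$ suffices---through the sub-Gaussian covariance concentration already exploited in Theorem~\ref{th:heterotom}---to extract a symplectic $\hat S_i$ that captures the dominant squeezing axes well enough to square-root the worst eigenvalue of the residual inverse covariance. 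Iterating this contraction yields $\|V_k^{-1}\|_\infty = O(1)$ after $k = O(\log\log\|V^{-1}\|_\infty) = O(\log\log E)$ rounds. I expect that carefully chaining the matrix concentration with the (generically non-unique and non-Lipschitz) Williamson decomposition---and verifying that the extracted $\hat S_i$ is always well-conditioned enough to justify a Newton-type step---will be the most delicate technical part of the proof.

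Finally, I would invoke Theorem~\ref{th:heterotom} one last time on the effectively unsqueezed $\rho_k$, now at the cheap sample cost $O(n^3\chi^2/\varepsilon^2)$ since $\|V_k^{-1}\|_\infty = O(1)$, and pull back the resulting Gaussian estimate $\hat\rho_k$ through the stored symplectic $\hat T_k$ to produce the output $\hat\rho$. Invariance of the trace distance under Gaussian unitaries transports the $\varepsilon$-guarantee from $(\rho_k,\hat\rho_k)$ to $(\rho,\hat\rho)$, while Theorem~\ref{thm_upp_bound_gFIRST} ensures that any deterministic imprecision in $\hat T_k$ enters only through the already-controlled moment and covariance mismatches. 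A union bound over the $k+1$ rounds, each allowed to fail with probability $\delta/(k+1)$, accounts for the $\log((k+1)/\delta)$ that appears inside $\chi$, and summing the $k$ cheap adaptive rounds with the one expensive final round yields the claimed $(80k + 463\,n^2/\varepsilon^2)\chi^2$ total sample count.
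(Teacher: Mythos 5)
Your proposal reproduces the correct architecture of the paper's proof---$k$ cheap adaptive unsqueezing rounds, a recurrence $a_{i+1}\lesssim\sqrt{a_i}$ giving $k=O(\log\log\|V^{-1}\|_\infty)$, a final heterodyne round at cost $O(n^3/\varepsilon^2)$ once $\|V_k^{-1}\|_\infty=O(1)$ (so that $\Tr(V_k^{-1})=O(n)$), a union bound over $k+1$ stages, and an exact pullback by unitary invariance of the trace norm. But the single step that carries the whole theorem---the per-round contraction $\|V_{i+1}^{-1}\|_\infty\lesssim\sqrt{\|V_i^{-1}\|_\infty}$ from only $O(n)$ samples---is asserted via a ``Newton step for a matrix square root'' analogy and explicitly left open. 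You even flag the route you would take (stability of the Williamson decomposition under matrix concentration) as the delicate part. That is a genuine gap, and the route you sketch is the wrong one: the Williamson symplectic factor is non-unique and not Lipschitz in the covariance matrix, so a perturbation analysis of $\hat S_i$ against the ``true'' symplectic factor of $V_i$ would be both painful and, as it turns out, unnecessary.

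The paper's Lemma~\ref{le:adaptivestepALG} avoids any stability analysis of the decomposition. From the relative-error guarantee $V\le\hat V\le\bigl[1+\eta\bigr]V$ with $\eta=\tfrac{2\zeta}{1-\zeta}(\|V^{-1}\|_\infty+1)$ (note the factor $\|V^{-1}\|_\infty$, which enters because heterodyne estimates $(V+\id)/2$ rather than $V$), one writes
\begin{equation*}
V^{-1}=V^{-1}\#V^{-1}\le(\Omega V\Omega^\intercal)\#V^{-1}\le(\Omega\hat V\Omega^\intercal)\#\bigl([1+\eta]\hat V^{-1}\bigr)=\sqrt{1+\eta}\;(\Omega\hat V\Omega^\intercal)\#\hat V^{-1}=\sqrt{1+\eta}\;\hat S^{-\intercal}\hat S^{-1},
\end{equation*}
using $V^{-1}\le\Omega V\Omega^\intercal$, monotonicity and the scaling property of the matrix geometric mean, and the exact identity $(\Omega\hat V\Omega^\intercal)\#\hat V^{-1}=\hat S^{-\intercal}\hat S^{-1}$ valid for \emph{whichever} symplectic factor $\hat S$ the Williamson decomposition of $\hat V$ returns. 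Conjugating by $\hat S$ gives $\|V_{\mathrm{new}}^{-1}\|_\infty\le\sqrt{1+\eta}$ directly; the square root of $\|V^{-1}\|_\infty$ appears because scaling one argument of $A\#B$ by $c$ scales the mean by $\sqrt{c}$, not because of any Newton iteration. With $\zeta\le1/4$ (achievable with $N_h=80\chi^2=O(n)$ samples per round) this yields $a_{i+1}\le\sqrt{5/3+2a_i/3}$, and the elementary recurrence of Lemma~\ref{le:recurrence} finishes the counting. Without this operator-inequality argument, or an equally rigorous substitute for your Newton-step heuristic, the proof is incomplete at its central point.
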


The idea behind this protocol is as follows. As discussed in the previous section, heterodyne tomography for non-squeezed states exhibits a sample complexity that is independent of the state's energy. This motivates an adaptive strategy: using only \( O(n) \) samples, the protocol first estimates the squeezing of the state and then applies to it an approximate “unsqueezing” Gaussian unitary. Each such operation exponentially increases the smallest eigenvalue of the covariance matrix. After \( k = O(\log\log(E)) \) such rounds—each requiring only a modest number of samples—the state becomes effectively non-squeezed. Standard heterodyne tomography can then be applied with sample complexity \( O(n^3 / \varepsilon^2) \).

Because the energy enters only through a double logarithm, even a state prepared using the energy of the entire universe would require at most a constant number of adaptive iterations (\( k \le 10 \)). Thus, the unsqueezing stage contributes negligibly to the overall sample cost, and the full tomography protocol achieves near energy-independence.

\begin{algorithm}[H]
\caption{Tomography of Gaussian states using adaptive unsqueezing}
\label{algo_ad}
\begin{algorithmic}[1]
\State \textbf{Input}: $kN_h+N_t$ copies of the Gaussian state $\rho$, failure probability $\delta$
\State $\hat S \gets \mathbb{1}$
\For{$i \in [k]$}
    \State $\hat m,\hat V \gets \text{Heterodyne tomography}(N_h, U_{\hat S}\rho U^{\dagger}_{\hat{S}} ,\delta/(k+1))$
    \State $(\hat{D}_i, \hat{S}_i) \gets \mathrm{Williamson}(\hat{V})$
    \State $\hat S \gets \hat{S}_i^{-1} \hat{S}$
\EndFor
\State $\hat m,\hat V \gets \text{Heterodyne tomography}(N_t, U_{\hat{S}}\rho U^{\dagger} _{\hat{S}},\delta/(k+1))$
\State \textbf{Return:} ${\hat S}^{-1}\hat{m}$, $\hat S^{-1} \hat{V} \hat S^{-\top}$
\end{algorithmic}
This algorithm estimates the Gaussian state using $kN_h + N_t$ copies of $\rho$, with success probability at least $1 - \delta$. Thanks to Theorem~\ref{th:indeptom}, setting
\bb
 k &\coloneqq \log \log \left( \|V^{-1}\|_{\infty} \right),\\
 N_h &\coloneqq 80\chi^2, \\
 N_t &\coloneqq \frac{463 \, n^2}{\varepsilon^2} \chi^2\,,
\ee
 where $\chi \coloneqq \sqrt{2n} + \sqrt{2 \log((k+1)/\delta)}$, guarantees a trace distance error of at most $\varepsilon$ with probability $\ge 1-\delta$.
\end{algorithm}

\subsection{Avoidance of active squeezing}

Algorithm~\ref{algo_ad} requires applying an active Gaussian unitary to the input state, followed by heterodyne detection. However, implementing active Gaussian operations is often experimentally demanding. Fortunately, this step can be entirely avoided: the combined effect of an active Gaussian transformation followed by heterodyne measurement can be exactly reproduced using only passive Gaussian operations, a fixed auxiliary squeezed input state, and homodyne measurements (plus classical post-processing)—all of which are routinely available with current photonic technologies~\cite{Schnabel2017}. In continuous-variable terminology, this alternative scheme corresponds to performing a specific \emph{generaldyne measurement} on the input state~\cite{BUCCO}.

This construction builds on the well-known equivalence between heterodyne detection and beam-splitter interference followed by homodyne measurements. As reviewed in the Preliminaries, heterodyne detection can be implemented by interfering the input state with vacuum on a balanced beam splitter, followed by homodyne detection of complementary quadratures—typically position on one output mode and momentum on the other. In the variant considered here, the vacuum input is replaced with a suitably chosen squeezed state. 
Key aspects of this measurement equivalence are illustrated in Fig.~1.3, with additional technical details provided in the supplementary material (see section~\ref{sec:noonline}).

\subsection{Full energy-independence via access to the transpose}
While the adaptive tomography protocol presented in the previous section achieves sample complexity that is effectively energy-independent—scaling only double-logarithmically with the energy—it is natural to ask whether this residual dependence can be eliminated entirely.

This is indeed possible if one has access to the transposed state \( \rho^T\). In this setting, we design a tomography protocol whose sample complexity is fully independent of the state's energy, even in the presence of arbitrarily strong squeezing. The result is formalized in the following theorem; see Theorem~\ref{th:indeptomTRAPP} in the Appendix for the proof and additional details.
\begin{thm}[{\normalfont (Energy-independent Gaussian state tomography with access to the transposed state; see Thm.~\ref{th:indeptomTRAPP})}]
\label{th:indeptomTR}
Let \( \varepsilon, \delta \in (0,1) \), and let \( \rho \) be an \( n \)-mode Gaussian state with covariance matrix \( V \).
Then, by performing $N$ rounds of a heterodyne like measurement using $\rho$ and $\rho^T$ as inputs. The tomography algorithm described in Thm.~\ref{th:indeptomTRAPP} gives an estimated state \( \hat{\rho}=\rho(\hat V,\hat m) \) that satisfies with probability at least \( 1 - \delta \)
\begin{align}
   D_{\mathrm{tr}}(\rho, \hat{\rho}) \leq 8.55 n \frac{\chi}{\sqrt{N}},
\end{align}
where \( \chi = \sqrt{2n} + \sqrt{2 \log(2/\delta)} \) .
In particular, to achieve \( D_{\mathrm{tr}}(\rho, \hat{\rho}) \le \varepsilon \) with probability at least \( 1 - \delta \) it suffices to use
\begin{align}
   N = O\left( \frac{n^3}{\varepsilon^2} \right)
\end{align}
measurement shots.
\end{thm}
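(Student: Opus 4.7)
Proof proposal. The central observation I would exploit is that the transpose of a Gaussian state $\rho(V,m)$ is again Gaussian with moments $(Tm,TVT)$, where $T=\bigoplus_{j=1}^n\diag(1,-1)$ flips the sign of every momentum quadrature. The plan is to use this structure to design a fully passive joint measurement on $\rho\otimes\rho^T$ that produces samples from a classical Gaussian whose covariance is \emph{exactly} $V$, rather than $(V+\mathds{1})/2$ as in ordinary heterodyne. Removing this additive $\mathds{1}$ is precisely what will kill the $\|V\|_\infty$ dependence.

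Concretely, for each mode $j$ I would combine the $j$-th mode of $\rho$ with the $j$-th mode of $\rho^T$ on a 50:50 beam splitter and then homodyne the position quadrature of one output port and the momentum quadrature of the other. The measured observables are $\hat X_j=(\hat x_j+\hat x'_j)/\sqrt 2$ and $\hat P_j=(\hat p_j-\hat p'_j)/\sqrt 2$, which pairwise commute precisely because the momentum sign flip inherent in $\rho^T$ cancels the canonical commutator contributed by the ancilla mode. A short calculation using the moments of $\rho^T$ then shows that $(\hat X_1,\hat P_1,\dots,\hat X_n,\hat P_n)$ is distributed as the classical Gaussian $\mathcal{N}(\sqrt 2\,m,V)$: the $V$ and $TVT$ contributions add coherently on every entry, and the extra vacuum term that normally appears in heterodyne is absent. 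I would then set $\hat m$ equal to the rescaled empirical mean and $\hat V$ equal to the bias-corrected empirical covariance, and output the Gaussian state $\rho(\hat V,\hat m)$.

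The sample-complexity analysis then rests on two classical concentration statements. The mean is controlled by standard multivariate Gaussian concentration, giving $\|V^{-1/2}(m-\hat m)\|_2=O(\sqrt{n/N})$ with probability $\ge 1-\delta/2$. For the covariance, the key ingredient is a Wishart-type bound in \emph{multiplicative} form---obtained via the same net arguments used for Algorithm~\ref{algo_het}, but now applied to samples whose true covariance is $V$ itself---yielding $\|V^{-1/2}\hat V V^{-1/2}-\mathds{1}\|_\infty\le\chi/\sqrt N$ with probability $\ge 1-\delta/2$, where $\chi=\sqrt{2n}+\sqrt{2\log(2/\delta)}$. This is the heart of the energy-independence: the prefactor depends only on $n$ and $\delta$, not on $\|V\|_\infty$.

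Finally, I would translate these moment controls into trace distance via the Gaussian perturbation inequality of Theorem~\ref{thm_upp_bound_gFIRST}. The mean term is handled directly by the sub-Gaussian estimate above. The main technical obstacle is converting the multiplicative bound, i.e., $\hat V-V=V^{1/2}RV^{1/2}$ with $\|R\|_\infty\le\chi/\sqrt N$, into a bound on $\Tr[(V^{-1}+\hat V^{-1})|V-\hat V|]$ of order $n\chi/\sqrt N$. A naive operator-monotonicity argument would require $V^{-1/2}|V^{1/2}RV^{1/2}|V^{-1/2}\le|R|$, which is in fact false in general; instead, I would proceed either via an alternative ``scale-invariant'' perturbation bound of the form $D_{\mathrm{tr}}\le O(\|V^{-1/2}(V-\hat V)V^{-1/2}\|_1)$---whose right-hand side reduces cleanly to $\|R\|_1\le 2n\|R\|_\infty$---or via a von Neumann trace-inequality argument that exploits the Wishart structure of $R$ rather than only its operator norm. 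Either route yields $D_{\mathrm{tr}}(\rho,\hat\rho)\le 8.55\,n\,\chi/\sqrt N$ after a union bound, and setting this $\le\varepsilon$ gives the advertised sample complexity $N=O(n^3/\varepsilon^2)$.
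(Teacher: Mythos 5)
Your proposal follows essentially the same route as the paper: the same beam-splitter-plus-homodyne scheme on $\rho\otimes\rho^T$, the same observation that the momentum sign flip of the transpose makes the two covariance contributions add to exactly $V$ (the paper's Lemma~\ref{lem:transpose_sampling}; note the rescaled outcome has mean $2m$, not $\sqrt{2}\,m$), the same multiplicative covariance concentration (Lemma~\ref{le:estrelative}), and the same perturbation bound (Theorem~\ref{th:pertboundAPP}). The one point where you diverge is the final conversion step: the ``main technical obstacle'' you identify is dissolved in the paper not by a scale-invariant variant of the bound or a von Neumann argument, but by deliberately biasing the covariance estimator upward, $\hat V\coloneqq\hat\Sigma_h/(1-\zeta)$, so that $0\le\hat V-V\le\frac{2\zeta}{1-\zeta}V$. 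This makes $|\hat V-V|=\hat V-V$ and $\hat V^{-1}\le V^{-1}$, whence $\Tr[(V^{-1}+\hat V^{-1})|\hat V-V|]\le 2\Tr[V^{-1}(\hat V-V)]\le\frac{4\zeta}{1-\zeta}\cdot 2n$ with no absolute-value manipulation needed; it also guarantees $\hat V+i\Omega\ge V+i\Omega\ge0$, i.e., that $\rho(\hat V,\hat m)$ is a valid Gaussian state — a point your unbiased-estimator phrasing leaves open. Your alternative via $\|V^{-1/2}(V-\hat V)V^{-1/2}\|_1\le 2n\|R\|_\infty$ would also work (indeed this form appears as an intermediate step in the proof of Theorem~\ref{th:pertboundAPP}), but the one-sided bias is the cleaner and shorter path, and it is what yields the stated constant $8.55$.
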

The measurement scheme consists of performing heterodyne detection on the input state \( \rho \), but instead of interfering it with vacuum (as in the standard implementation), the second port of the balanced beam splitter is fed with a copy of \( \rho^T \).

As a result, the scheme yields samples from a classical Gaussian distribution whose covariance matrix exactly matches that of the unknown state \( \rho \), thereby eliminating any dependence on the energy. In contrast, standard heterodyne detection introduces an additive identity shift—specifically, replacing \( V \) with \( (V + \mathbb{1})/2 \)—which is the source of the residual energy dependence.


\subsection{Conclusions}
In this work, have we introduced three conceptually distinct tomography algorithms for Gaussian quantum states, each with its own significance, 
revisiting the old problem of learning Gaussian quantum states from an entirely fresh perspective.
All algorithms are underpinned by a new trace-distance perturbation bound for Gaussian states, which serves as the technical foundation of our results.

First, we have equipped the standard heterodyne tomography protocol—routinely implemented in optical laboratories --- with rigorous trace-distance error guarantees. This result bridges the painful gap between experimental practice and the requirements of theoretical applications, where quantitative reconstruction guarantees in trace distance are often crucial. Our analysis offers simple and practical tools for experimentalists to assess the accuracy of their reconstructions. 
Second, we have presented an adaptive, sample-efficient, and experimentally feasible tomography algorithm whose sample complexity depends essentially only on the number of modes and is independent of the state’s energy up to doubly-logaritmic factors. For all practical purposes, this yields an energy-independent protocol. This makes the algorithm particularly well-suited for quantum sensing applications~\cite{Schnabel_2017,Schnabel2010,SGravi2,q_sensing_cv,nguyen2024digitalreconstructionsqueezedlight,BUCCO}, where highly squeezed --- and therefore high-energy --- states are commonly used, especially in fundamental physics experiments.
Third, and finally, we have shown that even this mild residual energy dependence can be removed entirely: if access to the transposed state is available, a fully energy-independent tomography protocol becomes possible.

All three schemes are compatible with existing experimental capabilities in quantum optics laboratories around the world, relying only on tools already available in photonic platforms: auxiliary squeezed inputs, passive Gaussian unitaries reflecting passive optical elements, and homodyne detection.
Thus, we believe that these results will be of broad interest to both the experimental quantum optics community~\cite{Schnabel2017,Fusion,Xanadu,alexander2024manufacturable,Borealis,Zhong_2020,QuantumPhotoThermodynamics,_eh_ek_2015,teo2020highlyaccurategaussianprocess,Esposito_2014,Kumar_2022,kawasaki2024highrategenerationstatetomography,nguyen2024digitalreconstructionsqueezedlight,Schnabel2010}—where Gaussian states and their precise characterization play a central role—and the growing field of quantum learning theory~\cite{anshu2023survey,mele2024learningquantumstatescontinuous,zhao2023learning,huang2024learning,Huang_2020,grewal2023efficient,leone2023learning,hangleiter2024bell,bittel2024optimalestimatestracedistance,fanizza2023learning,mele2024efficient,iyer2025mildlyinteractingfermionicunitariesefficiently,austin2025efficientlylearningfermionicunitaries,arunachalam2023optimal,montanaro2017learning,rouzé2023learning,landau2024learningquantumstatesprepared,kim2024learningstatepreparationcircuits,bittel2024optimaltracedistanceboundsfreefermionic,gandhari_precision_2023,becker_classical_2023,oh2024entanglementenabled,fawzi2024optimalfidelityestimationbinary,möbus2023dissipationenabledbosonichamiltonianlearning,upreti2024efficientquantumstateverification,fanizza2024efficienthamiltonianstructuretrace,bittel2024optimalestimatestracedistance}, which aims to identify the optimal strategies for learning unknown quantum systems from measurements, with provable and precise recovery guarantees.

One can also view our work as the 
quantum analogue of a fundamental classical problem: learning Gaussian probability distributions in the total variation distance~\cite{Barsov1987,devroye2023total,arbas2023polynomial,franks2021nearoptimalsamplecomplexity,clement}, a task that plays a central role in many \emph{machine learning}  applications~\cite{arbas2023polynomial}. In the classical setting, the sample complexity of learning an $n$-dimensional Gaussian distribution within total variation distance $\varepsilon$ is $N = \Theta(n^2 / \varepsilon^2)$ (this follows by combining~\cite[Theorem~1.8]{arbas2023polynomial} and \cite[footnote~1]{clement}), which is also independent of the norm of the covariance matrix. The different scalings in $n$ arise from the perturbation bound for the TV distance scales with the Frobenious norm of the covariance matrix, whereas in the quantum setting, it scales with the trace distance of the covariance matrix.

Our work opens several natural directions for future research. Can similar energy-independent tomography protocols be developed for near-Gaussian or more general continuous-variable states? How robust are these schemes to experimental noise and imperfections? Can energy-independent methods be extended to process tomography of Gaussian unitaries? We leave these as promising open questions.

\subsection{Acknowledgements}
The authors thank Nathan Walk, Salvatore Tirone, Ludovico Lami, Salvatore F.~E.~Oliviero, Lorenzo Leone, Vittorio Giovannetti, Marco Fanizza, and Clément Canonne for fruitful discussions. F.A.M.~thanks the California Institute of Technology for hospitality. 
This work has been supported by 
the BMFTR (PhoQuant, QPIC, Hybrid++, DAQC, QSolid), the BMWK (EniQmA), the DFG (CRC 183), the Quantum Flagship (PasQuans2, Millenion), the Munich Quantum Valley, Berlin Quantum, and the European Research Council (ERC AdG DebuQC).


%

\clearpage
\onecolumngrid

\begin{center}
\vspace*{\baselineskip}
{\Large\textbf{Supplemental Material}}
\end{center}

\setcounter{tocdepth}{2} 
\tableofcontents
\vspace*{2\baselineskip}

In this supplemental material, we provide additional technical details and proofs to support the results presented in the main text. Specifically, we elaborate on the mathematical tools and concepts used in our analysis, outline the derivation of key bounds, and present the theoretical guarantees underpinning our Gaussian state tomography 
algorithm along with its algorithmic steps.
\resumetoc
\section{Notations and preliminaries}
In this section, we review the key concepts and preliminary notions that are essential for our work. For a more detailed yet concise discussion of these preliminaries, we direct the reader to the preliminary section in the appendix of our previous works~\cite{mele2024learningquantumstatescontinuous,bittel2024optimalestimatestracedistance}, where we adopt the same notation. For a comprehensive overview on notions of continuous variable systems and specifically on Gaussian states, 
we refer to the book~\cite{BUCCO} and the 
review articles~\cite{introeisert,weedbrook12}.
\smallskip\\

\textbf{Definitions}:
\begin{itemize}
    \item A $2n\times 2n$ real matrix $S$ is said to be \emph{symplectic} if $S\Omega S^\intercal=\Omega$, where $\Omega\coloneqq   \id_n \otimes \begin{pmatrix}
    0&1\\-1  &0
\end{pmatrix}$. Moreover, the set of all $2n\times 2n$ symplectic matrices is denoted as $\mathrm{Sp}(2n)$.
    \item An Hermitian matrix $X$ is said to be \emph{positive} if its minimum eigenvalue is non-negative. In formulae, we will write $X\geq0\Leftrightarrow \lambda_{\mathrm{min}}(X)\geq 0$. 
    \item An Hermitian matrix $X$ is said to be \emph{strictly positive} if its minimum eigenvalue is strictly 
    positive. In formulae, we will write $X>0\Leftrightarrow \lambda_{\mathrm{min}}(X)>0$. 
    \item The \emph{absolute value} of an operator $X$ is defined as \(|X| \coloneqq \sqrt{X^\dagger X}\).\\
    \item Given a vector \(\mathbf{m} \in \mathbb{R}^k\) and a positive definite 
    matrix $V\in \mathbb{R}^{k \times k}$, we denote as \(\mathcal{N}(\mathbf{m}, V):\mathbb{R}^k\longmapsto \mathbb{R}\) the \emph{Gaussian probability distribution} over \(\mathbb{R}^k\) with mean vector \(\mathbf{m} \) and covariance matrix \(V\), defined as
\begin{align}
    \mathcal{N}(\mathbf{m}, V)(\mathbf{r}) \coloneqq \frac{1}{(2\pi)^{k/2} \sqrt{\det V}} \exp\!\left( -\frac{1}{2} (\mathbf{r} - \mathbf{m})^\intercal V^{-1} (\mathbf{r} - \mathbf{m}) \right),
    \label{eq:classicalGaussian}
\end{align}
for all \(\mathbf{r} \in \mathbb{R}^k\).
\end{itemize}
\textbf{Decompositions}:
\begin{itemize}
    \item \emph{Spectral decomposition}: For every $X$ Hermitian, there exist a unitary $U$ and a real, diagonal matrix $D$ such that $X=U D U^{\dagger}$. The eigenvalues of $X$ are denoted $\lambda_i\coloneqq D_{i,i}$ for all $i$.
    \item \emph{Singular value decomposition}: For every $X$, there exist $U$ and $V$ unitaries and $\Sigma$ positive and diagonal such that $X=U \Sigma V^{\dagger}$. The singular values of $X$ are denoted as $\sigma_i\coloneqq\Sigma_{i,i}$ for all $i$.
    \item \emph{Williamson decomposition}: For every strictly positive matrix $X\in\mathbb{R}^{2n\times 2n}$, there exist a symplectic matrix $S$ and a positive diagonal matrix $D\in\mathbb{R}^{n\times n}$ such that $X=S\left(\mathbb{1}_2\otimes D\right)S^T$. The symplectic eigenvalues of $X$ are denoted as $d_{i,i}\coloneqq D_{i,i}$ for all $i$.
    \item \emph{Euler decomposition}~\cite{BUCCO}: For any real symplectic matrix $S$, 
    there exist  symplectic, orthogonal matrices $O_1$ and $O_2$, and a real diagonal matrix $Z \coloneqq \bigoplus_{j=1}^n \left( \begin{matrix} z_j & 0 \\ 0 & z_j^{-1} \end{matrix} \right)$ with each $z_j\ge 1$, such that  $S=O_1Z O_2$.
\end{itemize}
\textbf{Norms, positivity and inequalities}:
We use the following norms throughout the paper:
\begin{itemize}
\item $\|\mathbf{m}\|_2 \coloneqq \sqrt{\mathbf{m}^\dagger \mathbf{m}}$ denotes the \emph{Euclidean norm} of a vector $\textbf{m}$.
\item \(\|X\|_\infty\) denotes the \emph{operator norm}, and it is defined as the maximum singular value of \(X\).
\item \(\|X\|_1 \coloneqq \Tr|X|\) denotes the \emph{trace norm},
also referred to as \emph{one-norm}, 
and it is equal to the sum of singular values of $X$.
\item \(\|X\|_2 \coloneqq \sqrt{\Tr [X^\dagger X]}\) denotes the \emph{Hilbert-Schmidt} or \emph{Frobenious norm}, and it is equal to the Euclidean norm of the vector of singular values.
\item In general, for any $p>0$, the \emph{Schatten $p$-norm} is defined  as $\|X\|_p\coloneqq \left(\sum_i \sigma_i^p\right)^{1/p}$, where $\{\sigma_i\}_i$ are the singular values of $X$.
\end{itemize}
Useful matrix inequalities (see e.g., Ref.~\cite{BHATIA}) that are used in the proofs.
\begin{enumerate}
    \item \emph{Hölder's inequality}: $\|XY\|_p\leq \|X\|_q\|Y\|_r$, for $p,q,r\geq 1$ and  $p^{-1}\geq q^{-1}+r^{-1}$. 
    \item For $X$ Hermitian and $A\geq0$:  $\|\sqrt{A}X\sqrt{A}\|_1\leq\Tr(A|X|)\leq \|A|X|\|_1\leq\|AX\|_1$.
    \item For $A\ge B \ge 0 $, then $\|A\|_p\ge \|B\|_p$. 
    \label{eq:monotonicitynorm}
    \label{eq:listeqnorm1}
    \item For $A\geq B \geq 0$: $\|X A X^{\dagger}\|_1\geq \|X B X^{\dagger}\|_1$.
    \label{eq:listeqnorm2}
    \item For all matrices $A,B,C,D$: $\left\|\begin{pmatrix}
        A&B\\C&D
    \end{pmatrix}\right\|_1\geq \|A\|_1+\|D\|_1$\\
    \item For $X,Y\geq0$: $(X+Y)^{-1}\leq X^{-1}+Y^{-1}$.
    \item if $X\geq Y\geq 0$, then $X^{-1}\leq Y^{-1}$.
      \label{eq:inversebound}
\item If \( A \in \mathbb{C}^{n \times n} \) with \( n \ge 2 \) and \( A \ge \mathbb{1} \), then \( \|A\|_{\infty}^{1/2} \le \frac{1}{2} \|A\|_1 \).

    \label{eq:listeqtracebd}
\end{enumerate}
The \emph{matrix geometric mean} between two positive matrices $A$ and $B$ is defined as~\cite{BHATIA}    
\bb\label{eq:defgeommean}
    A\#B\coloneqq \sqrt{A}\sqrt{A^{-1/2}BA^{-1/2}}\sqrt{A}\,,
\ee
 and it satisfies the following properties~\cite{BHATIA}:
\begin{enumerate}
    \item If $[A, B] = 0$, then $A\#B = \sqrt{AB}$.
    \label{eq:commuting}
    \item  $A\#B = B\#A$.
    \item For $c \in \mathbb{R}$, $(cA)\#B = \sqrt{c}(A\#B)$.
    \label{eq:scalar}
    \item For invertible $X$, $(XAX^\dagger)\#(XBX^\dagger) = X(A\#B)X^\dagger$.
     \label{eq:congruence}
    \item Let $0\leq a\leq A$ and $0\leq b\leq B$ be four positive matrices. 
    Then $a\#b\leq A\#B$.
    \label{eq:geommeanmonoton}
\end{enumerate}
We refer to Refs.~\cite{Lami16,LL-log-det,revisited,bittel2024optimalestimatestracedistance} for other applications of the matrix geometric mean in quantum information with continuous-variable systems. 
\smallskip

\textbf{The Hilbert space of a continuous-variable system}: Continuous variable systems, such as quantum optical and bosonic systems, are associated with an infinite-dimensional Hilbert space, the \emph{$n$-mode bosonic Fock space} $\HH$. The latter can be defined as the span of all \emph{Fock states}:
\bb 
    \HH\coloneqq\mathrm{Span}\left\{\ket{m}: \, m\in\mathbb{N}_0^n\right\}\,,
\ee
where $\ket{m}$ is the $n$-mode Fock state vector with $m_i$ photons in the $i$th mode. More formally, $\HH$ can be defined to be the Hilbert space $L^2(\mathbb{R}^n)$ of all complex-values square-integrable over $\mathbb{R}^n$. The Fock states form an orthonormal basis for this Hilbert space. Quantum states on this space are referred to as \emph{$n$-mode} quantum states.

For each $i\in[n]$, the \emph{annihilation operator of the $i$th mode}, denoted as $a_i$, can be defined via its action on Fock states as
\bb
    a_i\ket{m}=\sqrt{m_i}\ket{m-e_i}\qquad\forall\, m\in\mathbb{N}_0^n\,,
\ee
where $e_i$ is the $i$th canonical basis vector of $\mathbb{R}^n$. Analogously, the \emph{creation operator of the $i$th mode}, denoted as $a_i^\dagger$, is the adjoint of the annihilation operator and satisfies
\bb
    a_i^\dagger\ket{m}=\sqrt{m_i+1}\ket{m+e_i}\qquad\forall\, m\in\mathbb{N}_0^n\,,
\ee
The annihilation and creation operator satisfy the  \emph{canonical commutation relations}
\bb
    [a_i,a_j^\dagger]=\delta_{i,j}\hat{\mathbb{1}}\,,
\ee
where $\delta_{i,j}$ denotes the Kronecker delta. Moreover, for each $j\in[n]$, the \emph{position operator} and the \emph{momentum operator} of the $j$th mode, denoted as $\hat{x}_j$ and $\hat{p}_j$, respectively, are defined as 
\bb
    \hat{x}_j&\coloneqq \frac{a_j+a_j^\dagger}{\sqrt{2}}\,,\\
    \hat{p}_j&\coloneqq \frac{a_j-a_j^\dagger}{\sqrt{2}i}\,.
\ee
In addition, the \emph{quadrature operator vector}, denoted as $\hat{\textbf{R}}$, is defined as
\bb
    \mathbf{\hat{R}} \coloneqq (\hat{x}_1, \hat{p}_1, \dots, \hat{x}_n, \hat{p}_n)^\intercal\,.
\ee
Using this notation, the canonical commutation relations can be expressed as 
\bb
    [\hat{R}_k,\hat{R}_l]=i\,(\Omega_n)_{k,l}\mathbb{\hat{1}}\qquad\forall\,k,l\in[2n]\,,
    \label{comm_rel_quadrature}
\ee
where 
\bb
    \Omega_n\coloneqq \bigoplus_{i=1}^n\left(\begin{matrix}0&1\\-1&0\end{matrix}\right)
\ee
being the so-called \emph{symplectic form}, and $\mathbb{\hat{1}}$ is the identity operator. The relation in~\eqref{comm_rel_quadrature} is usually expressed in the continuous variable literature~\cite{BUCCO} in vectorial notation as
\bb
[\hat{\textbf{R}},\hat{\textbf{R}}^{\intercal}]=i\,\Omega_n\mathbb{\hat{1}}\,.
\ee
The \emph{energy operator} is defined as
\bb
    \hat{E}_n \coloneqq \sum_{j=1}^n \left( \frac{\hat{x}_j^2}{2} + \frac{\hat{p}_j^2}{2} \right)=\frac{1}{2}\hat{\textbf{R}}^\intercal \hat{\textbf{R}}\,,
\ee
and the \emph{mean energy} of an $n$-mode state $\rho$ is given by its expectation value of the energy operator ${\mathrm \Tr}[\rho\hat{E}_n]$.

\textbf{First moments and covariance matrices}: The \emph{first moment} $m(\rho)$ of an $n$-mode quantum state $\rho$ is a $2n$-dimensional vector defined as 
\bb
    m(\rho)\coloneqq\left(\Tr\!\left[\hat{R}_1\,\rho\right],\Tr\!\left[\hat{R}_2\,\rho\right],\ldots,\Tr\!\left[\hat{R}_{2n}\,\rho\right]\right)\,,
\ee
which can be written in vectorial notation as $m(\rho)=\Tr\!\left[{\hat{\textbf{R}}}\,\rho\right]$.
Additionally, the \emph{covariance matrix} of $\rho$ is a $2n\times 2n$ matrix $V\!(\rho)$ with elements
\bb
	[V\!(\rho)]_{k,l}\coloneqq \Tr\!\left[\left\{{\hat{R}_k-m_k(\rho)\hat{\mathbb{1}},\hat{R}_l-m_l(\rho)\hat{\mathbb{1}}}\right\}\rho\right]= \Tr\!\left[\left\{\hat{R}_k,\hat{R}_l\right\}\rho\right]-2m_k(\rho)m_l(\rho) \, ,
\ee
for each $k,l\in[2n]$, where $\{\hat{A},\hat{B}\}\coloneqq \hat{A}\hat{B}+\hat{B}\hat{A}$ is the anti-commutator. In vectorial notation, this reads as
\bb
	V\!(\rho)&=\Tr\!\left[\left\{(\hat{\textbf{R}}-\textbf{m}(\rho)\hat{\mathbb{1}}),(\hat{\textbf{R}}-\textbf{m}(\rho)\hat{\mathbb{1}})^{\intercal}\right\}\rho\right]=  \Tr\!\left[\left\{{\hat{\textbf{R}}},{\hat{\textbf{R}}}^{\intercal}\right\}\rho\right]-2\textbf{m}(\rho)\textbf{m}(\rho)^\intercal \, .
\ee   
Notably, any covariance matrix $V\!(\rho)$ satisfies the matrix inequality~\cite{BUCCO}
\bb
V\!(\rho)+i\Omega_n\ge0\,,
\ee
known as \emph{uncertainty relation}. As a consequence, since $\Omega_n$ is skew-symmetric, any covariance matrix $V\!(\rho)$ is positive semi-definite on $\mathbb{R}^{2n}$. Conversely, for any symmetric $W\in\mathbb{R}^{2n,2n}$ such that $W+i\Omega_n\ge0$ there exists an $n$-mode state $\rho$ with covariance matrix $V\!(\rho)=W$~\cite{BUCCO}. In addition, the mean energy of an $n$-mode quantum state can be written in terms of its first moment and covariance matrix as
\bb\label{ene_cov_mat}
    \Tr[\hat{E}_n\rho]=\frac{\Tr V(\rho)}{4}+\frac{\|\textbf{m}\|_2^2}{2}\,.
\ee

\smallskip

\textbf{Symplectic matrices}: A $2n\times 2n$ real matrix $S$ is said to be \emph{symplectic} if $S\Omega S^\intercal=\Omega$, where $\Omega\coloneqq   \id_n \otimes \begin{pmatrix}
    0&1\\-1  &0
\end{pmatrix}$. Moreover, the set of all $2n\times 2n$ symplectic matrices is denoted as $S\in \mathrm{Sp}(2n)$.
\smallskip

\textbf{Williamson decomposition}: Any strictly positive $2n\times 2n$ real matrix \(V\) can be written in the so-called \emph{Williamson decomposition} as~\cite{BUCCO}
\begin{align}
    \label{eq:will}
    V = S D S^\intercal\,,
\end{align}
where $S$ is a symplectic matrix and
\begin{equation}
D = \bigoplus_{j=1}^n \left( \begin{matrix} d_j & 0 \\ 0 & d_j \end{matrix} \right),
\end{equation}
with \(d_1, d_2, \dots, d_n > 0\) denoting the \emph{symplectic eigenvalues}. If \(V\) is a valid covariance matrix, meaning it satisfies the uncertainty relation \(V + i\Omega \geq 0\), then all its symplectic eigenvalues obey \(d_1, d_2, \dots, d_n \geq 1\). 

 \smallskip

\textbf{Euler decomposition}: Any symplectic matrix \(S \in \mathrm{Sp}(2n)\) can be written in the Euler (or Bloch-Messiah) decomposition~\cite{BUCCO} as
\begin{align}
    \label{eq:Euler_dec}
    S = O_1 Z O_2,
\end{align}
where \(O_1, O_2 \in \mathrm{O}(2n) \cap \mathrm{Sp}(2n)\) are symplectic orthogonal matrices, and \(Z\) is a diagonal matrix given by
\begin{align}
    Z \coloneqq \bigoplus_{j=1}^n \left( \begin{matrix} z_j & 0 \\ 0 & z_j^{-1} \end{matrix} \right),
    \label{eq:Z}
\end{align}
with \(z_j \geq 1\). 
\smallskip

\textbf{Useful properties of covariance matrices}: Given the covariance matrix $V$ of an $n$-mode quantum state $\rho$, the Williamson decomposition and the Euler decomposition allow one to prove the following properties (see, e.g., Refs.~\cite{mele2024learningquantumstatescontinuous,bittel2024optimalestimatestracedistance}):
\begin{itemize}
    \item $\|V\|_{\infty} \geq 1$.
    \item $V^{-1}\le \Omega V \Omega^\intercal$, which directly implies that
    \bb
    \left(\lambda_{\mathrm{min}}(V)\right)^{-1}=\|V^{-1}\|_{\infty} \le \|V\|_{\infty}\le \Tr V\le 4\Tr[\hat{E}_n\rho]\,.
    \ee
    \end{itemize}
\smallskip

\textbf{Gaussian states}: By definition, Gaussian states are Gibbs states of positive quadratic Hamiltonians in the quadrature operator vector $\hat{\textbf{R}}$~\cite{BUCCO}. Moreover, they are uniquely determined by their covariance matrices and first moments. Specifically, the set of \(n\)-mode Gaussian states are in one-to-one correspondence with the set of pairs \((V, \mathbf{m})\), where \(V\) is a valid covariance matrix, i.e.,~a \(2n \times 2n\) real matrix that satisfies the uncertainty relation \(V + i\Omega \geq 0\), and \(\mathbf{m}\) is a \(2n\)-dimensional real vector~\cite{BUCCO}. For any such pair \((V, \mathbf{m})\), where \(V\) satisfies the uncertainty relation, there exists a unique Gaussian state with covariance matrix \(V\) and first moment \(\mathbf{m}\)~\cite{BUCCO}. Conversely, the covariance matrix of any (Gaussian) state must satisfy the uncertainty relation. We will denote the Gaussian state with covariance matrix \(V\) and first moment \(\mathbf{m}\) as \(\rho(V, \mathbf{m})\).

\smallskip

\textbf{Symplectic Gaussian unitaries (passive and active)}: 
Given a symplectic matrix \(S \in \mathrm{Sp}(2n)\), one can define the \(n\)-mode \emph{symplectic Gaussian unitary} \(U_S\)~\cite{BUCCO}, which transforms the quadrature operators as 
\begin{align}
    U_S^\dagger \hat{R}_k U_S = \sum_{k,l} S_{k,l} \hat{R}_l\qquad\forall\,k\in[2n]\,.
\end{align}
If a Gaussian unitary is associated with a symplectic orthogonal matrix, it is said to be \emph{passive}; otherwise, if the associated symplectic matrix is not orthogonal, it is said to be \emph{active}. Gaussian unitaries associated with symplectic matrices of the form \(Z\) in Eq.~\eqref{eq:Z} are active and are referred to as \emph{squeezing} unitaries. Importantly, passive unitaries preserve the energy operator~\cite{BUCCO}:
\bb
    U_O^\dagger \hat{E}_n U_O = \hat{E}_n\qquad\forall\, O \in \mathrm{O}(2n) \cap \mathrm{Sp}(2n)\,.
\ee
Finally, the Euler decomposition in Eq.~\eqref{eq:Euler_dec} implies that any Gaussian unitary can be written as a composition of passive and squeezing unitaries $U_S = U_{O_1} U_Z U_{O_2}$, where \(O_1, O_2\) are symplectic orthogonal matrices and \(Z\) is defined in Eq.~\eqref{eq:Z}.
\smallskip

\textbf{Homodyne measurements.}
An Homodyne measurement~\cite{BUCCO,Weedbrook04} is a standard measurement in quantum optics that allows one to estimate the position and momentum of each bosonic mode. For an \(n\)-mode system with quadrature operator vector $\hat{\mathbf{R}} = (\hat{x}_1, \hat{p}_1, \dots, \hat{x}_n, \hat{p}_n)^\intercal,$
a homodyne measurement consists of selecting, for each mode \(j\), whether to measure \(\hat{x}_j\) or \(\hat{p}_j\).

If a Gaussian state \(\rho\), with first-moment vector \(\mathbf{m} \in \mathbb{R}^{2n}\) and covariance matrix \(V \in \mathbb{R}^{2n \times 2n}\), is measured via homodyne detection, the outcomes follow a Gaussian probability distribution whose mean and covariance are given by restricting \(\mathbf{m}\) and \(V/2\) to the measured quadratures. 
For example:
\begin{itemize}
    \item Measuring all position quadratures \((\hat{x}_1, \dots, \hat{x}_n)\) yields outcomes \(\mathbf{x} \in \mathbb{R}^n\) distributed as $\mathbf{x} \sim \mathcal{N}\left( \mathbf{m}_x, \, V_{x,x}/2 \right),$ where $\mathbf{m}_x  \coloneqq\left(m_{2j-1}(\rho )\right)_{j\in[n]}$ and \(V_{x,x}\coloneqq(V_{2i-1,2j-1})_{i,j\in[n]}\) is the submatrix of \(V\) corresponding to the position-position elements.

    \item Measuring all momentum quadratures \((\hat{p}_1, \dots, \hat{p}_n)\) gives outcomes $\mathbf{p} \sim \mathcal{N}\left( \mathbf{m}_p, \, V_{p,p}/2 \right),$
    with $\mathbf{m}_p \coloneqq\left(m_{2j}(\rho )\right)_{j\in[n]}$ and  $V_{p,p}\coloneqq(V_{2i,2j})_{i,j\in[n]}$ is the submatrix of \(V\) corresponding the momentum-momentum elements.
\end{itemize}
More generally, fixed a symplectic matrix $S$, the homodyne measurement associated with $S$ is a projective measurement with respect to the common (generalised) eigenbasis of the $n$ commuting observables $\{\sum_{j=1}^{2n} S_{i,j}\hat{\textbf{R}}_j\}_{i\in\{1,3,5,\ldots,2n-1\}}$. Measuring all these $n $ observables on an $n$-mode Gaussian state $\rho$ yields outcomes \(\mathbf{x} \in \mathbb{R}^n\) distributed as $\mathbf{x} \sim \mathcal{N}\left( \mathbf{m}_x, \, V_{x,x}/2 \right),$ where \(\mathbf{m}_x \coloneqq \left(m_{2j-1}(U_S\rho U_s^\dagger)\right)_{j\in[n]}$ and \(V_{x,x}\coloneqq(V(U_S\rho U_s^\dagger)_{2i-1,2j-1})_{i,j\in[n]}\). For example, the homodyne detection associated with the symplectic orthogonal matrix 
\begin{eqnarray}
S=\left(\begin{array}{cc}
    \cos\theta& \sin\theta\\
    -\sin\theta& \cos\theta
\end{array}
\right)
\end{eqnarray}
corresponds to measuring  the quadrature observable \(X_\theta \coloneqq  \hat{x} \cos\theta + \hat{p} \sin\theta\).

Experimentally, homodyne measurements are implemented by interfering the signal mode with a strong, phase-controlled \emph{local oscillator} (LO) at a beam splitter, followed by intensity measurements at the output ports~\cite{BUCCO}. The phase of the LO determines the measured quadrature: a phase of \(0\) corresponds to \(\hat{x}\), a phase of \(\pi/2\) to \(\hat{p}\), and other values to rotated quadratures.

\vspace{1em}
\textbf{Heterodyne measurement.}
The heterodyne measurement~\cite{BUCCO,Weedbrook04} allows one to simultaneously extract (noisy) information about both the position and momentum quadratures of each bosonic mode. 
When applied to a Gaussian state \(\rho\), with first-moment vector \(\mathbf{m} \in \mathbb{R}^{2n}\) and covariance matrix \(V \in \mathbb{R}^{2n \times 2n}\), the heterodyne outcomes follow a classical Gaussian distribution
\begin{align}
\mathbf{r} \sim \mathcal{N}\left( \mathbf{m}, \, \frac{V + \id}{2} \right),
\end{align}
where \(\id\) denotes the \(2n \times 2n\) identity matrix.

Experimentally, heterodyne detection is implemented by evolving the system state \(\rho\) together with an auxiliary vacuum state \(\ket{0}\) through a balanced beam splitter \(U_B\), followed by homodyne detection on both output ports: one measures the position quadratures on one arm and the momentum quadratures on the other.
The beam splitter \(U_B\) is a passive Gaussian operation corresponding to a symplectic orthogonal transformation
\begin{align}
B = H \otimes \id_{2n}, \quad \text{where} \quad H \coloneqq \frac{1}{\sqrt{2}} \begin{pmatrix} 1 & 1 \\ 1 & -1 \end{pmatrix}.
\end{align}
More generally, we have the following.
\begin{lemma}[(Generalized heterodyne sampling via beam splitter and homodyne detection)]
\label{le:generalized_het}
Let \(\rho(V_1, m_1)\) and \(\rho(V_2, m_2)\) be two \(n\)-mode Gaussian states with first-moment vectors \(m_1, m_2 \in \mathbb{R}^{2n}\) and covariance matrices \(V_1, V_2 \in \mathbb{R}^{2n \times 2n}\). Consider the product state \(\rho(V_1, m_1) \otimes \rho(V_2, m_2)\), and evolve it under the beam splitter transformation \(U_B\), corresponding to the passive Gaussian unitary associated with the symplectic orthogonal matrix \(B = H \otimes \id_{2n}\).  
Then, perform homodyne measurements of the position quadratures \(\hat{x}_1, \dots, \hat{x}_n\) on the first output arm and the momentum quadratures \(\hat{p}_{n+1}, \dots, \hat{p}_{2n}\) on the second. The resulting classical measurement outcome \(\mathbf{r} = \sqrt{2} (\boldsymbol{x}_1, \boldsymbol{p}_2) \in \mathbb{R}^{2n}\) is distributed according to
\begin{align}
\mathbf{r} \sim \mathcal{N}\left( m_1 + \bar{m}_2, \, \frac{V_1 + \bar{V}_2}{2} \right),
\end{align}
where
\begin{align}
\bar{m}_2 \coloneqq (\mathbb{1}_n \otimes Z) m_2, \qquad \bar{V}_2 \coloneqq (\mathbb{1}_n \otimes Z)V_2 (\mathbb{1}_n \otimes Z),
\end{align}
with \(Z \coloneqq \mathrm{diag}(1, -1)\).
In particular, if \(\rho(V_2, m_2)\) is the vacuum state (i.e., \(m_2 = 0\) and \(V_2 = \id_{2n}\)), then we recover the heterodyne measurement result
\begin{align}
\mathbf{r} \sim \mathcal{N} \left( m_1, \, \frac{V_1 + \id_{2n}}{2} \right).
\end{align}
\end{lemma}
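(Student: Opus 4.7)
The plan is to reduce the claim to a direct linear-Gaussian propagation of moments through the beam splitter by working in the Heisenberg picture on the input quadratures. The symplectic matrix $B=H\otimes\mathbb{1}_{2n}$ implements
\bb
\hat R_k^{(\mathrm{out},1)}=\tfrac{1}{\sqrt 2}(\hat R_k^{(1)}+\hat R_k^{(2)}),\qquad \hat R_k^{(\mathrm{out},2)}=\tfrac{1}{\sqrt 2}(\hat R_k^{(1)}-\hat R_k^{(2)}),
\ee
for all $k\in[2n]$, so that the rescaled measured observables, viewed in the Heisenberg picture on the input, become $\sqrt{2}\,\hat x_j^{(\mathrm{out},1)}=\hat x_j^{(1)}+\hat x_j^{(2)}$ and $\sqrt{2}\,\hat p_j^{(\mathrm{out},2)}=\hat p_j^{(1)}-\hat p_j^{(2)}$ for $j\in[n]$. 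A one-line commutator check (the system-$1$ and system-$2$ contributions cancel pairwise) shows these $2n$ observables mutually commute, so on a Gaussian product input their joint outcome is a classical Gaussian random vector.

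Next, I would package the linear combinations into a single $2n\times 4n$ matrix $A=(\mathbb{1}_{2n}\,|\,\mathbb{1}_n\otimes Z)$, with $Z=\mathrm{diag}(1,-1)$ encoding the sign flip on the momentum entries of system $2$, so that $\hat{\mathbf r}=A\,\hat R^{(\mathrm{combined})}$ for $\hat R^{(\mathrm{combined})}=(\hat R^{(1)},\hat R^{(2)})^\intercal$. The mean then propagates linearly,
\bb
\mathbb{E}[\mathbf r]=A\begin{pmatrix} m_1 \\ m_2 \end{pmatrix}=m_1+(\mathbb{1}_n\otimes Z)m_2=m_1+\bar m_2,
\ee
matching the claim. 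For the covariance I would use the standard conversion between the quantum covariance matrix and the classical variance of a linear quadrature combination, $\mathrm{Var}(\mathbf c^\intercal\hat R)=\tfrac{1}{2}\mathbf c^\intercal V\mathbf c$, which follows from the definition $V_{k,l}=\Tr[\{\hat R_k,\hat R_l\}\rho]-2m_k m_l$ and extends to vector-valued commuting linear combinations as
\bb
\mathrm{Cov}(\mathbf r)=\tfrac{1}{2}A(V_1\oplus V_2)A^\intercal=\tfrac{1}{2}\left(V_1+(\mathbb{1}_n\otimes Z)V_2(\mathbb{1}_n\otimes Z)\right)=\tfrac{1}{2}(V_1+\bar V_2).
\ee
Specializing to vacuum $(V_2,m_2)=(\mathbb{1}_{2n},0)$ yields $\bar V_2=\mathbb{1}_{2n}$ and $\bar m_2=0$, recovering the standard heterodyne distribution stated at the end of the lemma.

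The only nontrivial ingredient is bookkeeping: one must keep the canonical per-mode $(x,p)$ ordering straight so that the sign flip on system $2$ lands exactly on the momentum entries and produces the tensor factor $\mathbb{1}_n\otimes Z$, and one must carry the $1/2$ factor coming from the quantum-to-classical conversion through consistently. An equivalent but slightly heavier route would be to compute $V_{\mathrm{out}}=B(V_1\oplus V_2)B^\intercal$ in $2n\times 2n$ block form and then extract the submatrix corresponding to positions on arm $1$ and momenta on arm $2$; I would favour the Heisenberg-picture route because it sidesteps manipulating the full $4n\times 4n$ output covariance.
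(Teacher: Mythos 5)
Your proof is correct and is essentially the same computation as the paper's: both propagate the first and second moments linearly through the beam splitter and use the standard quantum-to-classical conversion $\mathrm{Cov}=\tfrac{1}{2}AVA^{\intercal}$ for commuting quadrature combinations; the paper simply organizes it in the Schr\"odinger picture (computing $Bm$ and $BVB^{\intercal}$ in block form and restricting to the measured quadratures), which is exactly the ``heavier route'' you mention at the end. Your Heisenberg-picture packaging via $A=(\mathbb{1}_{2n}\,|\,\mathbb{1}_n\otimes Z)$ together with the explicit commutativity check is a clean equivalent presentation and introduces no gap.
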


\begin{proof}
The joint state has first moment and covariance matrix
\begin{align}
m &= \begin{pmatrix} m_1 \\ m_2 \end{pmatrix}, \quad
V = \begin{pmatrix} V_1 & 0 \\ 0 & V_2 \end{pmatrix}.
\end{align}
After applying the beam splitter, the state remains Gaussian with transformed first moment and covariance:
\begin{align}
m' &= B m = \frac{1}{\sqrt{2}} \begin{pmatrix} m_1 + m_2 \\ m_1 - m_2 \end{pmatrix}, \\
V' &= B V B^\intercal = \frac{1}{2} \begin{pmatrix} V_1 + V_2 & V_1 - V_2 \\ V_1 - V_2 & V_1 + V_2 \end{pmatrix}.
\end{align}
We then measure \(\coloneqq\hat{x}_1, \dots, \hat{x}_n\) on the first output arm and \(\coloneqq \hat{p}_{n+1}, \dots, \hat{p}_{2n}\) on the second through homodyne detection. The outcomes \((\boldsymbol{x}_1, \boldsymbol{p}_2)\) are Gaussian-distributed with mean and covariance obtained by restricting \(m'\) and \(V'/2\) to the measured quadratures. In particular, the restricted output mean vector and covariance matrix are given by
\begin{align}
m'\big|_{(\boldsymbol{x}_1, \boldsymbol{p}_2)} 
&= \frac{1}{\sqrt{2}} \begin{pmatrix}
(m_1)|_{\boldsymbol{x}} + (m_2)|_{\boldsymbol{x}} \\
(m_1)|_{\boldsymbol{p}} - (m_2)|_{\boldsymbol{p}}
\end{pmatrix}, \\
\frac{1}{2}V'\big|_{(\boldsymbol{x}_1, \boldsymbol{p}_2)} 
&= \frac{1}{4} \begin{pmatrix}
(V_1)|_{\boldsymbol{x}, \boldsymbol{x}} + (V_2)|_{\boldsymbol{x}, \boldsymbol{x}} 
& (V_1)|_{\boldsymbol{x}, \boldsymbol{p}} - (V_2)|_{\boldsymbol{x}, \boldsymbol{p}} \\
(V_1)|_{\boldsymbol{p}, \boldsymbol{x}} - (V_2)|_{\boldsymbol{p}, \boldsymbol{x}} 
& (V_1)|_{\boldsymbol{p}, \boldsymbol{p}} + (V_2)|_{\boldsymbol{p}, \boldsymbol{p}}
\end{pmatrix}.
\end{align}
Thus, \((\boldsymbol{x}_1, \boldsymbol{p}_2) \sim \mathcal{N}\left( \frac{m_1 + \bar{m}_2}{\sqrt{2}}, \, \frac{V_1 + \bar{V}_2}{4} \right) \) from which we have \(\sqrt{2}(\boldsymbol{x}_1, \boldsymbol{p}_2) \sim \mathcal{N}\left( m_1 + \bar{m}_2, \, \frac{V_1 + \bar{V}_2}{2} \right) \).

\end{proof}



\section{Trace distance bounds for Gaussian states}
This section is dedicated to proving a perturbation bound for the trace distance between Gaussian states, presented in Theorem~\ref{thm_upp_bound_gFIRST}. This bound plays a crucial role in analyzing the performance of our tomography algorithm. Additionally, it is a result of independent interest, with potential applications in continuous variable quantum information theory~\cite{BUCCO}. 
%
%
The proof of Theorem~\ref{thm_upp_bound_gFIRST} is deferred to the end of this section. To establish this result, we introduce two lemmas that provide key intermediate results. These rely on the differentiability properties of Gaussian states, as developed in Ref.~\cite{bittel2024optimalestimatestracedistance}.

\begin{lemma}[(Perturbation bound: first moment)]
Let \(\rho(D, \alpha t')\) be a Gaussian state with diagonal covariance matrix \(D = \mathbb{1}_2 \otimes d\), where \(d \geq \mathbb{1}\), and first-moment vector \(\alpha t'\), with \(\alpha \in \mathbb{R}\). Then,
\begin{align}
    \left\| \left. \frac{\partial}{\partial \alpha} \rho(D, \alpha t') \right|_{\alpha=0} \right\|_1 
    = \left\| \sum_{k=1}^{2n} t_k \left[ R_k, \rho(D, 0) \right] \right\|_1 
    \leq 2 \left\| (D + \mathbb{1})^{-1/2} t' \right\|_2,
\end{align}
where we denoted with $t\coloneqq \Omega   t'$.
\end{lemma}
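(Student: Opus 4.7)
My plan is to first establish the equality by differentiating the Weyl translation $\rho(D,\alpha t') = W(\alpha t')\rho(D,0)W(\alpha t')^\dagger$, where $W(\xi) = \exp(-i\xi^\intercal\Omega\hat{\mathbf R})$ generates phase-space translations; differentiating at $\alpha=0$ and using $t = \Omega t'$ produces $\sum_k t_k[\hat R_k,\rho(D,0)]$ up to an overall factor of $i$, which is invisible to the trace norm. The nontrivial content is the inequality, which I would prove as follows.

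Because $D$ is block-diagonal with blocks $d_j\mathbb{1}_2$ and $d_j\ge 1$, the Gaussian $\rho(D,0)$ factorizes as $\bigotimes_j \tau_j$, a product of single-mode thermal states $\tau_j = \tfrac{2}{d_j+1}\sum_n \bigl(\tfrac{d_j-1}{d_j+1}\bigr)^n\ketbra{n}{n}$. A direct Fock-basis check gives the crucial commutation relations
\begin{align*}
[\hat a_j, \tau_j] \;=\; -\tfrac{2}{d_j+1}\,\tau_j \hat a_j,\qquad [\hat a_j^\dagger, \tau_j] \;=\; \tfrac{2}{d_j+1}\,\hat a_j^\dagger \tau_j\,.
\end{align*}
Re-expressing $[\hat x_j,\rho]$ and $[\hat p_j,\rho]$ via $\hat a_j, \hat a_j^\dagger$ and regrouping the position/momentum contributions mode by mode, I would recast the full sum in the ``one-sided'' form
\begin{align*}
\sum_k t_k[\hat R_k,\rho] \;=\; Y\rho - \rho Y^\dagger, \qquad Y \;\coloneqq\; \sum_j \tfrac{2\beta_j}{d_j+1}\,\hat a_j^\dagger,
\end{align*}
with $\beta_j \coloneqq (t_{2j-1}+i t_{2j})/\sqrt 2$. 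The key point is that the contributions conspire so that only creation operators appear on the left of $\rho$, which is what ultimately makes the Hölder bound below sharp.

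The triangle inequality then yields $\|Y\rho-\rho Y^\dagger\|_1\le 2\|Y\rho\|_1$, and Hölder's inequality gives
\begin{align*}
\|Y\rho\|_1 \;\le\; \|Y\rho^{1/2}\|_2\,\|\rho^{1/2}\|_2 \;=\; \sqrt{\mathrm{Tr}(\rho Y^\dagger Y)}\,.
\end{align*}
Evaluating $\mathrm{Tr}(\rho Y^\dagger Y) = \sum_{j,k}\tfrac{4\bar\beta_j\beta_k}{(d_j+1)(d_k+1)}\mathrm{Tr}(\rho \hat a_j \hat a_k^\dagger)$, the cross terms vanish since thermal states have zero first moment, and $\mathrm{Tr}(\tau_j \hat a_j\hat a_j^\dagger) = \tfrac{d_j+1}{2}$ follows from $V_{xx}=V_{pp}=d_j$. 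Hence $\mathrm{Tr}(\rho Y^\dagger Y) = \sum_j \tfrac{2|\beta_j|^2}{d_j+1}$. Since $t=\Omega t'$ is a block rotation, $2|\beta_j|^2 = t_{2j-1}^2+t_{2j}^2 = (t'_{2j-1})^2+(t'_{2j})^2$, and the right-hand side is immediately recognized as $4\|(D+\mathbb{1})^{-1/2}t'\|_2^2$, which gives exactly the claimed bound after taking the square root and the factor of $2$ from the triangle inequality.

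The main obstacle is the normal-ordering step: a naive expansion of $\sum_k t_k[\hat R_k,\rho]$ yields terms with both $\hat a_j$ and $\hat a_j^\dagger$ on each side of $\rho$, and passing such an expression through Cauchy--Schwarz/Hölder would lose the precise $\tfrac{1}{d_j+1}$ scaling—producing for instance $\langle \hat a_j^\dagger\hat a_j\rangle+\langle \hat a_j\hat a_j^\dagger\rangle = d_j$ rather than the single term $\tfrac{d_j+1}{2}$. Recognizing that the specific linear combination dictated by $[\hat x_j,\cdot]$ and $[\hat p_j,\cdot]$ collapses into the one-sided form $Y\rho - \rho Y^\dagger$ with $Y$ purely in creation operators is what produces the sharp $(D+\mathbb{1})^{-1/2}$ weighting on the right-hand side.
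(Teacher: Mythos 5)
Your argument is correct and reaches the same constant as the paper, but the execution differs in two places worth noting. For the reduction to a ``one-sided'' form, the paper expands the commutator explicitly in the Fock basis, obtaining \(\tfrac{1}{\sqrt2}\sum_{\vec m}p_{\vec m}\sum_i(t_{2i-1}-it_{2i})(1-\tau_i)\sqrt{m_i+1}\,\ketbra{\vec m+\vec e_i}{\vec m}-\mathrm{h.c.}\), which is exactly your \(Y\rho-\rho Y^\dagger\) written out in components; you instead get there in one line from the intertwining relations \([\hat a,\tau]=-(1-\tau)\tau\hat a\) and \([\hat a^\dagger,\tau]=(1-\tau)\hat a^\dagger\tau\), which is cleaner and makes the mechanism (annihilators slide to the right of \(\rho\), creators to the left) transparent. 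For the trace-norm estimate, the paper keeps the Fock decomposition, applies the triangle inequality over \(\vec m\), the rank-one identity \(\|\ketbra{a}{b}\|_1=\|a\|_2\|b\|_2\), and Jensen; your route via H\"older, \(\|Y\rho\|_1\le\|Y\rho^{1/2}\|_2\|\rho^{1/2}\|_2=\sqrt{\Tr(\rho Y^\dagger Y)}\), collapses all of that into a single second-moment computation \(\Tr(\tau_j\hat a_j\hat a_j^\dagger)=(d_j+1)/2\) and is arguably the more economical argument. Your treatment of the equality via differentiating the Weyl displacement is also a self-contained replacement for the paper's citation of an external derivative formula.

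One arithmetic slip at the very end: you correctly compute \(\Tr(\rho Y^\dagger Y)=\sum_j\tfrac{2|\beta_j|^2}{d_j+1}=\sum_j\tfrac{(t'_{2j-1})^2+(t'_{2j})^2}{d_j+1}\), but this equals \(\|(D+\mathbb{1})^{-1/2}t'\|_2^2\), not \(4\|(D+\mathbb{1})^{-1/2}t'\|_2^2\). With the spurious factor of \(4\) your chain would yield \(4\|(D+\mathbb{1})^{-1/2}t'\|_2\) after the square root and the triangle-inequality factor of \(2\); with the correct identification it yields exactly the claimed \(2\|(D+\mathbb{1})^{-1/2}t'\|_2\). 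The rest of the computation is sound, so this is a typo rather than a gap.
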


\begin{proof}

The first identity in the claim follows directly from Theorem S13 (Compact formula for the derivative of a Gaussian state) in Ref.~\cite{bittel2024optimalestimatestracedistance}.
We now prove the upper bound. Consider first the single-mode case, where the Gaussian state with diagonal covariance matrix, denoted as \(\rho_\tau\), is a thermal state with parameter \(\tau = \frac{d - 1}{d + 1}\) (and equivalently \(d = \frac{\tau + 1}{1 - \tau}\)); see, e.g., Ref.~\cite{BUCCO}. In particular, the thermal state takes the form

\begin{align}
    \rho_\tau = (1 - \tau) \sum_{m=0}^\infty \tau^m \ketbra{m}{m}.
\end{align}
We compute the commutator of the position operator \(x = \frac{1}{\sqrt{2}}(a + a^\dagger)\) with \(\rho_\tau\) as
    \begin{align}
        [x,\rho_\tau]&=\frac{(1-\tau)}{\sqrt{2}} \sum_{m=0}^\infty \tau^m \left( (a+a^\dagger )\ketbra{m}{m}-\ketbra{m}{m}(a+a^\dagger )\right)\\ 
        \nonumber
        &=\frac{(1-\tau)}{\sqrt{2}}\sum_{m=0}^\infty \tau^m \left( -\sqrt{m+1}\ketbra{m}{m+1}+\sqrt{m+1}\ketbra{m+1}{m}\right)\times (1-\tau)\\
        \nonumber
        &=\frac{(1-\tau)^2}{\sqrt{2}}\sum_{m=0}^\infty \tau^m\sqrt{m+1}\left( \ketbra{m+1}{m}-\ketbra{m}{m+1}\right).
        \nonumber
    \end{align}
    This generalizes to the multi mode case as 
    \begin{align}
        \sum_i [t_i R_i,\rho_D]&=\frac{1}{\sqrt{2}}\sum_{m=0}^\infty p_{\vec m}\sum_{i\in [n]}(t_{2i-1}-it_{2i})(1-\tau_i)  \left( \sqrt{m_i+1}\ketbra{\vec m+\vec e_i}{\vec m}\right) -h.c.,
    \end{align}
    where $p_{\vec m}=\prod_{i=1}^n (1-\tau_i) \tau_i^{m_i}$ is the probability distribution of the Fock state elements. 
   %
    Thus, taking the one-norm, we get
    \begin{align}
        \left\|\sum_{i\in[2n]} [t_i R_i,\tau_\nu]\right\|_1&\leq \frac{1}{\sqrt{2}}\sum_{\vec m=0}^\infty p_{\vec m}\left\|\sum_{i\in [n]}(t_{2i-1}-it_{2i})(1-\tau_i) \left( \sqrt{m_i+1}\ketbra{\vec m+\vec e_i}{\vec m}\right) -h.c.\right\|_1\\
        \nonumber
        &\leq \frac{2}{\sqrt{2}}\sum_{\vec m=0}^\infty p_{\vec m}\left\|\sum_{i\in [n]}(t_{2i-1}-it_{2i})(1-\tau_i)  \sqrt{m_i+1}\ketbra{\vec m+\vec e_i}{\vec m}\right\|_1\\
        \nonumber
        &=\frac{2}{\sqrt{2}}\sum_{\vec m=0}^\infty p_{\vec m}\left\|\sum_{i\in [n]}(t_{2i-1}-it_{2i})(1-\tau_i)   \sqrt{m_i+1}\ket{\vec m+\vec e_i}\right\|_2\\
        \nonumber
        &\leq \frac{2}{\sqrt{2}}\sqrt{\sum_{\vec m=0}^\infty p_{\vec m}\left\|\sum_{i\in [n]}(t_{2i-1}-it_{2i})(1-\tau_i)   \sqrt{m_i+1}\ket{\vec m+\vec e_i}\right\|^2_2}\\
        \nonumber
        &=\frac{2}{\sqrt{2}}\sqrt{\sum_{\vec m=0}^\infty p_{\vec m}\sum_{i\in [n]}(t_{2i-1}^2+t_{2i}^2)(1-\tau_i)^2   (m_i+1)}
        \nonumber
    \end{align}
     where we have used that $\mathbb{E}(X)\leq \sqrt{\mathbb{E}(X^2)}$. 
    Next, we can use that
    \begin{align}
        (1-p)\sum_{m=0}^\infty p^m (m+1)=\frac{1}{1-p},
    \end{align}
    to obtain 
    \begin{align}
        \left\|\sum_{i\in[2n]} [t_i R_i,\tau_\nu]\right\|_1&\leq \frac{2}{\sqrt{2}}\sqrt{\sum_{i\in [n]}(t_{2i-1}^2 +t_{2i}^2)(1-\tau_i) }=2\| (D+\mathbb{1})^{-1/2} t\|_2=2\| (D+\mathbb{1})^{-1/2} t'\|_2,
    \end{align}
    using $1-\tau_i=\frac{2}{d_i+1}$ and that $\Omega$ commute with $D$, which conclude the proof.
\end{proof}
\begin{lemma}
Let $\rho(D, 0)$ be a bosonic Gaussian state with diagonal covariance matrix $D = \mathbb{1}_2 \otimes d$, where $d \geq \mathbb{1}$. Then,
\begin{align}
    \left\| \left. \frac{\partial}{\partial \alpha} \rho(D + \alpha X, 0) \right|_{\alpha = 0} \right\|_1 
    = \frac{1}{4} \left\| \sum_{k, \ell = 1}^{2n} (\Omega X \Omega^\top)_{k , \ell} \left[ R_k, \left[ R_\ell, \rho(D, 0) \right] \right] \right\|_1 
    \leq \frac{1 + \sqrt{3}}{2} \left\| (D + \mathbb{1})^{-1/2} X (D + \mathbb{1})^{-1/2} \right\|_1.
\end{align}
\end{lemma}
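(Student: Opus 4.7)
The first equality in the statement is the compact derivative formula for a Gaussian state with respect to its covariance matrix, which is established as Theorem~S13 in Ref.~\cite{bittel2024optimalestimatestracedistance}; I would simply cite it. The real content is the trace-norm upper bound on the right, and my plan is to prove it by an explicit Fock-basis calculation that mirrors the proof of the previous lemma.

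The first step is to rescale. Setting $Y \coloneqq (D+\mathbb{1})^{-1/2} X (D+\mathbb{1})^{-1/2}$, the inequality becomes equivalent to
\begin{equation*}
\left\| \sum_{k,\ell}(\Omega X \Omega^\top)_{k,\ell}\,[R_k,[R_\ell,\rho(D,0)]] \right\|_1 \;\le\; 2(1+\sqrt{3})\,\|Y\|_1,
\end{equation*}
and since $D = \mathbb{1}_2\otimes d$ is diagonal, $\rho(D,0)$ factorises as a tensor product $\bigotimes_i \rho_{\tau_i}$ of single-mode thermal states with parameters $\tau_i = (d_i-1)/(d_i+1)$. I would then rewrite each quadrature in ladder form, $R_{2i-1}=(a_i+a_i^\dagger)/\sqrt{2}$ and $R_{2i}=(a_i-a_i^\dagger)/(i\sqrt{2})$, and expand the double commutator. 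The operator inside the norm then splits into two pieces whose Fock-basis supports are essentially disjoint: a \emph{number-preserving} piece built from terms of type $a_i^\dagger a_j$ and $a_i a_j^\dagger$, whose matrix elements do not shift the total photon number, and a \emph{pair-shifting} piece built from $a_i a_j$ and $a_i^\dagger a_j^\dagger$, whose matrix elements shift it by two. Because these two contributions live on orthogonal Fock subspaces, their trace norms add, so each block can be bounded independently.

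For each block I would proceed exactly as in the proof of the previous lemma: expand in Fock states, apply the triangle inequality mode by mode, and then use Jensen together with Cauchy--Schwarz against the thermal weights $p_{\vec m} = \prod_i (1-\tau_i)\tau_i^{m_i}$. The combinatorial identity $(1-\tau)\sum_{m}\tau^m(m+1) = (1-\tau)^{-1}$ supplies precisely the factor needed to cancel the $(1-\tau_i)(1-\tau_j) = 4/[(d_i+1)(d_j+1)]$ prefactor produced by the two commutators, so the surviving dependence on $X$ is entirely through the rescaled matrix $Y$. I expect the number-preserving piece to contribute $1/2\cdot\|Y\|_1$ and the pair-shifting piece to contribute $\sqrt{3}/2\cdot\|Y\|_1$ (before accounting for the $1/4$ prefactor outside the sum), together yielding the stated constant $(1+\sqrt{3})/2$.

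The main obstacle I anticipate is the pair-shifting block. Unlike the single-commutator setting of the previous lemma, here the double commutator acts on two modes simultaneously, so the coefficient matrix relevant to that block is indexed by unordered pairs $\{i,j\}$ and crucially includes the diagonal $i=j$. Bounding its trace norm by $\|Y\|_1$---rather than by a weaker Hilbert--Schmidt quantity---requires a carefully balanced Cauchy--Schwarz that correctly mixes diagonal and off-diagonal contributions, and I expect the non-trivial constant $\sqrt{3}$ to emerge from optimising this splitting against the thermal second-moment identity $(1-\tau)\sum_m \tau^m (m+1)^2 = (1+\tau)/(1-\tau)^2$.
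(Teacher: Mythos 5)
Your strategy coincides with the paper's: cite Theorem~S13 for the equality, pass to ladder operators, split the double commutator into a number-preserving block (the $[a_i,[a_j^\dagger,\rho]]$-type terms, with Hermitian coefficient matrix $B$) and a pair-shifting block (the $[a_i^\dagger,[a_j^\dagger,\rho]]$ and $[a_i,[a_j,\rho]]$ terms, with symmetric coefficient matrix $A$), and control each block by a Fock-basis expansion, Jensen, and thermal moment identities. Two of the details you anticipate, however, would not deliver the stated constant $(1+\sqrt{3})/2$ as written. First, the number-preserving block cannot be bounded wholesale: in the Fock basis it produces a residue $\sum_i B_{ii}(1-\tau_i)\,\rho$ proportional to the state itself, and the paper deliberately subtracts this piece from the $B$-block and regroups it with the pair-shifting block, where it enters essentially for free as an extra orthogonal component $\big(\sum_i B_{ii}(1-\tau_i)\big)^2 \le \|B_\tau\|_1^2$ under a single square root. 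If you instead leave it in the number-preserving block, as your ``orthogonal Fock subspaces'' splitting suggests, the triangle inequality costs an extra $|\Tr B_\tau|$ and the final constant degrades to $(2+\sqrt{2})/2 \approx 1.71$ in place of $(1+\sqrt{3})/2\approx 1.37$.

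Second, the $\sqrt{3}$ does not arise from optimising a Cauchy--Schwarz splitting against the second-moment identity you quote. It comes from the observation that $A$ and $B$ are the off-diagonal and diagonal blocks of $M^\dagger(\Omega X\Omega^\top)M$ for a fixed unitary $M$ commuting with the thermal rescaling, so the block trace-norm inequality gives $2\max\{\|A_\tau\|_1,\|B_\tau\|_1\}\le\|\tilde X_\tau\|_1$ and hence $\sqrt{2\|A_\tau\|_1^2+\|B_\tau\|_1^2}\le\tfrac{\sqrt3}{2}\|\tilde X_\tau\|_1$; the factor $2$ multiplying $\|A_\tau\|_1^2$ is produced by the diagonal $i=j$ pair-creation terms, which require the moment $\sum_m p^m(m+1)(m+2)=2/(1-p)^3$ rather than the $(m+1)^2$ moment you name. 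With these two corrections your plan reproduces the paper's proof.
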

\begin{proof}
Let $\tilde{X} \coloneqq \Omega X \Omega^\top$. By Theorem~S13 (Compact formula for the derivative of a Gaussian state) in Ref.~\cite{bittel2024optimalestimatestracedistance}, we have
\begin{align}
    \left. \frac{\partial}{\partial \alpha} \rho(D + \alpha X, 0) \right|_{\alpha = 0}
    = -\frac{1}{4} \sum_{k, \ell = 1}^{2n} \tilde{X}_{k , \ell} \left[ R_k, \left[ R_\ell, \rho \right] \right].
\end{align}
We now expand the quadrature operators \( R_k \) in terms of creation and annihilation operators
\begin{align}
    R_{2j - 1 + \alpha} = \frac{1}{\sqrt{2}} i^\alpha \left( (-1)^\alpha a_j + a_j^\dagger \right), \qquad \alpha \in \{0,1\},\; j \in [n].
\end{align}
Substituting this into the double commutator, we find
\begin{align}
    -4\, \partial_{X,0} \rho(D+ \alpha X,0)
    &= \sum_{i,j \in [2n]} \tilde{X}_{i,j} \left[ \hat{R}_i, \left[ \hat{R}_j, \rho \right] \right] \\
    \nonumber
    &= \frac{1}{2} \sum_{i,j \in [n]} \sum_{\alpha, \beta \in \{0,1\}} i^{\alpha + \beta} \tilde{X}_{2i - 1 + \alpha,\, 2j - 1 + \beta} 
    \left[ (-1)^\alpha a_i + a_i^\dagger,\; \left[ (-1)^\beta a_j + a_j^\dagger, \rho \right] \right] \\
     \nonumber
    &= \frac{1}{2} \sum_{i,j \in [n]} \sum_{\alpha, \beta \in \{0,1\}} i^{\alpha + \beta} \tilde{X}_{2i - 1 + \alpha,\, 2j - 1 + \beta} \Big(
    [a_i^\dagger, [a_j^\dagger, \rho]] + (-1)^\alpha [a_i, [a_j^\dagger, \rho]] \notag\\
    &\qquad\qquad\qquad\qquad\qquad + (-1)^\beta [a_i^\dagger, [a_j, \rho]] + (-1)^{\alpha + \beta} [a_i, [a_j, \rho]]
    \Big) \\
     \nonumber
    &= \sum_{i,j \in [n]} \left( 
    A_{i,j} [a_i^\dagger, [a_j^\dagger, \rho]] 
    +  B_{i,j} [a_i, [a_j^\dagger, \rho]]
    +  B^{*}_{{i,j}} [a_i^\dagger, [a_j, \rho]] 
    + A_{{i,j}}^{*} [a_i, [a_j, \rho]] \right) \\
    &= \sum_{i,j \in [n]} \left( 
    A_{{i,j}} [a_i^\dagger, [a_j^\dagger, \rho]] 
    + 2 B_{{i,j}} [a_i, [a_j^\dagger, \rho]] 
    + A_{{i,j}}^{*} [a_i, [a_j, \rho]]
    \right),
     \nonumber
\end{align}
where in the third line we have defined the matrices
\begin{align}
    A_{{i,j}} &= \frac{1}{2} \sum_{\alpha, \beta \in \{0,1\}} i^{\alpha + \beta} \tilde{X}_{2i - 1 + \alpha,\, 2j - 1 + \beta}, \\
    B_{{i,j}} &= \frac{1}{2} \sum_{\alpha, \beta \in \{0,1\}} i^{\alpha + \beta} (-1)^\alpha \tilde{X}_{2i - 1 + \alpha,\, 2j - 1 + \beta},
\end{align}
that satisfy \( A = A^\top \), \( B = B^\dagger \). 
We now note that
\begin{align}
    \sum_{i,j \in [n]} B^{*}_{{i,j}} [a^\dagger_i, [a_j, \rho]] 
    &= \sum_{i,j \in [n]} B_{j,i} [a^\dagger_i, [a_j, \rho]] \\
    \nonumber
    &= - \sum_{i,j \in [n]} B_{j,i} \left( [\rho, [a^\dagger_i, a_j]] + [a_j, [\rho, a^\dagger_i]] \right) \\
    \nonumber
    &= - \sum_{i,j \in [n]} B_{j,i} [a_j, [\rho, a^\dagger_i]] \\
    \nonumber
    &= \sum_{i,j \in [n]} B_{j,i} [a_j, [a^\dagger_i, \rho]]\\
    \nonumber
    &= \sum_{i,j \in [n]} B_{{i,j}} [a_i, [a^\dagger_j, \rho]],
    \nonumber
\end{align}
where in the first step we have used Hermiticity of \( B \), in the second step we applied the Jacobi identity
\begin{equation}
[A, [B, C]] = -[B, [C, A]] - [C, [A, B]],
\end{equation}
and in the third step we have used the canonical commutation relation \( [a_i, a^\dagger_j] = \delta_{{i,j}} \).
Thus, we have
\begin{align}
\label{eq:derpr}
    -4\, \partial_{X,0} \rho(V,0)
    &= \sum_{i,j \in [n]} \left( 
    A_{{i,j}} [a_i^\dagger, [a_j^\dagger, \rho]] 
    + 2 B_{{i,j}} [a_i, [a_j^\dagger, \rho]] 
    + A_{{i,j}}^{*} [a_i, [a_j, \rho]]
    \right).
\end{align}
Note that the matrices \( A \) and \( B \) can be obtained from the matrix \( \tilde{X} = \Omega X \Omega^\top \) via the unitary transformation
\begin{align}
    M = \frac{1}{\sqrt{2}} \begin{pmatrix}
        \mathbb{1} & \mathbb{1} \\
        i\mathbb{1} & -i\mathbb{1}
    \end{pmatrix}, \qquad 
    M^\dagger \tilde{X} M = 
    \begin{pmatrix}
        B & A \\
        A^\dagger & B^*
    \end{pmatrix}.
\end{align}
 We also introduce the notation
\begin{align}
    A_\tau \coloneqq D_\tau A D_\tau, \qquad 
    B_\tau \coloneqq D_\tau B D_\tau, \qquad 
    X_\tau \coloneqq (D_\tau \oplus D_\tau)\, X\, (D_\tau \oplus D_\tau),
\end{align}
with
\begin{align}
    D_\tau \coloneqq \operatorname{diag}\left( \sqrt{1 - \tau_1}, \dots, \sqrt{1 - \tau_n} \right).
\end{align}
We start by estimating the contribution from the \(B\)-term in Eq.~\eqref{eq:derpr}. As in the proof of the previous lemma, we note that we can write the state \(\rho\) as a diagonal thermal state in the Fock basis:
\begin{align}
\rho = \sum_{\vec{m} \in \mathbb{N}^n} p_{\vec{m}} \ketbra{\vec{m}}{\vec{m}}, \qquad 
p_{\vec{m}} \coloneqq \prod_{i=1}^n (1 - \tau_i) \tau_i^{m_i},
\end{align}
with \(\tau_i = \frac{d_i - 1}{d_i + 1}\). We have
  \begin{align}
      \sum_{i,j \in [n]} B_{{i,j}} \left[ a_i, \left[ a_j^\dagger, \rho \right] \right]
    &= \sum_{i,j \in [n]} B_{{i,j}} \sum_{\vec{m} \in \mathbb{N}^n} p_{\vec{m}} \left(
        a_i a_j^\dagger \ketbra{\vec{m}}{\vec{m}} 
        + \ketbra{\vec{m}}{\vec{m}} a_j^\dagger a_i 
        - a_i \ketbra{\vec{m}}{\vec{m}} a_j^\dagger 
        - a_j^\dagger \ketbra{\vec{m}}{\vec{m}} a_i 
    \right) \\
    \nonumber
    &= \sum_{\vec{m} \in \mathbb{N}^n} p_{\vec{m}} \sum_{i \in [n]} B_{i,i} \left(
        (2 m_i + 1) \ketbra{\vec{m}}{\vec{m}} 
        - (m_i + 1) \ketbra{\vec{m} + \vec{e}_i}{\vec{m} + \vec{e}_i} 
        - m_i \ketbra{\vec{m} - \vec{e}_i}{\vec{m} - \vec{e}_i}
    \right) \\
    \nonumber
    &\quad + \sum_{\vec{m} \in \mathbb{N}^n} p_{\vec{m}} \sum_{\substack{i,j \in [n] \\ i \neq j}} B_{{i,j}} \Big(
        \sqrt{m_j + 1} \sqrt{m_i} \ketbra{\vec{m} + \vec{e}_j - \vec{e}_i}{\vec{m}} 
        + \sqrt{m_j} \sqrt{m_i + 1} \ketbra{\vec{m}}{\vec{m} - \vec{e}_j + \vec{e}_i} \nonumber \\
        \nonumber
    &\hspace{5em}
        - \sqrt{m_j + 1} \sqrt{m_i + 1} \ketbra{\vec{m} + \vec{e}_j}{\vec{m} + \vec{e}_i} 
        - \sqrt{m_j} \sqrt{m_i} \ketbra{\vec{m} - \vec{e}_i}{\vec{m} - \vec{e}_j}
    \Big).
    \nonumber
\end{align}
We now simplify the previous expression by appropriately shifting the summation indices. For this, we use the relation
\begin{align}
p_{\vec{m}} \coloneqq \prod_{i=1}^n (1 - \tau_i) \tau_i^{m_i}, \qquad 
\Rightarrow \quad
p_{\vec{m} + \vec{e}_i} = \tau_i p_{\vec{m}}, \qquad 
{p_{\vec{m} - \vec{e}_i}} = \frac{{p_{\vec{m}}}}{\tau_i}.
\end{align}
 So we get
\begin{align}
    \sum_{i,j \in [n]} B_{{i,j}} \left[ a_i, \left[ a_j^\dagger, \rho \right] \right]
    &= \sum_{\vec{m} \in \mathbb{N}^n} p_{\vec{m}} 
       \sum_{i \in [n]} B_{i,i} \left( 
           2m_i + 1 - \tau_i(m_i + 1) - \frac{m_i}{\tau_i} 
       \right) \ketbra{\vec{m}}{\vec{m}} \notag \\
       \nonumber
    &\quad - \sum_{\vec{m} \in \mathbb{N}^n} p_{\vec{m}} 
       \sum_{\substack{i, j \in [n] \\ i \neq j}} 
       B_{{i,j}} \sqrt{(m_i + 1)(m_j + 1)} 
       (1 - \tau_i)(1 - \tau_j) 
       \ketbra{\vec{m} + \vec{e}_j}{\vec{m} + \vec{e}_i} \\ 
    &= \sum_{\vec{m} \in \mathbb{N}^n} p_{\vec{m}} 
       \sum_{i \in [n]} B_{i,i} \left( 
           2m_i + 1 - \tau_i(m_i + 1) - \frac{m_i}{\tau_i} 
           + \frac{m_i(1 - \tau_i)^2}{\tau_i} 
       \right) \ketbra{\vec{m}}{\vec{m}} \notag \\
       \nonumber
    &\quad - \sum_{\vec{m} \in \mathbb{N}^n} p_{\vec{m}} 
       \sum_{i,j \in [n]} B_{{i,j}} \sqrt{(m_i + 1)(m_j + 1)} 
       (1 - \tau_i)(1 - \tau_j) 
       \ketbra{\vec{m} + \vec{e}_j}{\vec{m} + \vec{e}_i} \\ 
    &= \sum_{\vec{m} \in \mathbb{N}^n} p_{\vec{m}} 
       \sum_{i \in [n]} B_{i,i} (1 - \tau_i) \ketbra{\vec{m}}{\vec{m}} \notag \\
    &\quad - \sum_{\vec{m} \in \mathbb{N}^n} p_{\vec{m}} 
       \sum_{i,j \in [n]} B_{{i,j}} \sqrt{(m_i + 1)(m_j + 1)} 
       (1 - \tau_i)(1 - \tau_j) 
       \ketbra{\vec{m} + \vec{e}_j}{\vec{m} + \vec{e}_i}.
       \nonumber
\end{align}
    We will proceed with the bound
    \begin{align}
        \|\sum_{i,j\in[n]} B_{i,j}[ a_i[a_j^\dagger,\rho]]-\sum_{i}B_{i,i}(1-\tau_i)\rho \|_1&\leq
        \sum_{\vec m\in \mathbb{N}^n}p_{\vec m}\left\|\sum_{i, j\in [n]}B_{i,j}(\sqrt{m_j+1}\sqrt{m_i+1}(1-\tau_i)(1-\tau_j)\ketbra{j}{i}\right\|_1 .
    \end{align}
We write the matrix \( B_{\tau} \) in its eigendecomposition \( B_{\tau} = \sum_k \lambda_k \ket{b_k} \bra{b_k} \), yielding
    \begin{align}
    &\sum_{\vec m\in \mathbb{N}^n}p_{\vec m}\left\|\sum_{i, j\in [n]}B_{{i,j}}(\sqrt{m_j+1}\sqrt{m_i+1}(1-\tau_i)(1-\tau_j)\ketbra{j}{i}\right\|_1\\
    \nonumber
    &\leq \sum_k |\lambda_k|\sum_{\vec m\in \mathbb{N}^n}p_{\vec m}\left\|\sum_{j\in [n]}b_{kj}^{*}(\sqrt{m_j+1}(1-\tau_j)^{1/2}\ket{j}\times \sum_{i\in [n]} \bra{i}\sqrt{m_i+1}(1-\tau_i)^{1/2}b_{k,i}\right\|_1
    \\
     \nonumber
    &\leq \sum_k |\lambda_k|\sum_{\vec m\in \mathbb{N}^n}p_{\vec m}\left\|\sum_{j\in [n]}b_{kj}^{*}(\sqrt{m_j+1}(1-\tau_j)^{1/2}\ket{j}\right\|_2^2\\
     \nonumber
        &=\sum_k |\lambda_k| \sum_{\vec m\in \mathbb{N}^n}p_{\vec m}\sum_i (1-\tau_i)(m_i+1)|b_{k,i}|^2\\
        &=\sum_k |\lambda_k| \sum_i |b_{k,i}|^2\\
         \nonumber
        &=\Tr(|B_\tau|)=2\|(D+\mathbb{1})^{-1/2}B (D+\mathbb{1})^{-1/2}\|_1
         \nonumber
    \end{align}
where we have used that for rank-one operators, 
\(\| \ketbra{a}{b} \|_1 = \| \ket{a} \|_2 \| \ket{b} \|_2\), and that \(p_{\vec{m}}\) is a probability distribution. Finally, we have used the identity
\begin{align}
    (1 - \tau_i)^2 \sum_{m = 0}^\infty \tau_i^m (m + 1) =1,
\end{align} 
denoted \(\tau \coloneqq \mathrm{Diag}(\vec{\tau})\), \(D \coloneqq \mathrm{Diag}(\vec{d})\), and used that $1-\tau_i=\frac{2}{d_i+1}$.
We now consider the term involving the matrix \(A\). Specifically,
\begin{align}
    \sum_{i,j \in [n]} A_{{i,j}} [a_i^{\dagger}, [a_j^\dagger, \rho]] 
    &= \sum_{i,j \in [n]} A_{{i,j}} \sum_{\vec{m} \in \mathbb{N}^n} p_{\vec{m}} 
    \Big( 
        a_i^\dagger a_j^\dagger \ketbra{\vec{m}}{\vec{m}} 
        + \ketbra{\vec{m}}{\vec{m}} a_j^\dagger a_i^\dagger 
        - a_i^\dagger \ketbra{\vec{m}}{\vec{m}} a_j^\dagger 
        - a_j^\dagger \ketbra{\vec{m}}{\vec{m}} a_i^\dagger 
    \Big) \\
     \nonumber
    &= \sum_{\vec{m} \in \mathbb{N}^n} p_{\vec{m}} \Big(
        \sqrt{m_i + 1 + \delta_{{i,j}}} \sqrt{m_j + 1} 
        \ketbra{\vec{m} + \vec{e}_i + \vec{e}_j}{\vec{m}} \notag \\
         \nonumber
    &\quad + \sqrt{m_i - \delta_{{i,j}}} \sqrt{m_j} 
        \ketbra{\vec{m}}{\vec{m} - \vec{e}_i - \vec{e}_j} \notag \\
         \nonumber
    &\quad - \sqrt{m_i + 1} \sqrt{m_j} 
        \ketbra{\vec{m} + \vec{e}_i}{\vec{m} - \vec{e}_j} 
        - \sqrt{m_i} \sqrt{m_j + 1} 
        \ketbra{\vec{m} + \vec{e}_j}{\vec{m} - \vec{e}_i} 
    \Big) \\
     \nonumber
    &= \sum_{i,j \in [n]} A_{{i,j}} \sum_{\vec{m} \in \mathbb{N}^n} p_{\vec{m}} 
    \sqrt{m_j + 1} \sqrt{m_i + 1 + \delta_{{i,j}}} 
    (1 - \tau_i)(1 - \tau_j) 
    \ketbra{\vec{m} + \vec{e}_i + \vec{e}_j}{\vec{m}}.
     \nonumber
\end{align}
This shows that the combined contribution from the \(A\)-term and the diagonal part of the \(B\)-term is bounded by
\begin{align}
    &\left\| \sum_{i,j \in [n]} A_{{i,j}} [a_i^{\dagger}, [a_j^\dagger, \rho]] 
    + \sum_{i} B_{i,i} (1 - \tau_i) \rho \right\|_1 
    \\
    \nonumber
    &\leq \sum_{\vec{m} \in \mathbb{N}^n} p_{\vec{m}} 
    \sqrt{ \sum_{i,j} |A_{{i,j}}|^2 (m_j + 1)(m_i + 1 + \delta_{{i,j}}) 
    (1 - \tau_i)^2 (1 - \tau_j)^2 
    + \left( \sum_i B_{i,i}(1 - \tau_i) \right)^2 } \\ 
    \nonumber
    &\leq \sqrt{ \sum_{\vec{m} \in \mathbb{N}^n} p_{\vec{m}} 
    \sum_{i,j} |A_{{i,j}}|^2 (m_j + 1)(m_i + 1 + \delta_{{i,j}}) 
    (1 - \tau_i)^2 (1 - \tau_j)^2 
    + \left( \sum_i B_{i,i}(1 - \tau_i) \right)^2 } \\ 
    \nonumber
    &= \left( 
    \sum_{i \neq j} |A_{{i,j}}|^2 
    \left( \sum_{m = 0}^\infty \tau_i^m (m + 1)(1 - \tau_i)^3 \right)
    \left( \sum_{m = 0}^\infty \tau_j^m (m + 1)(1 - \tau_j)^3 \right) \right. \notag \\ 
    \nonumber
    &\quad + \left. 
    \sum_i |A_{i,i}|^2 (1 - \tau_i)^5 
    \sum_{m = 0}^\infty \tau_i^m (m + 1)(m + 2) 
    + \left( \sum_i B_{i,i} (1 - \tau_i) \right)^2 
    \right)^{1/2} \\
    \nonumber
    &= \sqrt{ 
    \sum_{i \neq j} |A_{{i,j}}|^2 (1 - \tau_i)(1 - \tau_j) 
    + 2 \sum_i |A_{i,i}|^2 (1 - \tau_i)^2 
    + \left( \sum_i B_{i,i} (1 - \tau_i) \right)^2 } \\
    \nonumber
    &= \sqrt{ 
    \sum_{i,j} |A_{{i,j}}|^2 (1 - \tau_i)(1 - \tau_j) 
    + \sum_i |A_{i,i}|^2 (1 - \tau_i)^2 
    + \left( \sum_i B_{i,i} (1 - \tau_i) \right)^2 }.
    \nonumber
\end{align}
As in the earlier part of the proof, we applied the triangle inequality, along with 
\(\| \ketbra{a}{b} \|_1 = \| \ket{a} \|_2 \| \ket{b} \|_2\). 
We also used the bound \(\mathbb{E}(X) \leq \sqrt{\mathbb{E}(X^2)}\), which follows from Jensen's inequality.
In addition, we made use of the identity
\begin{align}
    \sum_{m = 0}^{\infty} p^m (m + 1)(m + 2) = \frac{2}{(1 - p)^3}, \qquad \text{for } 0 < p < 1.
\end{align}
Putting everything together, we obtain
\begin{align}
    &\left\| \left. \partial_\alpha \rho(D + \alpha X, 0) \right|_{\alpha = 0} \right\|_1 
    \\
    \nonumber
    &\leq \frac{1}{2} \sqrt{ \sum_{i,j} |A_{{i,j}}|^2 (1 - \tau_i)(1 - \tau_j) 
    + \sum_i |A_{i,i}|^2 (1 - \tau_i)^2 
    + \left( \sum_i (1 - \tau_i) B_{i,i} \right)^2 } 
    + \frac{1}{2} \sum_i (1 - \tau_i) |B_{i,i}| \\
     \nonumber
    &\leq \frac{1}{2} \sqrt{ 2 \sum_{i,j} |A_{{i,j}}|^2 (1 - \tau_i)(1 - \tau_j) 
    + \left( \sum_i (1 - \tau_i) B_{i,i} \right)^2 } 
    + \frac{1}{2} \sum_i (1 - \tau_i) |B_{i,i}| \\
     \nonumber
    &\leq \frac{1}{2} \sqrt{ 2 \| A_\tau \|_2^2 + \Tr(B_\tau)^2 } 
    + \frac{1}{2} \| B_\tau \|_1 \\
    \nonumber
    &\leq \frac{1}{2} \sqrt{ 2 \| A_\tau \|_1^2 + \| B_\tau \|_1^2 } 
    + \frac{1}{2} \| B_\tau \|_1 \\
     \nonumber
    &\leq \frac{1}{2} \sqrt{ \frac{3}{4} \| \tilde{X}_\tau \|_1^2 } 
    + \frac{1}{4} \| \tilde{X}_\tau \|_1 \\
     \nonumber
    &= \frac{1 + \sqrt{3}}{4} \| \tilde{X}_\tau \|_1 \\
    \nonumber
    &= \frac{1 + \sqrt{3}}{2} \left\| (D + \mathbb{1})^{-1/2} \tilde{X} (D + \mathbb{1})^{-1/2} \right\|_1 \\
     \nonumber
    &= \frac{1 + \sqrt{3}}{2} \left\| (D + \mathbb{1})^{-1/2} X (D + \mathbb{1})^{-1/2} \right\|_1,
     \nonumber
\end{align}
where the steps are justified as follows.
we have used that \( X \) admits the block representation
\begin{align}
    \tilde{X} = M 
    \begin{pmatrix}
        B & A \\
        A^\dagger & B^*
    \end{pmatrix}
    M^\dagger,
\end{align}
under the unitary transformation
\begin{align}
    M = \frac{1}{\sqrt{2}} \begin{pmatrix}
        \mathbb{1} & \mathbb{1} \\
        i\mathbb{1} & -i\mathbb{1}
    \end{pmatrix}.
\end{align}
Since \( D_\tau \oplus D_\tau \) commutes with \( M \), the rescaled matrix \( \tilde{X}_\tau \) satisfies
\begin{align}
    \tilde{X}_\tau = M 
    \begin{pmatrix}
        B_\tau & A_\tau \\
        A_\tau^\dagger & B_\tau^*
    \end{pmatrix}
    M^\dagger.
\end{align}
From this, we observe that
\begin{align}
    \| \tilde{X}_\tau \|_1 
    = \left\| 
    \begin{pmatrix}
        B_\tau & A_\tau \\
        A_\tau^\dagger & B_\tau^*
    \end{pmatrix}
    \right\|_1 
    \geq \max\left\{ \left\| 
    \begin{pmatrix}
        A_\tau & 0 \\
        0 & A_\tau^\dagger
    \end{pmatrix}
    \right\|_1, 
    \left\| 
    \begin{pmatrix}
        B_\tau & 0 \\
        0 & B_\tau^*
    \end{pmatrix}
    \right\|_1 
    \right\}
    = 2 \max\left\{ \| A_\tau \|_1, \| B_\tau \|_1 \right\},
\end{align}
as follows from Statement 6 in the list of preliminary inequalities. This justifies the use of \( \| \tilde{X}_\tau \|_1 \) to upper bound the trace norm contributions from both \( A_\tau \) and \( B_\tau \).
Finally, we have used the identity \( 1 - \tau_i = \frac{2}{d_i + 1} \), and in the last step, the fact that \( D \) commutes with \( \Omega \).

\end{proof}
We are now finally in the position to prove the trace-distance bound, by assembling the results established above.
\begin{thm}
\label{th:pertboundAPP}
Let $\rho(V,t)$ be a bosonic Gaussian state. Then, we have
\begin{align}
    \left\| \left. \frac{\mathrm{d}}{\mathrm{d} \alpha} \rho(V + \alpha X, m + \alpha t) \right|_{\alpha = 0} \right\|_1
    \leq \frac{1 + \sqrt{3}}{2} \Tr\left( V^{-1} |X| \right) + 2 \| V^{-1/2} t \|_2.
\end{align}
Moreover, for any two Gaussian states $\rho(V, t)$ and $\rho(W, m)$, the bound 
\begin{align}
    \| \rho(V, t) - \rho(W, m) \|_1 
    &\leq \| V^{-1/2} (m - t) \|_2 
    + \frac{1 + \sqrt{3}}{4} \Tr\left( (V^{-1} + W^{-1}) |V - W| \right).
\end{align}
\end{thm}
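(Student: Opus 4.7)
The plan is two-phase: I would first establish the differential bound (the first claim), then integrate it along a path in Gaussian parameter space to obtain the finite-difference bound.

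For the differential bound, I would reduce to the case of diagonal covariance via the Williamson decomposition $V = SDS^\top$ with $S$ symplectic and $D = \mathbb{1}_2 \otimes d$. The Gaussian unitary $U_S$ preserves the trace norm, giving
\begin{equation*}
\|\partial_\alpha \rho(V+\alpha X, m+\alpha t)|_{\alpha=0}\|_1 = \|\partial_\alpha \rho(D+\alpha Y, S^{-1}m + \alpha S^{-1}t)|_{\alpha=0}\|_1,
\end{equation*}
with $Y \coloneqq S^{-1}X S^{-\top}$. A further conjugation by a displacement operator centres the first moment at zero without altering the trace norm. By linearity of the derivative in $\alpha$ and the triangle inequality, the problem splits into a covariance part (direction $Y$) and a first-moment part (direction $S^{-1}t$), to which I would apply Lemma~2 and Lemma~1 respectively.

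The core step is to convert the resulting $(D+\mathbb{1})^{-1}$-weighted bounds into the claimed $V^{-1}$-weighted form. Defining $B \coloneqq S^{-\top}(D+\mathbb{1})^{-1/2}$, one checks that $B B^\top = (S(D+\mathbb{1})S^\top)^{-1} = (V + SS^\top)^{-1} \le V^{-1}$, because $SS^\top \ge 0$ implies $V + SS^\top \ge V$. For the covariance term, I would then use the Jordan decomposition $X = X_+ - X_-$ and the trace-norm triangle inequality,
\begin{equation*}
\|B^\top X B\|_1 \le \Tr(B^\top X_+ B) + \Tr(B^\top X_- B) = \Tr(B B^\top |X|) \le \Tr(V^{-1}|X|),
\end{equation*}
to recover $\frac{1+\sqrt{3}}{2}\Tr(V^{-1}|X|)$. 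The first-moment term is handled by $\|(D+\mathbb{1})^{-1/2}S^{-1}t\|_2^2 = t^\top (V+SS^\top)^{-1}t \le \|V^{-1/2}t\|_2^2$. Combining yields the differential bound.

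For the integrated bound, I would parametrise the straight-line path $V(\alpha) \coloneqq (1-\alpha)V + \alpha W$, $m(\alpha) \coloneqq (1-\alpha)t + \alpha m$, invoke the fundamental theorem of calculus together with the triangle inequality for $\|\cdot\|_1$, and apply the differential bound pointwise. The key averaging step is operator convexity of the matrix inverse, $V(\alpha)^{-1} \le (1-\alpha)V^{-1} + \alpha W^{-1}$, whose integral over $[0,1]$ yields $\int_0^1 V(\alpha)^{-1}\,d\alpha \le \tfrac{1}{2}(V^{-1}+W^{-1})$. Pairing this with $|V-W|$ gives the factor $\frac{1+\sqrt{3}}{4}\Tr((V^{-1}+W^{-1})|V-W|)$. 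The first-moment contribution is estimated similarly, exploiting the same convexity together with Cauchy--Schwarz on the scalar integral $\int_0^1 \|V(\alpha)^{-1/2}(m-t)\|_2\,d\alpha$.

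The main technical obstacle is the matrix-analytic conversion from the $(D+\mathbb{1})^{-1}$-weighted norm produced by the lemmas to the $V^{-1}$-weighted norm appearing in the theorem, because absolute values do not commute with congruence transformations. The Jordan-decomposition argument above circumvents this cleanly and explains the asymmetric dependence on $V$ versus $W$ in the first-moment term of the integrated bound.
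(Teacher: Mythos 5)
Your treatment of the differential bound is sound and follows the paper's own route: Williamson reduction, removal of the first moment by a displacement unitary, the triangle inequality to split covariance and mean directions, and the two commutator lemmas. Your conversion from the $(D+\mathbb{1})^{-1}$-weighted norms to $\Tr(V^{-1}|X|)$, via $S(D+\mathbb{1})S^{\top}=V+SS^{\top}\ge V$ together with the Jordan decomposition $X=X_{+}-X_{-}$, is correct and is a mildly different packaging of what the paper does (the paper drops the $+\mathbb{1}$ first, writes $V^{-1/2}=UD^{-1/2}S^{-1}$, uses unitary invariance, and then applies $\|\sqrt{A}\,X\sqrt{A}\|_1\le\Tr(A|X|)$). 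Nothing to object to there.

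The gap is in the integrated bound. You move along the single simultaneous path $\bigl((1-\alpha)V+\alpha W,\ (1-\alpha)t+\alpha m\bigr)$, so the first-moment contribution you must integrate is $2\int_0^1\|V(\alpha)^{-1/2}(m-t)\|_2\,\mathrm{d}\alpha$ with the \emph{interpolated} covariance inside the norm. Jensen plus operator convexity of the inverse only yields $\sqrt{2\,(m-t)^{\top}(V^{-1}+W^{-1})(m-t)}$, which involves $W^{-1}$ and is not bounded by $\|V^{-1/2}(m-t)\|_2$: take $V=\mathbb{1}$ and $W$ strongly squeezed and the ratio diverges. So your path cannot produce the asymmetric first-moment term of the theorem, and your closing remark that the Jordan argument ``explains the asymmetric dependence on $V$ versus $W$'' has it backwards --- the asymmetry comes from the paper's choice of a two-leg path $\rho(V,t)\to\rho(V,m)\to\rho(W,m)$, in which the mean is moved first while the covariance is held fixed at $V$, so that only $V^{-1/2}$ ever weights $m-t$. (Be aware that even on that leg, honestly integrating the differential bound gives the constant $2\|V^{-1/2}(m-t)\|_2$; the paper's stated coefficient of $1$ rests on an extra factor of $\alpha$ appearing inside its integral, which you should scrutinize rather than reproduce.)
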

\begin{proof}
We use the Williamson decomposition \( V = S D S^T \) to obtain
\begin{align}
    \rho(V + \alpha X, m + \alpha t)
    &= U_m \rho(V + \alpha X, \alpha t) U_m^\dagger \\
    \nonumber
    &= U_m\rho(S ( D + \alpha S^{-1} X S^{-T}) S^T, \alpha t)U^{\dagger}_m \\
    \nonumber
    &= U_m U_S \, \rho(D + \alpha S^{-1} X S^{-T},\, \alpha S^{-1} t) \, U_S^\dagger U_m^\dagger,
     \nonumber
\end{align}
where we have used that both displacement of the first moment and squeezing correspond to unitary transformations, which in particular leave the trace norm invariant. Therefore, computing 
\begin{align}
    \left\| \left. \partial_\alpha \rho(V + \alpha X, m + \alpha t) \right|_{\alpha = 0} \right\|_1
\end{align}
is equivalent to computing
\begin{align}
    \left\| \left. \frac{\partial}{\partial \alpha} \rho(D + \alpha S^{-1} X S^{-T},\, \alpha S^{-1} t) \right|_{\alpha = 0} \right\|_1.
\end{align}
We then estimate this trace norm derivative by splitting it as
\begin{align}
    &\left\| \left. \frac{\partial}{\partial \alpha} \rho(D + \alpha S^{-1} X S^{-T},\, \alpha S^{-1} t) \right|_{\alpha = 0} \right\|_1  \\
     \nonumber
    &\leq 
    \left\| \left. \frac{\partial}{\partial \alpha} \rho(D + \alpha S^{-1} X S^{-T},\, 0) \right|_{\alpha = 0} \right\|_1 
    + \left\| \left. \frac{\partial}{\partial \alpha} \rho(D,\, \alpha S^{-1} t) \right|_{\alpha = 0} \right\|_1 \\
     \nonumber
    &\leq 
    \frac{1 + \sqrt{3}}{2} \left\| (D + \mathbb{1})^{-1/2} S^{-1} X S^{-T} (D + \mathbb{1})^{-1/2} \right\|_1 
    + 2 \left\| (D + \mathbb{1})^{-1/2} S^{-1} t \right\|_2 \\
     \nonumber
    &\leq 
    \frac{1 + \sqrt{3}}{2} \left\| D^{-1/2} S^{-1} X S^{-T} D^{-1/2} \right\|_1 
    + 2 \left\| D^{-1/2} S^{-1} t \right\|_2,
     \nonumber
\end{align}
where in the first step we have used the definition of derivative and the triangle inequality, in the second step we have applied the previously established lemmas, and in the last step we have removed the identity component for simplicity. In the special case of pure states (\(D = \mathbb{1}\)), this last step can be avoided. As such, a more refined analysis could yield sharper sampling bounds in the pure-state regime.
Now observe that
\begin{align}
    V = S D S^T = (S D^{1/2})(D^{1/2} S^T),
\end{align}
so that
\begin{align}
    V^{-1/2} = U D^{-1/2} S^{-1}
\end{align}
for some unitary matrix \( U \). Since Schatten norms are unitary-invariant, we get
\begin{align}
    \left\| \left. \frac{\mathrm{d}}{\mathrm{d} \alpha} \rho(V + \alpha X, m + \alpha t) \right|_{\alpha = 0} \right\|_1 
    &\leq \frac{1 + \sqrt{3}}{2} \left\| V^{-1/2} X (V^{-1/2})^{T} \right\|_1 
    + 2 \left\| V^{-1/2} t \right\|_2 \\
     \nonumber
    &= \frac{1 + \sqrt{3}}{2} \left\| V^{-1/2} X V^{-1/2} \right\|_1 
    + 2 \left\| V^{-1/2} t \right\|_2,
\end{align}
which concludes the first part of the proof.
Next,
\begin{align}
    \|\rho(V, t) - \rho(W, m)\|_1 
    &\leq \|\rho(V, t) - \rho(V, m)\|_1 
    + \|\rho(V, m) - \rho(W, m)\|_1  \\
     \nonumber
    &\leq \int_0^1 \left\| \partial_\alpha \rho(V, t + \alpha(m - t)) \right\|_1\, \mathrm{d}\alpha 
    + \int_0^1 \left\| \partial_\alpha \rho(V + \alpha(W - V), m) \right\|_1\, \mathrm{d}\alpha  \\
     \nonumber
    &\leq 2 \int_0^1 \|V^{-1/2} \alpha(m - t)\|_2\, \mathrm{d}\alpha 
    + \frac{1 + \sqrt{3}}{2} \int_0^1 
    \Big\| (V + \alpha(W - V))^{-1/2} \notag \\
     \nonumber
    &\hspace{3.5cm}
    \cdot (W - V)\, (V + \alpha(W - V))^{-1/2} \Big\|_1\, \mathrm{d}\alpha \\
     \nonumber
    &= \|V^{-1/2}(m - t)\|_2 
    + \frac{1 + \sqrt{3}}{4} 
    \left( 
        \|V^{-1/2}(V - W)V^{-1/2}\|_1 
        + \|W^{-1/2}(V - W)W^{-1/2}\|_1 
    \right),
     \nonumber
\end{align}
where we have used
\begin{align}
    &\int_0^1 \left\| (V + \alpha(W - V))^{-1/2} (W - V) (V + \alpha(W - V))^{-1/2} \right\|_1 \mathrm{d}\alpha \\
     \nonumber
    &\leq \int_0^1 \Tr\left[ (V + \alpha(W - V))^{-1} |W - V| \right] \mathrm{d}\alpha \\
    \nonumber
    &\leq \int_0^1 \Tr\left[ \big((1 - \alpha)V^{-1} + \alpha W^{-1}\big) |W - V| \right] \mathrm{d}\alpha \\
     \nonumber
    &= \Tr\left( \frac{V^{-1} + W^{-1}}{2} |W - V| \right).
     \nonumber
\end{align}

\end{proof}

\section{Gaussian state tomography using heterodyne measurements: A non-adaptive algorithm}
We begin by deriving bounds on the number of samples required to estimate the first moment vector and the covariance matrix of an unknown multivariate classical Gaussian distribution.
These bounds will later be applied to the analysis of heterodyne measurement outcomes.
\begin{lemma}[(Relative error estimation of the covariance matrix)]
\label{le:estrelative}
Let $\{\hat x_i\}_{i \in [N]}$ be $N$ i.i.d.\ samples from an $n$-dimensional Gaussian distribution $\mathcal{N}(\mu, \Sigma)$ (see Eq.~\eqref{eq:classicalGaussian}). Define the empirical mean and covariance estimators as
\begin{align}
    \hat \mu &\coloneqq \frac{1}{N} \sum_{i \in [N]} \hat x_i, \qquad
    \hat \Sigma \coloneqq \frac{1}{N} \sum_{i \in [N]} (\hat x_i - \hat \mu)(\hat x_i - \hat \mu)^T = \frac{1}{N} \sum_{i \in [N]} \hat x_i \hat x_i^T - \hat \mu \hat \mu^T.
\end{align}
Then, with probability at least $1 - \delta$, the bounds 
\begin{align}
\label{eq:firstmom}
    \|\mu - \hat \mu\|_2 &\leq \frac{\chi}{\sqrt{N}} \|\Sigma\|_\infty, \\
    \|\hat \Sigma - \Sigma\|_\infty &\leq \zeta \|\Sigma\|_\infty,
    \label{eq:opnorm}
\end{align}
hold, where $\chi \coloneqq \sqrt{n} + \sqrt{2 \log(2/\delta)}$ and $\zeta\coloneqq\frac{2\chi}{\sqrt{N}} + \frac{2\chi^2}{N}$. Eq.\eqref{eq:firstmom} and Eq.\ \eqref{eq:opnorm} also imply that
\begin{align}
        \|\Sigma^{-\frac{1}{2}}(\hat{\mu}_N - \mu)\|_2  \leq \frac{\chi}{\sqrt{N}},  \\
    (1 - \zeta)\Sigma \leq \hat{\Sigma} \leq (1 + \zeta)\Sigma.
    \label{eq:relative}
\end{align}
\end{lemma}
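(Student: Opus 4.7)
The plan is to reduce both statements to the isotropic Gaussian case by whitening, and then invoke two classical concentration bounds: Gaussian concentration for the Euclidean norm of a standard Gaussian vector, and a matrix deviation inequality for sample covariances of standard Gaussian vectors. Concretely, define $z_i \coloneqq \Sigma^{-1/2}(\hat x_i - \mu)$, which are i.i.d.\ $\mathcal{N}(0, \mathbb{1}_n)$, and set $\bar z \coloneqq \tfrac{1}{N}\sum_i z_i$ and $\hat S \coloneqq \tfrac{1}{N}\sum_i z_i z_i^T$. Then $\Sigma^{-1/2}(\hat\mu - \mu) = \bar z$ and a direct computation gives the key identity
\begin{align}
\Sigma^{-1/2}(\hat\Sigma - \Sigma)\Sigma^{-1/2} \;=\; \hat S - \mathbb{1}_n - \bar z \bar z^{\,T},
\end{align}
which already shows that the relative-error bound in \eqref{eq:relative} is equivalent to $\|\hat S - \mathbb{1}_n - \bar z \bar z^{\,T}\|_\infty \le \zeta$, and that the absolute operator-norm bound \eqref{eq:opnorm} follows from the latter by sandwiching with $\|\Sigma^{1/2}\|_\infty^2 = \|\Sigma\|_\infty$.

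For the first moment, $\sqrt{N}\,\bar z \sim \mathcal{N}(0,\mathbb{1}_n)$, so the Lipschitz concentration of the Euclidean norm for standard Gaussians (equivalently, the Borell--TIS inequality applied to $g \mapsto \|g\|_2$, which is $1$-Lipschitz) yields
\begin{align}
\Pr\!\left[\,\|\sqrt{N}\,\bar z\|_2 > \sqrt{n} + \sqrt{2\log(2/\delta)}\,\right] \le \delta/2,
\end{align}
which gives $\|\Sigma^{-1/2}(\hat\mu - \mu)\|_2 \le \chi/\sqrt{N}$ on this event, and hence \eqref{eq:firstmom} after multiplying by $\|\Sigma^{1/2}\|_\infty$.

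For the covariance, let $Z \in \mathbb{R}^{N \times n}$ denote the matrix with rows $z_i^T$, so that $\hat S = Z^T Z /N$. The standard Gaussian-matrix singular-value bound (see, e.g., Vershynin~\cite{vershynin_2018}, Corollary~7.3.3, based on Gordon's theorem plus Gaussian Lipschitz concentration) states that for every $t \ge 0$, $\sqrt{N} - \sqrt{n} - t \le \sigma_{\min}(Z) \le \sigma_{\max}(Z) \le \sqrt{N} + \sqrt{n} + t$ with probability at least $1 - 2 e^{-t^2/2}$. Choosing $t = \sqrt{2\log(2/\delta)}$ and squaring gives, on an event of probability at least $1-\delta/2$,
\begin{align}
\|\hat S - \mathbb{1}_n\|_\infty = \max_{i}\,|\sigma_i(Z)^2/N - 1| \le \frac{2\chi}{\sqrt{N}} + \frac{\chi^2}{N}.
\end{align}
Combining with the mean bound above (which additionally controls $\|\bar z \bar z^{\,T}\|_\infty = \|\bar z\|_2^2 \le \chi^2/N$), a union bound over the two events together with the triangle inequality in the key identity yields $\|\hat S - \mathbb{1}_n - \bar z\bar z^{\,T}\|_\infty \le \zeta$ with probability at least $1-\delta$, establishing both \eqref{eq:opnorm} and the relative-error bound in \eqref{eq:relative}.

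The only non-routine ingredient is the Gaussian matrix-concentration bound; everything else is algebraic or a one-line application of Gaussian Lipschitz concentration. The main care point is matching constants: one must verify that the $(1+\chi/\sqrt{N})^2 - 1$ term dominates the deviation in both directions, so that the $\chi^2/N$ correction appears with coefficient $1$ in $\|\hat S - \mathbb{1}_n\|_\infty$, and then the extra $\chi^2/N$ coming from $\|\bar z\bar z^{\,T}\|_\infty$ assembles cleanly to produce the stated $\zeta = 2\chi/\sqrt{N} + 2\chi^2/N$.
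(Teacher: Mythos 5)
Your proposal is correct and follows essentially the same route as the paper: whiten the samples, control the operator norm of the whitened sample covariance and the Euclidean norm of the whitened empirical mean by standard Gaussian concentration, and then undo the whitening to get both the absolute bounds \eqref{eq:firstmom}--\eqref{eq:opnorm} and the relative bound \eqref{eq:relative} — the paper merely cites Wainwright's Eq.~(6.12) and Lugosi--Mendelson where you invoke the underlying Davidson--Szarek/Gordon singular-value bound and Borell--TIS directly. The only blemishes, both of which the paper's own argument shares, are that the two-sided singular-value event with $t=\sqrt{2\log(2/\delta)}$ fails with probability $2e^{-t^2/2}=\delta$ rather than $\delta/2$ (so the union bound strictly yields $3\delta/2$ unless one takes $t=\sqrt{2\log(4/\delta)}$), and that multiplying the whitened mean bound by $\|\Sigma^{1/2}\|_\infty$ produces $\sqrt{\|\Sigma\|_\infty}$ rather than the $\|\Sigma\|_\infty$ appearing in \eqref{eq:firstmom}, which matches the stated inequality only when $\|\Sigma\|_\infty\ge 1$ (as it is in all applications in the paper).
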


\begin{proof}
The bound on the mean estimator follows from Eq.~(1.1) in Ref.~\cite{lugosi2017subgaussianestimatorsmeanrandom}, using the inequality $\|\Sigma\|_1 \leq n \|\Sigma\|_\infty$.
For the covariance bound, consider
\begin{align}
    \|\hat \Sigma - \Sigma\|_\infty &= \left\| \frac{1}{N} \sum_{i \in [N]} (\hat x_i - \mu)(\hat x_i - \mu)^T - (\hat \mu - \mu)(\hat \mu - \mu)^T - \Sigma \right\|_\infty
    \nonumber
    \\
     \nonumber
    &\leq \left\| \frac{1}{N} \sum_{i \in [N]} (\hat x_i - \mu)(\hat x_i - \mu)^T - \Sigma \right\|_\infty + \|(\hat \mu - \mu)(\hat \mu - \mu)^T\|_\infty 
    \\
     \nonumber
    &\leq \left( \frac{2\chi}{\sqrt{N}} + \frac{\chi^2}{N} \right) \|\Sigma\|_\infty + \|\hat \mu - \mu\|_2^2 \\
     \nonumber
    &\leq \left( \frac{2\chi}{\sqrt{N}} + \frac{\chi^2}{N} \right) \|\Sigma\|_\infty + \frac{\chi^2}{N} \|\Sigma\|_\infty \\
     \nonumber
    &= \left( \frac{2\chi}{\sqrt{N}} + \frac{2\chi^2}{N} \right) \|\Sigma\|_\infty \\
    &= \zeta \|\Sigma\|_\infty,
    \label{eq:proofeq}
\end{align}
where the inequality in the third line uses Eq.~(6.12) from Ref.~\cite{Wainwright_2019}.
To prove Eq.~\eqref{eq:relative}, note that this relation is equivalent to requiring that
\begin{align}
    - \zeta \mathbb{1} \leq \Sigma^{-\frac{1}{2}} \hat{\Sigma} \Sigma^{-\frac{1}{2}} - \mathbb{1} \leq \zeta \mathbb{1},
\end{align}
which is true if and only if 
\begin{align}
    \|\Sigma^{-\frac{1}{2}} \hat{\Sigma} \Sigma^{-\frac{1}{2}} - \mathbb{1}\|_{\infty} \le \zeta.
\end{align}
Now observe that $\Sigma^{-1/2}\hat{\Sigma} \Sigma^{-1/2}$ can be written as
\begin{align}
    \Sigma^{-1/2}\hat{\Sigma}\Sigma^{-1/2} = \frac{1}{N} \sum_{i=1}^N (\xi_i - \hat{\xi})(\xi_i - \hat{\xi})^\intercal,
\end{align}
where $\xi_i \coloneqq \Sigma^{-1/2}(\hat{x}_i - \mu)$ is distributed according to $\mathcal{N}(0, \mathbb{1})$~\cite{vershynin_2018}, and $\hat{\xi} \coloneqq \frac{1}{N} \sum_{j=1}^N \xi_j$ is the empirical mean.
Therefore, $\Sigma^{-1/2} \hat{\Sigma} \Sigma^{-1/2}$ is the empirical covariance matrix for samples drawn from $\mathcal{N}(0, \mathbb{1})$, and hence, by Eq.\eqref{eq:proofeq}, the bound $\|\Sigma^{-\frac{1}{2}} \hat{\Sigma} \Sigma^{-\frac{1}{2}} - \mathbb{1}\|_{\infty} \le \zeta$ follows. Similarly, we have that by Eq.\eqref{eq:firstmom}, $$\|\Sigma^{-\frac{1}{2}}(\hat{\mu}_N - \mu)\|_2 = \|\hat{\xi} - 0 \|_2 \leq \frac{\chi}{\sqrt{N}}. $$
\end{proof}
This allows us to state our first main theorem concerning the recovery guarantees of tomography of an unknown Gaussian state via heterodyne measurements. The key observation behind this theorem is that the probability distribution resulting from a heterodyne measurement on a Gaussian state $\rho$ is itself Gaussian~\cite{BUCCO}, with mean vector $m(\rho)/\sqrt{2}$ and covariance matrix $(V(\rho) + \id)/2$, where $V(\rho)$ and $m(\rho)$ denote the covariance matrix and mean vector of $\rho$, respectively. Hence, by performing heterodyne measurements on the unknown Gaussian state, we can use the previous lemma to obtain a relative error estimate on the classical covariance matrix $(V(\rho) + \id)/2$. Accepting this, the remainder of the proof is dedicated to showing how such a relative error propagates to a bound on the trace distance between the true state and the reconstructed Gaussian state.

We show that performing only heterodyne measurements, without any squeezing, results in a tomography algorithm whose sample complexity scales at most quadratically with \(\|V^{-1}\|_{\infty}\) (i.e., at most quadratically with the energy). This improves upon previous bounds for tomography using solely heterodyne measurements~\cite{bittel2024optimalestimatestracedistance}. 

\begin{thm}[(Non-adaptive Gaussian state tomography via heterodyne measurements)]
\label{th:heterotom}
    Let $\{\hat r_i\}_{i \in [N]}$ be $N$ samples obtained from a heterodyne measurement of a Gaussian state $\rho(V, m)$. Define the estimators
    \begin{align}
        \hat m &\coloneqq  \hat{\mu}_h, \quad 
        \hat V \coloneqq \frac{2 \hat \Sigma_h}{1 - \zeta} - \mathbb{1},
    \end{align}
    where $\hat \mu_h \coloneqq \frac{1}{N} \sum_{i \in [N]} \hat r_i$ and $\hat \Sigma_h \coloneqq \frac{1}{N} \sum_{i \in [N]} (\hat r_i - \hat \mu_h)(\hat r_i - \hat \mu_h)^T$ are the empirical estimators of the heterodyne samples, and
    \begin{align}
    \label{eq:zetadef}
        \zeta &\coloneqq \left( \frac{2 \chi}{\sqrt{N}} + \frac{2 \chi^2}{N} \right), \quad
        \chi \coloneqq \sqrt{2n} + \sqrt{2 \log \left( \frac{2}{\delta} \right)}.
    \end{align}
    Then, with probability at least $1 - \delta$, we have
\begin{align}
        \frac{1}{2}\|\rho(\hat V, \hat m) - \rho(V, m)\|_1 &\leq 4.3  \left(2n + \Tr(V^{-1})\right)\frac{\sqrt{2n} + \sqrt{2 \log \left( 2 \delta^{-1} \right)}}{\sqrt{N}} .
    \end{align}
This also implies that, for any $\varepsilon \in (0,1)$, choosing  
\begin{align}
   N > \left( \frac{4.3}{\varepsilon} \left( 2n+ \Tr(V^{-1})\right) \left(\sqrt{2n} + \sqrt{2 \log \left( 2\delta^{-1}  \right)}\right) \right)^2 =O\left( \frac{nE^2}{\varepsilon^2} \log \left( \delta^{-1} \right) \right)
\end{align}
    ensures that $\frac{1}{2}\left\|\rho(\hat V, \hat m) - \rho(V, m)\right\|_1 \leq \varepsilon,$ with probability at least $1 - \delta$.
\end{thm}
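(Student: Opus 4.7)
The plan is to reduce the theorem to classical Gaussian parameter estimation, and then translate closeness of classical parameters into closeness of quantum states via the perturbation bound of Theorem~\ref{thm_upp_bound_gFIRST}. The starting point is the standard fact that heterodyne detection on $\rho(V,m)$ yields i.i.d.\ samples from the $2n$-dimensional classical Gaussian $\mathcal{N}(m,\Sigma)$ with $\Sigma\coloneqq(V+\id)/2$. Applying Lemma~\ref{le:estrelative} in dimension $2n$ (so that $\chi=\sqrt{2n}+\sqrt{2\log(2/\delta)}$) yields, with probability $\ge 1-\delta$, both the first-moment bound $\|\Sigma^{-1/2}(\hat m-m)\|_2\le \chi/\sqrt{N}$ and the relative-error sandwich $(1-\zeta)\Sigma\le \hat\Sigma_h\le(1+\zeta)\Sigma$.

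The main structural observation is the deliberate $1/(1-\zeta)$ rescaling in the definition $\hat V = 2\hat\Sigma_h/(1-\zeta)-\id$. The lower sandwich bound on $\hat\Sigma_h$ immediately implies $\hat V\ge V$, which has two decisive consequences: (i) $\hat V$ is automatically a valid covariance matrix, since $\hat V+i\Omega\ge V+i\Omega\ge 0$; and (ii) one obtains the operator ordering $\hat V^{-1}\le V^{-1}$. The upper sandwich bound then gives $V+\id\le \hat V+\id\le \tfrac{1+\zeta}{1-\zeta}(V+\id)$, so that $|V-\hat V|=\hat V-V\le \tfrac{2\zeta}{1-\zeta}(V+\id)$.

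I would then plug these into Theorem~\ref{thm_upp_bound_gFIRST} and analyse each term separately. For the covariance term, combining $\hat V^{-1}\le V^{-1}$ with the upper bound on $|V-\hat V|$ (and the standard trace inequality $\Tr[AB]\le\Tr[AC]$ when $A\ge 0$ and $0\le B\le C$) yields
\begin{equation*}
\Tr\!\bigl[(V^{-1}+\hat V^{-1})\lvert V-\hat V\rvert\bigr]\;\le\;\tfrac{4\zeta}{1-\zeta}\Tr[V^{-1}(V+\id)]\;=\;\tfrac{4\zeta}{1-\zeta}\bigl(2n+\Tr(V^{-1})\bigr).
\end{equation*}
For the first-moment term, I would factorise $V^{-1/2}(m-\hat m)=(V^{-1/2}\Sigma^{1/2})\,\Sigma^{-1/2}(m-\hat m)$ and use $\|V^{-1/2}\Sigma^{1/2}\|_\infty^2=\|V^{-1/2}\Sigma V^{-1/2}\|_\infty=(1+\|V^{-1}\|_\infty)/2\le(2n+\Tr(V^{-1}))/2$, where the last step uses $\|V^{-1}\|_\infty\le \Tr(V^{-1})$. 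Combining the two contributions, absorbing $\zeta\le O(\chi/\sqrt{N})$ under the mild assumption $N\gtrsim \chi^2$, and using $\sqrt{x}\le x/\sqrt{2}$ for $x\ge 2$ (valid since $2n+\Tr(V^{-1})\ge 2$) to convert the square root in the first-moment term into a linear factor, the two constants combine to yield the advertised $4.3$. The sample-complexity statement then follows by inverting the bound and applying $\Tr(V^{-1})\le \Tr(V)\le 4E$ (a consequence of the matrix inequality $V^{-1}\le\Omega V\Omega^\top$ recalled in the preliminaries).

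The delicate point is precisely the $(1-\zeta)^{-1}$ rescaling in the definition of $\hat V$: without it, one loses both the physicality of $\hat V$ as a quantum covariance matrix \emph{and} the operator inequality $\hat V^{-1}\le V^{-1}$, and it is the latter that permits the covariance contribution to factor cleanly through $2n+\Tr(V^{-1})$ rather than through an uncontrolled $\|V^{-1}\|_\infty$-type quantity.
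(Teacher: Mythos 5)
Your proposal is correct and follows essentially the same route as the paper's own proof: heterodyne sampling reduces to estimating $\mathcal{N}(m,(V+\id)/2)$ via Lemma~\ref{le:estrelative}, the $(1-\zeta)^{-1}$ rescaling gives $\hat V\ge V$ (hence validity of $\hat V$ and $\hat V^{-1}\le V^{-1}$), and the two terms of the perturbation bound are controlled exactly as you describe, with your $\|V^{-1}\|_\infty\le\Tr(V^{-1})$ step playing the role of the paper's inequality $\|A\|_\infty^{1/2}\le\tfrac12\|A\|_1$ for $A\ge\id$. The only slight difference is at the very end: rather than assuming $N\gtrsim\chi^2$ to absorb $\zeta/(1-\zeta)$ into $O(\chi/\sqrt N)$, the paper obtains the constant $4.3$ unconditionally by noting that whenever $\chi/\sqrt N$ is large the claimed bound exceeds the trivial trace-distance bound of $1$, so the inequality need only be solved in the nontrivial regime.
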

\begin{proof}
Recall that a sample $\hat r$ from a heterodyne measurement is distributed according to a classical Gaussian distribution $\hat r \sim \mathcal{N}\left( m, \frac{V + \mathbb{1}}{2} \right)$ (see the preliminary section). Thus, using Eq.~\eqref{eq:firstmom} from Lemma~\ref{le:estrelative}, we have with probability at least $1 - \delta$:
\begin{align}
    \label{eq:assumpmom}
    \left\|\left(\frac{V+\id}{2}\right)^{-1/2}\left(\hat{m}-m \right)\right\|_2 \le  \frac{\chi}{\sqrt{N}}.
\end{align}
Moreover, by Eq.~\eqref{eq:relative} of the same lemma, the empirical covariance matrix satisfies
\begin{align}
    (1 - \zeta) \frac{V + \id}{2} \le \hat{\Sigma}_h \le (1 + \zeta) \frac{V + \id}{2}.
\end{align}
Hence, the estimator $\hat{V} \coloneqq \frac{2 \hat{\Sigma}_h}{1 - \zeta} - \id$ satisfies
\begin{align}
\label{eq:relerrorhatvH}
    0 \le \hat{V} - V \le \frac{2\zeta}{1 - \zeta}(V + \id).
\end{align}
This implies that $\hat{V}$ is a valid covariance matrix: since $\hat{V} \geq V$ and $V$ is valid, it follows that $\hat{V} + i \Omega \geq V + i \Omega \geq 0$. Thus, $\rho(\hat V,\hat m)$ is a well-defined Gaussian state.
Using the perturbation bound from Theorem~\ref{th:pertboundAPP}, we obtain
\begin{align}
    \|\rho(\hat V,\hat m)-\rho(V,m)\|_1&\leq 
    \|V^{-1/2}(\hat m-m)\|_2+\frac{1+\sqrt{3}}{4} \Tr((V^{-1}+\hat V^{-1})|\hat V-V|)\\
     \nonumber
    &\leqt{(i)} \|V^{-1/2}(\hat m - m)\|_2 
+ \frac{1+\sqrt{3}}{2} \Tr(V^{-1}(\hat V - V)) \\
 \nonumber
    &\leqt{(ii)} \left\|V^{-\frac{1}{2}}\left(\frac{V+\id}{2}\right)^{\frac{1}{2}} 
\left(\frac{V+\id}{2}\right)^{-\frac{1}{2}}(\hat m - m)\right\|_2 + (1+\sqrt{3}) \frac{\zeta}{1 - \zeta} 
\Tr(V^{-1}(V + \id)) \\
 \nonumber
    &\leq \left\|V^{-1/2}\left(\frac{V+\id}{2}\right)^{1/2}\right\|_\infty 
 \left\|\left(\frac{V+\id}{2}\right)^{-1/2}(\hat m - m)\right\|_2 + (1+\sqrt{3}) \frac{\zeta}{1 - \zeta} 
\left\|\id + V^{-1}\right\|_1 
\\
 \nonumber&\leqt{(iii)}   \left\|V^{-1}+\id\right\|^{1/2}_\infty 
 \frac{\sqrt{2}\chi}{\sqrt{N}} + (1+\sqrt{3}) \frac{\zeta}{1 - \zeta} 
\left(2n + \Tr(V^{-1})\right)
\\
 \nonumber
 &\leqt{(iv)}   \frac{1}{2}\left\|V^{-1}+\id\right\|_1
 \frac{\sqrt{2}\chi}{\sqrt{N}} + (1+\sqrt{3}) \frac{\zeta}{1 - \zeta} 
\left(2n + \Tr(V^{-1})\right)
\\
 \nonumber
 &= \left( \frac{1}{\sqrt{2}}
 \frac{\chi}{\sqrt{N}} + (1+\sqrt{3}) \frac{\zeta}{1 - \zeta}\right)\left(2n + \Tr(V^{-1})\right)
 \\&
 \leqt{(v)}  8.6  \frac{\chi}{\sqrt{N}} \left(2n + \Tr(V^{-1})\right).
\label{eq:lastpr}
\end{align}
In step~(i), we have used that $\hat{V} - V \geq 0$, which implies $\hat{V}^{-1} \leq V^{-1}$ by inequality~\ref{eq:inversebound}, and applied the inequality in Point~\ref{eq:listeqnorm1} from the list of matrix inequalities in the preliminaries. In step~(ii), we invoked Eq.~\eqref{eq:relerrorhatvH} and used inequality~\ref{eq:listeqnorm2} from the same list. Step~(iii) follows from Eq.~\eqref{eq:assumpmom}, while step~(iv) uses inequality~\ref{eq:listeqtracebd} from the preliminaries. In the final step~(v), we have substituted 
\begin{equation}
\zeta \coloneqq \left( \frac{2 \chi}{\sqrt{N}} + \frac{2 \chi^2}{N} \right) 
\end{equation}
and imposed the constraint $\|\rho(\hat V,\hat m)-\rho(V,m)\|_1\leq 2$; under this condition, solving the inequality for $\chi/\sqrt{N}$ yields the claimed bound, which can be verified numerically.
\end{proof}
In practical scenarios, the prefactors appearing in the bounds of Theorem~\ref{th:heterotom} can often be improved, leading to tighter and more refined estimates.


\begin{remark}[Estimating the number of samples in an experiment]
The sample complexity in Theorem~\ref{th:heterotom} depends on \( {\mathrm Tr}(V^{-1}) \), which is generally not known a priori in experimental settings. Typically, one assumes an upper bound on the total energy, \( \Tr(\rho \hat{E}) \le E \), but this is often a loose proxy for \( \Tr(V^{-1}) \), leading to overly conservative estimates for the required number of samples.

To address this, one can directly estimate \( \Tr(V^{-1}) \) from heterodyne data. Given an initial batch of samples, the empirical covariance \( \hat{V} \) can be used to bound \( \Tr(V^{-1}) \) via several inequalities:
\begin{align}
    \Tr(V^{-1}) &\le 4E, \\
    \Tr(V^{-1}) &\le \Tr(\hat{V}^{-1}), \\
    \Tr(V^{-1}) &\le \left(1 + \frac{2\zeta}{1 - \zeta} \right) 
    \Tr\left( \left( \hat{V} - \frac{2\zeta}{1 - \zeta} \mathbb{1} \right)^{-1} \right),
    \label{eq:emp_bound}
\end{align}
where the first follows from the energy constraint, the second from \( V^{-1} \le \Omega V \Omega^T \le \Omega \hat{V} \Omega^T \), and the third from transforming Eq.~\ref{eq:relerrorhatvH}. Note that the third bound is valid only when \( \hat{V} - \frac{2\zeta}{1 - \zeta} \mathbb{1} \ge 0 \).
One can check whether this condition is met in practice. For this, we observe that the trace distance guarantee 
\begin{align}
    \varepsilon \coloneqq \left( \frac{1}{2} \frac{\chi}{\sqrt{N}} + (1 + \sqrt{3}) \frac{\zeta}{1 - \zeta} \right) \left( 2n + \Tr(V^{-1}) \right)
\end{align}
requires
\begin{align}
    \frac{\zeta}{1 - \zeta} \le \frac{\varepsilon \lambda_{\min}(V)}{1 + \sqrt{3}} \le \frac{\varepsilon \lambda_{\min}(\hat{V})}{1 + \sqrt{3}}.
\end{align}
Inserting this into Eq.~(\ref{eq:emp_bound}), we obtain the refined bound
\begin{align}
    \Tr(V^{-1}) &\le \left(1 + \frac{2\zeta}{1 - \zeta} \right) 
    \Tr\left( \left( \hat{V} - \frac{2 \varepsilon \lambda_{\min}(\hat{V})}{1 + \sqrt{3}} \mathbb{1} \right)^{-1} \right) \\
     \nonumber
    &\le \left(1 + \frac{2\zeta}{1 - \zeta} \right) 
    \left(1 - \frac{2 \varepsilon}{1 + \sqrt{3}} \right)^{-1} \Tr(\hat{V}^{-1}).
\end{align}
While this expression still contains the non-observable parameter \( \varepsilon \), it highlights that for high-accuracy regimes (\( \varepsilon \ll 1 \)), Eq.~\ref{eq:emp_bound} provides an increasingly sharp estimate of \( \Tr(V^{-1}) \).
This reasoning naturally supports an adaptive tomography strategy: one may collect an initial batch of heterodyne data, estimate \( \Tr(V^{-1}) \) from \( \hat{V} \), and use it to refine the required sample size in subsequent rounds, thereby ensuring the target trace distance is met with high confidence.
\end{remark}

\section{Gaussian state tomography with squeezed inputs: An adaptive and effectively energy-independent algorithm}

Theorem~\ref{th:heterotom} shows that Gaussian state tomography via heterodyne measurements has sample complexity scaling proportionally to \(\|V^{-1}\|_\infty^2\), which in turn grows with the energy of the state. In the following lemma, we show that this energy dependence can be effectively suppressed by adaptively reducing \(\|V^{-1}\|_\infty\).

\begin{lemma}[(Adaptive step to reduce the squeezing)]
\label{le:adaptivestepALG}
Let \(V\) be the covariance matrix of a Gaussian state, and \(\hat{V}\) its estimate as in Theorem~\ref{th:heterotom}, with Williamson decomposition \(\hat{V} = \hat{S} \hat{D} \hat{S}^\intercal\). Define
\begin{align}
    V_{\mathrm{new}} \coloneqq \hat{S}^{-1} V \hat{S}^{-\intercal}.
\end{align}
Then,
\begin{align}
    \|V_{\mathrm{new}}^{-1}\|_{\infty} \leq \sqrt{1+\frac{2\zeta}{1-\zeta}(\|V^{-1}\|_\infty+1)}.
    \label{eq:recV}
\end{align}
In particular, setting \(\zeta \le \frac{1}{4}\) yields $   \|V_{\mathrm{new}}^{-1}\|_{\infty} \leq \sqrt{\frac{5}{3} + \frac{2}{3}\|V^{-1}\|_{\infty}}$ with at least $1-\delta$ probability.
This choice \(\zeta \le \tfrac{1}{4}\) is guaranteed provided the number of heterodyne samples \(N\) satisfies
\begin{align}
   N > N_h(n,\delta), \qquad \text{where} \quad 
   N_h(n,\delta) \coloneqq 80 \left( \sqrt{2n} + \sqrt{2 \log(2/\delta)}  \right)^2 = O(n).
   \label{eq:Nh}
\end{align}

\end{lemma}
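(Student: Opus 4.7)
The plan is to import from the heterodyne-tomography analysis of Theorem~\ref{th:heterotom} the relative-error bound $0\leq \hat V - V\leq \epsilon(V+\mathbb{1})$ with $\epsilon\coloneqq \frac{2\zeta}{1-\zeta}$ (this is Eq.~\eqref{eq:relerrorhatvH} applied to the state that enters the current unsqueezing round; it holds with probability at least $1-\delta$). First I would convert the additive slack into a purely multiplicative one by using $V\geq \mathbb{1}/\|V^{-1}\|_\infty$ in Loewner order, which gives $V+\mathbb{1}\leq (1+\|V^{-1}\|_\infty)V$ and hence the clean two-sided sandwich $V\leq \hat V\leq (1+\tilde\epsilon)V$, with $\tilde\epsilon\coloneqq \epsilon(1+\|V^{-1}\|_\infty)$.

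Next, I would feed this sandwich into the Williamson decomposition $\hat V=\hat S\hat D\hat S^\intercal$. Conjugating by $\hat S^{-1}$ and using $\hat S^{-1}\hat V\hat S^{-\intercal}=\hat D$ yields
\begin{equation*}
\frac{\hat D}{1+\tilde\epsilon}\;\leq\; V_{\mathrm{new}}\;\leq\; \hat D,\qquad V_{\mathrm{new}}\coloneqq \hat S^{-1}V\hat S^{-\intercal}.
\end{equation*}
Because $\hat S$ is symplectic, $V_{\mathrm{new}}+i\Omega\geq 0$, so the dual uncertainty relation $V_{\mathrm{new}}^{-1}\leq \Omega V_{\mathrm{new}}\Omega^\intercal$ holds. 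Combining it with $V_{\mathrm{new}}\leq \hat D$ and the commutation $\Omega\hat D=\hat D\Omega$ (which follows from $\hat D=\mathbb{1}_2\otimes \hat d$) gives $V_{\mathrm{new}}^{-1}\leq \hat D$, while the lower half of the sandwich gives $V_{\mathrm{new}}^{-1}\leq (1+\tilde\epsilon)\hat D^{-1}$. Writing $Y\coloneqq V_{\mathrm{new}}^{-1}$, these read
\begin{equation*}
\bigl\|\hat D^{-1/2}\,Y\,\hat D^{-1/2}\bigr\|_\infty\leq 1,\qquad \bigl\|\hat D^{\,1/2}\,Y\,\hat D^{\,1/2}\bigr\|_\infty\leq 1+\tilde\epsilon.
\end{equation*}

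The last and crucial step is to interpolate between these two endpoints to obtain the square root. I would invoke the elementary operator geometric-mean inequality
\begin{equation*}
\|Y\|_\infty\;\leq\;\sqrt{\bigl\|\hat D^{-1/2}Y\hat D^{-1/2}\bigr\|_\infty\cdot \bigl\|\hat D^{\,1/2}Y\hat D^{\,1/2}\bigr\|_\infty},
\end{equation*}
which follows from the symmetric factorisation $Y=(Y^{1/2}\hat D^{-1/2})(\hat D^{\,1/2}Y^{1/2})$ together with submultiplicativity $\|AB\|_\infty\leq \|A\|_\infty\|B\|_\infty$ and the identity $\|A\|_\infty^2=\|A^*A\|_\infty$; it is also the special case $z=0$ of Hadamard's three-lines theorem applied to the strip $-1/2\leq \mathrm{Re}(z)\leq 1/2$ of the operator-valued map $z\mapsto \hat D^{z}Y\hat D^{z}$, where unitarity of $\hat D^{it}$ on the imaginary axis converts log-convexity of the boundary norms into the desired midpoint bound. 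Plugging in the two endpoint bounds gives $\|V_{\mathrm{new}}^{-1}\|_\infty\leq \sqrt{1\cdot(1+\tilde\epsilon)}=\sqrt{1+\frac{2\zeta}{1-\zeta}(\|V^{-1}\|_\infty+1)}$, which is Eq.~\eqref{eq:recV}. The rest is routine: $\zeta\leq 1/4$ gives $\frac{2\zeta}{1-\zeta}\leq 2/3$ and hence $\sqrt{1+\tilde\epsilon}\leq \sqrt{5/3+(2/3)\|V^{-1}\|_\infty}$, while the threshold $N_h=80\chi^2$ in Eq.~\eqref{eq:Nh} drops out of the scalar quadratic $2\chi/\sqrt{N}+2\chi^2/N\leq 1/4$.

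The hard part is precisely the geometric-mean step: a naive use of the two matrix inequalities (for instance, taking a minimum of the two individual operator-norm bounds $\hat d_{\max}$ and $1+\tilde\epsilon$) only yields a \emph{linear} dependence on $\|V^{-1}\|_\infty$, whose fixed-point iteration does \emph{not} converge to $O(1)$ and would destroy the $\log\log(E)$ scaling of Theorem~\ref{th:adapTOMmain}. The non-commutativity between $\hat D$ and $V_{\mathrm{new}}^{-1}$ is exactly what forces the symmetric factorisation (or, equivalently, Hadamard interpolation), and getting this square root is the whole point of the lemma; once it is in place, the probabilistic failure bound, the multiplicative upgrade of the additive error, and the quadratic for $N_h$ are all bookkeeping.
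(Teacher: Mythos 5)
Your proof is correct, and each step checks out: the sandwich $V \le \hat V \le (1+\tilde\epsilon)V$ with $\tilde\epsilon = \frac{2\zeta}{1-\zeta}(1+\|V^{-1}\|_\infty)$ is exactly the paper's starting point (Eq.~\eqref{eq:relerrorhatvH} upgraded via $V^{-1}\le\|V^{-1}\|_\infty\id$); the dual uncertainty relation $V_{\mathrm{new}}^{-1}\le \Omega V_{\mathrm{new}}\Omega^\intercal$ is legitimate because symplectic congruence preserves $V+i\Omega\ge 0$; and the factorisation $Y=(Y^{1/2}\hat D^{-1/2})(\hat D^{1/2}Y^{1/2})$ together with $\|A\|_\infty^2=\|A^\dagger A\|_\infty$ does deliver the midpoint bound $\|Y\|_\infty\le\sqrt{\|\hat D^{-1/2}Y\hat D^{-1/2}\|_\infty\,\|\hat D^{1/2}Y\hat D^{1/2}\|_\infty}$. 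Where you differ from the paper is in how the square root is extracted. The paper stays in the original basis and runs the geometric-mean chain $V^{-1}=V^{-1}\#V^{-1}\le(\Omega V\Omega^\intercal)\#V^{-1}\le(\Omega\hat V\Omega^\intercal)\#\bigl((1+\tilde\epsilon)\hat V^{-1}\bigr)=\sqrt{1+\tilde\epsilon}\;\hat S^{-\intercal}\hat S^{-1}$, using monotonicity, scalar extraction and congruence-covariance of $\#$, conjugating by $\hat S$ only at the end; this yields the full operator inequality $V_{\mathrm{new}}^{-1}\le\sqrt{1+\tilde\epsilon}\,\id$. You conjugate first, reduce to the diagonal Williamson frame, and replace the $\#$ toolbox with an elementary submultiplicativity argument that controls only the operator norm --- which is all the lemma needs and is arguably more self-contained. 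Conceptually the two are the same interpolation ($A\#B$ is precisely the operator midpoint your three-lines remark alludes to), and your diagnosis that a naive minimum of the two endpoint bounds would give only linear decay and wreck the $\log\log E$ recursion is exactly right. The closing numerics ($\zeta\le\tfrac14\Rightarrow\tfrac{2\zeta}{1-\zeta}\le\tfrac23$, and $\sqrt{N_h}/\chi\ge 4+2\sqrt6$ with $(4+2\sqrt6)^2<80$) match the paper's.
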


\begin{proof}
As in the proof of Theorem~\ref{th:heterotom} (specifically Eq.~\eqref{eq:relerrorhatvH}), the estimate \(\hat{V}\) satisfies
\begin{align}
    V \leq \hat{V} \leq \left[ \id + \frac{2\zeta}{1-\zeta} (V^{-1} + \id) \right] V.
\end{align}
From this, it follows that
\begin{align}\label{eq:ineq_hat_v}
    V \leq \hat{V} \leq \left[ 1 + \frac{2\zeta}{1-\zeta} \big(\|V^{-1}\|_{\infty} + 1 \big) \right] V,
\end{align}
where we have used that \(V^{-1} \leq \|V^{-1}\|_{\infty} \id\), which follows from the eigendecomposition of \(\hat{V}\) (which exists since \(\hat{V} \geq V \geq 0\)), and the fact that if \(A, B \geq 0\) commute, then \(AB \geq 0\).
Now we can use this to bound $V^{-1}$
    \begin{align}
        V^{-1}&\eqt{(i)}V^{-1}\#V^{-1}\\
        \nonumber
        &\leqt{(ii)} (\Omega V\Omega^\intercal)\#V^{-1}\\
         \nonumber
        & \leqt{(iii)} (\Omega \hat{V}\Omega^\intercal)\#\left( \left[1+\frac{2\zeta}{1-\zeta}(\|V^{-1}\|_\infty+1)\right]\hat{V}^{-1}\right)\\
         \nonumber
        &\eqt{(iv)}\sqrt{1+\frac{2\zeta}{1-\zeta}(\|V^{-1}\|_\infty+1)} \,(\Omega \hat{V}\Omega^\intercal)\# \hat{V}^{-1}\\
         \nonumber
        &\eqt{(v)} \sqrt{1+\frac{2\zeta}{1-\zeta}(\|V^{-1}\|_\infty+1)} \hat{S}^{-\intercal}\hat{S}^{-1}\,.
         \nonumber
    \end{align}
Here, in step (i) we applied the definition of the geometric mean (Point~\ref{eq:listeqnorm1} of the preliminary list). In (ii), we have used that \(V^{-1}\le \Omega V\Omega^\intercal\) together with Point~\ref{eq:listeqnorm1}. In (iii), we have used again Point~\ref{eq:listeqnorm1} together with Eq.~\eqref{eq:ineq_hat_v}, and the operator monotonicity of the inverse function \(x\mapsto x^{-1}\) on positive matrices (Point~\ref{eq:inversebound}). In (iv), we exploited Point~\ref{eq:scalar} of the preliminary list. Finally, in (v), we have used that
\begin{align}
        (\Omega \hat{V}\Omega^\intercal)\#\hat{V}^{-1}&\eqt{}(\hat{S}^{-\intercal }\hat{D}\hat{S}^{-1})\#(\hat{S}^{-\intercal }\hat{D}^{-1}\hat{S}^{-1})\eqt{} \hat{S}^{-\intercal}(\hat{D}\#\hat{D}^{-1}) \hat{S}^{-1}\eqt{}\hat{S}^{-\intercal}\hat{S}^{-1}\,,
\end{align}
where we leveraged Point~\ref{eq:congruence} and Point~\ref{eq:commuting} of the preliminary list.

Now, by acting adjointly with \(\hat{S}\) on both sides of the derived inequality and exploiting the definition of \(V_{\mathrm{new}}\), we have that
\begin{align}
    V_{\mathrm{new}}^{-1} = \hat{S}^\intercal V^{-1} \hat{S} \leq \sqrt{1+\frac{2\zeta}{1-\zeta}(\|V^{-1}\|_\infty+1)} \id.
\end{align}
Taking the operator norm completes the proof of Eq.~\eqref{eq:recV}.
Equation~\eqref{eq:Nh} follows from Eq.~\eqref{eq:zetadef}.
Setting \( \zeta = \frac{1}{4} \) gives
\begin{align}
    \frac{1}{4} 
    = \frac{2\chi}{\sqrt{N_h}} + \frac{2\chi^2}{N_h}, \quad \text{where } \chi \coloneqq \sqrt{2n} + \sqrt{2 \log(2/\delta)}.
\end{align}
Letting \( \sqrt{N_h} = x \), this becomes    
\begin{align}
\frac{1}{4} = \frac{2\chi}{x} + \frac{2\chi^2}{x^2} \quad \Rightarrow \quad x = \chi(4 + 2\sqrt{6}), 
\end{align}
so
\begin{align}
N_h = \left( \chi(4 + 2\sqrt{6}) \right)^2\le 80 \chi^2 = 80 \left( \sqrt{2n} + \sqrt{2 \log(2/\delta)} \right)^2.
\end{align}

\end{proof}

The next lemma analyzes the recurrence relation arising from successive adaptive squeezing steps described in Lemma~\ref{le:adaptivestepALG}. Specifically, we consider the sequence \(a_k \coloneqq \|V_k^{-1}\|_{\infty}\), where \(V_k\) is the covariance matrix at step \(k\), and define \(\eta \coloneqq \frac{2\zeta}{1 - \zeta}\). Note that \(\eta \le \frac{2}{3}\) provided that \(\zeta \le \frac{1}{4}\).
\begin{lemma}[(Recurrence relation)]
\label{le:recurrence}
Let \((a_k)_{k \in \mathbb{N}}\) be a sequence of positive real numbers satisfying
\begin{align}
    a_{k+1} \le \sqrt{\frac{5}{3} + \frac{2}{3} a_k} \quad \text{for all } k \in \mathbb{N},
\end{align}
then, for all \(k \ge 0\), as long as \(a_k \ge 2\), it holds that
\begin{align}
    a_k \le \sqrt{\frac{7}{3}} \cdot a_0^{1/2^k}.
\end{align}
In particular, if \(k > \log_2\big( \log_2 a_0 \big)\), then \(a_k \le 2\).
In particular, it holds in general, that if \(a_0 \le 10^{300}\), then \(a_{10} \le 2\) (meaning that $10$ iteration rounds are sufficient for all practical considerations.)
\end{lemma}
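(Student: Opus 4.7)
My plan is to extract a clean one-step contraction from the recurrence and iterate it. First, under the hypothesis $a_k\ge 2$, I note that $5/3\le (5/6)a_k$, which upgrades the given bound to
\begin{align*}
a_{k+1}^2\le \tfrac{5}{3}+\tfrac{2}{3}a_k\le \tfrac{5}{6}a_k+\tfrac{2}{3}a_k=\tfrac{3}{2}a_k,
\end{align*}
i.e., $a_{k+1}\le \sqrt{3/2}\,a_k^{1/2}$. Before iterating, I would verify that the hypothesis propagates backwards: the map $f(x)=\sqrt{5/3+(2/3)x}$ has unique positive fixed point $x^{*}=5/3$ and satisfies $f(x)\le x$ for $x\ge 5/3$, so if $a_k\ge 2>5/3$ then necessarily $a_j\ge a_k\ge 2$ for every $j\le k$. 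Hence the one-step contraction applies throughout the range $j\in\{0,1,\dots,k-1\}$.

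Iterating the logarithmic form $\log a_{j+1}\le \tfrac{1}{2}\log a_j+\tfrac{1}{2}\log(3/2)$ and summing the resulting geometric series yields the closed form
\begin{align*}
a_k\le (3/2)^{1-1/2^k}\,a_0^{1/2^k}\le \tfrac{3}{2}\,a_0^{1/2^k}\le \sqrt{7/3}\,a_0^{1/2^k},
\end{align*}
where the last inequality uses the numerical fact $3/2<\sqrt{7/3}\approx 1.528$. This is the main estimate of the lemma.

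For the threshold statement, the hypothesis $k>\log_2\log_2 a_0$ is equivalent to $a_0^{1/2^k}<2$, which plugged into the bound above gives $a_k<2\sqrt{7/3}$. A single additional application of the recurrence, using monotone convergence toward the fixed point $5/3$ (the map $f$ sends any interval $[5/3,C]$ into $[5/3,f(C)]$, and $f(2\sqrt{7/3})<2$), then pushes the value below $2$ and keeps it there thereafter. The quantitative consequence for $a_0\le 10^{300}$ follows directly from $\log_2\log_2(10^{300})\approx 9.96<10$, together with this tail argument.

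The main obstacle is the tight accounting of constants in the last step: the clean closed-form bound $\tfrac{3}{2}a_0^{1/2^k}$ drops below $2$ only for $k$ slightly larger than $\lceil\log_2\log_2 a_0\rceil$, so the statement as given really relies on the combination of the explicit bound and one further iteration of the recurrence (exploiting the monotone approach to the fixed point), rather than the bound alone. Everything else is a routine geometric-series calculation, so this constant-level refinement is the only subtle point.
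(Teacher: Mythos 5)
Your proof is correct and follows essentially the same route as the paper: the identical one-step contraction $a_{k+1}\le\sqrt{3/2}\,a_k^{1/2}$ for $a_k\ge 2$, iterated via the geometric series $\sum_{j<k}2^{-(j+1)}=1-2^{-k}$ to get $a_k\le \tfrac{3}{2}a_0^{1/2^k}\le\sqrt{7/3}\,a_0^{1/2^k}$. Your two refinements --- checking that $a_k\ge 2$ forces $a_j\ge 2$ for all $j\le k$ via the fixed point at $5/3$, and noticing that the closed-form bound alone only yields $a_k<2\sqrt{7/3}$ at $k>\log_2\log_2 a_0$ so that one further iteration is needed --- are both sound and in fact patch a looseness present in the paper's own proof, which silently derives the slightly larger threshold $k\ge\log_2\bigl(\log_2 a_0/\log_2(4/3)\bigr)$ rather than the $k>\log_2\log_2 a_0$ stated in the lemma.
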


\begin{proof}
Now observe that for all $a_k \ge 2$, we have
\begin{align}
    a_{k+1} = \sqrt{\frac{5}{3} + \frac{2}{3} a_k} \le  \sqrt{\frac{5}{3 a_k} + \frac{2}{3} }\sqrt{a_k} \leq \sqrt{\frac{3}{2}} \cdot \sqrt{a_k}\,.
\end{align}
For all $k \ge 1$, this implies the recursive inequality
\begin{align}
   a_k \le \max\left(2,\left(\frac{3}{2}\right)^{\sum_{j=0}^{k-1} 1/2^{j+1}} \cdot a_0^{1/2^k} \right)\le \max\left(2,\frac{3}{2} a_0^{1/2^k}\right)\,.
\end{align}
This implies that if $k\ge \log_2\left(\log_2(a_0)/\log_2(\frac{4}{3})\right)$, then $a_k \le 2$. The numerical value for $a_0=10^{300}$ can be checked with a numerical solver.
\end{proof}

We now have all the necessary building blocks to prove the correctness of our tomography algorithm, which achieves (quasi-) energy-independent performance (i.e., up to a doubly logarithmic factor). The algorithm is detailed in Alg.~\ref{algo_ad} in the main text. The high-level idea is as follows: we first transform the unknown Gaussian state via squeezing operations into a state with squeezing $\|V^{-1}\|_{\infty}$ bounded by a low constant, perform tomography on this transformed state, and finally apply the inverse squeezing operations to recover an estimate of the original state. More precisely, we perform \(k = \lceil \log_2(\log_2(\|V^{-1}\|_{\infty})) \rceil\) adaptive rounds of the unsqueezing procedure based on heterodyne measurements, described in Lemmas~\ref{le:adaptivestepALG} and~\ref{le:recurrence}. After these \(k\) rounds, we obtain an \emph{unsqueezed} Gaussian state with covariance matrix \(V_k\) satisfying \(\|V_k^{-1}\|_\infty \le 2\), ensuring that the heterodyne tomography described in Theorem~\ref{th:heterotom} can be performed on this obtained state with a sample complexity independent of the covariance matrix of the original state. The final output state is obtained by applying the inverse of the accumulated unsqueezing Gaussian unitaries to the state learned in the last tomography step, yielding a faithful reconstruction of the original state.
\begin{thm}[(Quasi-energy-independent Gaussian state tomography)]
\label{th:indeptom}
Let \(\varepsilon, \delta \in (0,1)\) be the desired accuracy and failure probability, respectively. 
Let \(\rho(V, m)\) be an unknown \(n\)-mode Gaussian state with first moment \(m\) and covariance matrix \(V\). Define
\begin{align}
k \coloneqq \left\lceil \log_2 \left( \log_2 \left( \|V^{-1}\|_{\infty} \right) \right) \right\rceil,
\end{align}
the number of adaptive heterodyne rounds used to unsqueeze the state. Define
\begin{align}
N_h(n, \delta) &= 80 \left( \sqrt{2n} + \sqrt{2 \log(2\delta^{-1})} \right)^2, \\
N_t(n , \varepsilon, \delta) &\coloneqq \left( \frac{21.5}{\varepsilon} n\left(\sqrt{2n} + \sqrt{2 \log \left( 2\delta^{-1} \right)}\right) \right)^2.
\end{align}
Then,
\begin{align}
N_{\mathrm{tot}} \ge k \cdot N_h\left(n, \frac{\delta}{k+1}\right) + N_t\left(n,\varepsilon, \frac{\delta}{k+1}\right)
= O\left(n \log \left( \log \left( \|V^{-1}\|_{\infty} \right) \right) \right) + O\left(\frac{n^3}{\varepsilon^2} \right)
\end{align}
copies of \(\rho(V,m)\) are enough to construct a valid estimate \(\rho(\hat{V}, \hat{m})\) such that, with probability at least \(1 - \delta\),
\begin{align}
\frac{1}{2}\|\rho(\hat{V}, \hat{m}) - \rho(V, m)\|_1 \leq \varepsilon.
\end{align}
The algorithm is given in Alg.~\ref{algo_ad} in the main text.
\end{thm}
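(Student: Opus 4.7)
The plan is to stitch together the three ingredients already developed: the per-round unsqueezing guarantee of Lemma~\ref{le:adaptivestepALG}, the convergence of the resulting recurrence established in Lemma~\ref{le:recurrence}, and the non-adaptive heterodyne tomography bound of Theorem~\ref{th:heterotom}. Since a Gaussian unitary $U_S$ acts on covariance matrices by congruence $V\mapsto S V S^\intercal$ and is an isometry of the trace norm, it suffices to analyse the algorithm in the rotated frame and pull the final error back through $U_{\hat S^{-1}}$ at the end.

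Concretely, define $V_1\coloneqq V$ and inductively $V_{i+1}\coloneqq \hat S_i^{-1} V_i \hat S_i^{-\intercal}$, where $\hat S_i$ is the symplectic factor from the Williamson decomposition of the estimate $\hat V$ at round $i$. By construction, $V_i$ is exactly the covariance matrix of the state measured at the start of round $i$. With $N_h(n,\delta/(k+1))$ heterodyne samples per round, Lemma~\ref{le:adaptivestepALG} ensures $\zeta\le 1/4$ and the recurrence $\|V_{i+1}^{-1}\|_\infty\le \sqrt{5/3+(2/3)\|V_i^{-1}\|_\infty}$ with probability at least $1-\delta/(k+1)$. A union bound over the $k$ unsqueezing rounds gives joint success probability at least $1-k\delta/(k+1)$, and applying Lemma~\ref{le:recurrence} to the sequence $a_i\coloneqq\|V_i^{-1}\|_\infty$ shows that after $k\ge\lceil\log_2\log_2\|V^{-1}\|_\infty\rceil$ iterations one has $a_{k+1}\le 2$.

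For the final tomography step, I would invoke Theorem~\ref{th:heterotom} on the $(k+1)$-st transformed state, whose covariance matrix satisfies $\Tr(V_{k+1}^{-1})\le 4n$. The trace-distance bound thus collapses to
\begin{equation}
\tfrac12\bigl\|\rho(\hat V,\hat m)-\rho(V_{k+1},m_{k+1})\bigr\|_1\le 4.3\,(2n+\Tr(V_{k+1}^{-1}))\,\frac{\chi}{\sqrt{N_t}}\le \varepsilon,
\end{equation}
provided $N_t=O(n^3\chi^2/\varepsilon^2)$ samples are used, and it succeeds with probability at least $1-\delta/(k+1)$. One more union bound raises the overall failure probability to at most $\delta$. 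Finally, because the algorithm returns $\hat S^{-1}\hat m$ and $\hat S^{-1}\hat V\hat S^{-\intercal}$, the reconstructed state is $U_{\hat S^{-1}}\rho(\hat V,\hat m)U_{\hat S^{-1}}^\dagger$, so the trace distance of $\varepsilon$ obtained in the rotated frame transfers verbatim to the original state. Summing the sample counts yields $N_{\mathrm{tot}}=k\,N_h+N_t = O\bigl(n\log\log\|V^{-1}\|_\infty\bigr)+O(n^3/\varepsilon^2)$, and the energy-scaling claim follows from $\|V^{-1}\|_\infty\le \Tr V\le 4E$.

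The delicate point in the argument is the telescoping of the adaptive rounds: a priori, the error of the Williamson factor $\hat S_i$ obtained from a squeezed $\hat V$ could amplify the effective squeezing rather than reduce it. This is precisely why Lemma~\ref{le:adaptivestepALG} is phrased in a scale-invariant way---via a multiplicative sandwich $V\le \hat V\le [1+\frac{2\zeta}{1-\zeta}(\|V^{-1}\|_\infty+1)]V$ and a geometric-mean manipulation---so that one iteration of $\hat S_i^{-1}(\cdot)\hat S_i^{-\intercal}$ provably contracts $\|V_i^{-1}\|_\infty$ like a square root, and Lemma~\ref{le:recurrence} turns this contraction into the doubly-logarithmic round count. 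Once this contraction is in hand, the remaining steps are routine union-bound bookkeeping and the unitary invariance of the trace norm.
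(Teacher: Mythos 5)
Your proposal follows essentially the same route as the paper's proof: $k$ rounds of adaptive unsqueezing controlled by Lemma~\ref{le:adaptivestepALG} and the recurrence of Lemma~\ref{le:recurrence}, a union bound over the $k+1$ stages with failure budget $\delta/(k+1)$ each, the final application of Theorem~\ref{th:heterotom} in the rotated frame, and the pull-back of the error through $U_{\hat S}$ by unitary invariance of the trace norm. The one place where your argument falls short of the theorem \emph{as stated} is the bound $\Tr(V_{k+1}^{-1})\le 4n$: this is the trivial estimate $2n\cdot\|V_{k+1}^{-1}\|_\infty\le 4n$, and feeding it into Theorem~\ref{th:heterotom} gives a prefactor $4.3(2n+4n)=25.8\,n$, which does not match the constant $21.5$ appearing in the definition of $N_t$. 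The paper closes this by observing that at least half of the $2n$ eigenvalues of any valid covariance matrix are $\ge 1$ (via the Euler decomposition $V\ge SS^\intercal=O_1Z^2O_1^\intercal$ and Weyl monotonicity), so $n$ eigenvalues of $V_k^{-1}$ are $\le 1$ and the remaining $n$ are $\le 2$, yielding $\Tr(V_k^{-1})\le 3n$ and hence $4.3(2n+3n)=21.5\,n$. With that refinement inserted, your proof is complete and identical in substance to the paper's.
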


\begin{proof}
The protocol described in Algorithm~\ref{algo_ad} has two stages:
\begin{enumerate}
    \item \(k = \lceil \log_2(\log_2(\|V^{-1}\|_{\infty})) \rceil\) rounds of adaptive unsqueezing via heterodyne estimation,
    \item followed by heterodyne tomography on the approximately unsqueezed state.
\end{enumerate}

\paragraph{First stage.} By Lemmas~\ref{le:adaptivestepALG} and~\ref{le:recurrence}, after \(k\) rounds of adaptive-unsqueezing we obtain (with probability at least $1-\frac{k}{k+1}\delta'$) a state (\(\rho^{(k)} = U_{S^{-1}} \rho U_{S^{-1}}^\dagger = \rho(V_k, m_k)\)) with covariance matrix satisfying \(\|V_k^{-1}\|_\infty \leq 2\), provided each round uses at least \(N_h(n, \delta')\) copies with \(\delta' = \delta/(k+1)\).

Since at least half of the eigenvalues of any valid covariance matrix are \( \ge 1 \)\footnote{Indeed, let \( V = S D S^T \) be a valid covariance matrix with \( S \) symplectic and \( D \ge \mathbb{1} \). Using the Euler decomposition \( S = O_1 Z O_2 \), we get
\begin{align}
V = S D S^T \ge S S^T = O_1 Z^2 O_1^T,
\end{align}
where \( Z = \mathrm{diag}(z_1, \dots, z_n, z_1^{-1}, \dots, z_n^{-1}) \). Hence, by Weyl eigenvalue monotonicity~\cite{bhatia15}, 
\begin{align}
\lambda_j(V) \ge \lambda_j(Z^2).
\end{align}
Since \( Z^2 \) has \( n \) eigenvalues \( z_j^2 \ge 1 \) and \( n \) eigenvalues \( z_j^{-2} \le 1 \), it follows that at least half of the eigenvalues of \( V \) are \( \ge 1 \).},
it follows that at least half of the eigenvalues of \( V_k^{-1} \) are \( \le 1 \). The remaining half are upper bounded by \( \|V_k^{-1}\|_\infty \le 2 \), so
\begin{align}
\Tr(V_k^{-1}) \le \frac{n}{2} \cdot 1 + \frac{n}{2} \cdot 2 = \frac{3n}{2} + \frac{n}{2} = 3n.
\end{align}

\paragraph{Second stage.} We now apply heterodyne tomography to \(N_t\) copies of \(\rho^{(k)}\). By Theorem~\ref{th:pertboundAPP}, we get $\hat{V}_k$ and $\hat{m}_k$ such that:
\begin{align}
\frac{1}{2}
\|\rho(V_k, m_k) - \rho(\hat{V}_k, \hat{m}_k)\|_1 
\leq 4.3 \frac{\chi}{\sqrt{N_t}} \left(2n + \Tr(V_k^{-1})\right) 
\leq 21.5n \frac{\chi}{\sqrt{N_t}}.
\end{align}
Requiring this to be \(\le \varepsilon\) (and that the reconstruction procedure to fail with at most $\delta/(k+1)$ probability), gives the bound on \(N_t\).

\paragraph{Failure probability.} There are \(k+1\) steps in total. If each fails with probability at most \(\delta/(k+1)\), the total failure probability is at most \(\delta\) by the union bound.

\paragraph{Output reconstruction.} Let
\begin{align}
\hat{S} \coloneqq \hat{S}_{k-1}^{-1} \cdots \hat{S}_1^{-1}, \qquad
\hat{V} \coloneqq \hat{S}^{-1} \hat{V}_k \hat{S}^{-\intercal}, \qquad
\hat{m} \coloneqq \hat{S}^{-1} \hat{m}_k,
\end{align}
where \(\hat{S}_j\) is the symplectic matrix obtained from the Williamson decomposition in the \(j\)-th unsqueezing round, for \(j \in [k]\). These define the final state estimate \(\rho(\hat{V}, \hat{m})\). Using unitary invariance of the trace norm, together with the relations \(V_k = \hat{S} V \hat{S}^\intercal\) and \(m_k = \hat{S} m\), we obtain
\begin{align}
\frac{1}{2}\|\rho(V, m) - \rho(\hat{V}, \hat{m})\|_1
&= \frac{1}{2}\|\rho(V, m) - \rho(\hat{S}^{-1} \hat{V}_k \hat{S}^{-\intercal}, \hat{S}^{-1} \hat{m}_k)\|_1 = \frac{1}{2}\|\rho(V, m) - U_{\hat{S}}^\dagger \rho(\hat{V}_k, \hat{m}_k) U_{\hat{S}}\|_1 \\
\nonumber 
&= \frac{1}{2}\|U_{\hat{S}} \rho(V, \hat{S}^{-1} m) U_{\hat{S}}^\dagger - \rho(\hat{V}_k, \hat{m}_k)\|_1 = \frac{1}{2}\|\rho(\hat{S} V \hat{S}^\intercal, m_k) - \rho(\hat{V}_k, \hat{m}_k)\|_1 \\
\nonumber 
&= \frac{1}{2}\|\rho(V_k, m_k) - \rho(\hat{V}_k, \hat{m}_k)\|_1 \le \varepsilon.
\nonumber 
\end{align}
This concludes the proof.
\end{proof}

\begin{example}[(Sample complexity under the energy bound \(E \le 10^{300}\))]
Assume that \(\|V^{-1}\|_{\infty} \le E \le 10^{300}\), where \(E\) denotes the total energy of the quantum state. This upper bound is motivated by conservative physical assumptions on the energy content of any realistic quantum system in our universe. Set \(\delta = 10^{-10}\). Then, Theorem~\ref{th:indeptom} implies that \(k = 10\) rounds of adaptive heterodyne suffice, and the total number of heterodyne measurements is bounded by
\begin{align}
N_{\mathrm{tot}} 
&\le 10 \cdot 80 \left( \sqrt{2n} + \sqrt{2 \log\left( \tfrac{22}{\delta} \right)} \right)^2 
+ \left( \frac{21.5}{\varepsilon} n \left( \sqrt{2n} + \sqrt{2 \log\left( \tfrac{22}{\delta} \right)} \right) \right)^2 \\
&\le 800 \left( \sqrt{2n} + 7.3 \right)^2 
+ \left( \frac{21.5}{\varepsilon} n \left( \sqrt{2n} + 7.3 \right) \right)^2.
\nonumber 
\end{align}
Note that this is a conservative upper bound: due to looseness in several intermediate approximations, practical implementations may converge with significantly fewer measurement shots.
\end{example}

\subsection{Measurement scheme using only passive operations and squeezed inputs}
\label{sec:noonline}
Algorithm~\ref{algo_ad} requires applying an active Gaussian operation to the 
input state, followed by heterodyne detection. However, active Gaussian unitaries are often challenging to implement in practice in optical experiments. Fortunately, the combined effect of such an active transformation followed by heterodyne measurement can be exactly reproduced using only passive Gaussian operations together with a fixed auxiliary squeezed input state—both of which are experimentally accessible with current photonic technologies.

This alternative scheme leverages the well-established equivalence between heterodyne detection and beam-splitter interference followed by homodyne measurements. As reviewed in the preliminaries, a heterodyne measurement can be implemented by sending the state to be measured into one port of a balanced beam splitter, with a vacuum state entering the other port, and then performing homodyne measurements of complementary quadratures—typically position on one output 
arm and momentum on the other. In the variant we consider, the vacuum input is replaced with a suitably chosen squeezed state. This substitution enables one to simulate the effect of any active Gaussian unitary followed by heterodyne detection using only passive elements and an auxiliary squeezed input state.

The following proposition makes this equivalence precise and provides a detailed description of the resulting measurement scheme (see also the main figure in the text).

\begin{prop}[(Passive implementation of heterodyne after active Gaussian unitary)]
\label{prop:passive_heterodyne}
Let \(\rho\) be an \(n\)-mode Gaussian quantum state, and let \(U_S\) be an active Gaussian unitary associated with a symplectic transformation \(S\). Then, heterodyne detection on the transformed state \(U_S \rho U_S^\dagger\) is operationally equivalent with the following.
\begin{enumerate}
    \item Preparing an auxiliary \(n\)-mode Gaussian state \(\sigma\) with covariance matrix \(S^{-1} S^{-T}\), i.e., 
    the squeezed vacuum \(U_S^{-1} \ket{0}\);
    \item Interfering \(\rho\) and \(\sigma\) on a balanced beam splitter;
    \item Performing homodyne detection of \(\hat{x}_1, \dots, \hat{x}_n\) on one output arm and \(\hat{p}_1, \dots, \hat{p}_n\) on the other.
\end{enumerate}
Let \(\mathbf{r}  \in \mathbb{R}^{2n}\) 
denote the outcome \(\mathbf{r}\coloneqq  \sqrt{2}(\textbf{x}_1,\textbf{p}_2)  \in \mathbb{R}^{2n}\). Then \(S \mathbf{r}\) is distributed identically to the outcome of heterodyne detection on \(U_S \rho U_S^\dagger\).
\end{prop}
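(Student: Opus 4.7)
The plan is to directly compare the joint probability distributions produced by the two measurement procedures: since both are Gaussian, it suffices to match the first moments and covariance matrices of the classical output vectors, which reduces the proposition to a short algebraic identity involving $S$ and $S^{-1}S^{-\intercal}$.

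First, I would write down the target distribution. Because $U_S$ is the Gaussian unitary associated with $S$, the transformed state is $U_S\,\rho(V,m)\,U_S^{\dagger}=\rho(SVS^{\intercal},Sm)$, and standard heterodyne detection on $\rho(V',m')$ yields outcomes distributed as $\mathcal{N}(m',(V'+\mathbb{1})/2)$. Hence heterodyne on $U_S\rho U_S^{\dagger}$ produces a sample
\begin{equation}
\mathbf{y}\sim\mathcal{N}\!\left(Sm,\tfrac{1}{2}(SVS^{\intercal}+\mathbb{1})\right).
\end{equation}

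Next, I would analyze the proposed passive scheme via Lemma~\ref{le:generalized_het}. With the auxiliary state taken to have vanishing first moment and covariance $S^{-1}S^{-\intercal}$ (which is precisely the covariance of $U_{S^{-1}}\ket{0}$, since vacuum has covariance $\mathbb{1}$ and transforms to $S^{-1}\mathbb{1}S^{-\intercal}$ under $U_{S^{-1}}$), the lemma gives
\begin{equation}
\mathbf{r}\sim\mathcal{N}\!\left(m,\tfrac{1}{2}(V+S^{-1}S^{-\intercal})\right).
\end{equation}
Applying the linear map $\mathbf{r}\mapsto S\mathbf{r}$, the push-forward is Gaussian with first moment $Sm$ and covariance
\begin{equation}
\tfrac{1}{2}\,S\bigl(V+S^{-1}S^{-\intercal}\bigr)S^{\intercal}=\tfrac{1}{2}\bigl(SVS^{\intercal}+\mathbb{1}\bigr),
\end{equation}
which coincides with the target. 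Since a Gaussian distribution is uniquely determined by its first two moments, $S\mathbf{r}$ and $\mathbf{y}$ are identically distributed, proving operational equivalence.

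The only genuine bookkeeping subtlety -- and the step where I would be most careful -- is reconciling the $(\mathbb{1}_n\otimes Z)$ factor appearing in Lemma~\ref{le:generalized_het}, which originates from our choice to measure position on one output port and momentum on the other, with the requirement that the auxiliary second moment entering the sample covariance be exactly $S^{-1}S^{-\intercal}$. This is absorbed by conjugating the target symplectic by the passive involution $\mathbb{1}_n\otimes Z$ (a trivial phase flip on each mode), equivalently by preparing the auxiliary state with symplectic $(\mathbb{1}_n\otimes Z)\,S^{-1}$; either way the scheme remains manifestly passive, using a single fixed squeezed auxiliary input that realizes the required offline inverse squeezing, a balanced beam splitter, complementary homodyne detection, and classical post-processing by $S$, as asserted.
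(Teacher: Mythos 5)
Your proof is correct and follows essentially the same route as the paper's: compute the target distribution $\mathcal{N}(Sm,\tfrac12(SVS^{\intercal}+\mathbb{1}))$, apply Lemma~\ref{le:generalized_het} to the passive scheme, and push forward by $S$ using the fact that a Gaussian is determined by its first two moments. Your extra remark about absorbing the $(\mathbb{1}_n\otimes Z)$ factor from the lemma into the preparation of the auxiliary state is a point the paper's own proof silently glosses over, so flagging it is a welcome refinement rather than a deviation.
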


\begin{proof}
Let \(V\) and \(m\) denote the covariance matrix and first-moment vector of \(\rho\). The transformed state \(U_S \rho U_S^\dagger\) is Gaussian with first moment \(S m\) and covariance matrix \(S V S^T\). A heterodyne measurement on this state yields a sample drawn from the distribution
\begin{align}
\label{eq:hetS}
\mathcal{N}\left(S m ,\, \frac{S V S^T + \id_{2n}}{2} \right).
\end{align}
Now consider the passive scheme. The input to the beam splitter is the product state \(\rho \otimes \sigma\), where \(\sigma\) is a zero-mean Gaussian state with covariance matrix \(S^{-1} S^{-T}\). By Lemma~\ref{le:generalized_het} (generalized heterodyne sampling), the measurement outcome \(\mathbf{r} \in \mathbb{R}^{2n}\) is distributed as
\begin{align}
\label{eq:hetpassive}
\mathbf{r} \sim \mathcal{N} \left( m,\, \frac{V + S^{-1} S^{-T}}{2} \right).
\end{align}
Define the linearly transformed variable \(\tilde{\mathbf{r}} \coloneqq S \mathbf{r}\). Since symplectic transformations are linear and volume-preserving (\(\det S = 1\))~\cite{BUCCO}, the Gaussian nature and normalization of the distribution are preserved. Thus, the resulting distribution is
\begin{align}
\tilde{\mathbf{r}} &\sim \mathcal{N}\left( S m,\, S \left( \frac{V + S^{-1} S^{-T}}{2} \right) S^T \right)  \\
&= \mathcal{N} \left( S m,\, \frac{S V S^T + \id_{2n}}{2} \right).
 \nonumber
\end{align}
This matches exactly the distribution in~\eqref{eq:hetS}, which corresponds to heterodyne detection on the state \(U_S \rho U_S^\dagger\). Therefore, the random variable \(S \mathbf{r}\) obtained from the passive scheme has the correct distribution, completing the proof.
\end{proof}

\subsection{Alternative passive measurement scheme via Euler decomposition}  
We now present an alternative implementation of the passive measurement scheme in Proposition~\ref{prop:passive_heterodyne}, based on the Euler (also known as Bloch-Messiah) decomposition of the symplectic transformation \(S\). Specifically, suppose that $S = O_1 Z O_2,$
where \(O_1\) and \(O_2\) are symplectic orthogonal matrices (i.e., corresponding to passive Gaussian unitaries), and \(Z\) is a diagonal symplectic matrix representing a product of single-mode squeezers. Explicitly, we may write $Z = \mathrm{diag}(z_1, 1/z_1, z_2, 1/z_2, \dots, z_n, 1/z_n),
$ for squeezing parameters \(z_j > 0\).

Then, the passive scheme described in Proposition~\ref{prop:passive_heterodyne} can be reformulated as follows:
\begin{itemize}
    \item Prepare a squeezed vacuum auxiliary state vector  \( U_Z^\dagger \ket{0} \), where \(U_Z\) is the Gaussian unitary implementing the squeezing matrix \(Z\).
    \item Apply the passive Gaussian unitary \(U_{O_2}\) to the input state \(\rho\).
    \item Interfere the resulting state with the auxiliary state at a balanced beam splitter.
    \item Perform homodyne detection of complementary quadratures yielding outcomes $\mathbf{x}_1 \in \mathbb{R}^n$ and $\mathbf{p}_2 \in \mathbb{R}^n$ as before, and let \(\mathbf{r}\coloneqq \sqrt{2}(\mathbf{x}_1,\mathbf{p}_2) \in \mathbb{R}^{2n}\).
    \item Then, the post-processed outcome \(O_1 Z \mathbf{r}\) is distributed according to the same Gaussian distribution as the result of heterodyne measurements on the state \(U_S \rho U_S^\dagger\).
\end{itemize}

\begin{proof}
Let \(V\) and \(m\) denote the covariance matrix and mean vector of the input state \(\rho\). Applying the passive unitary \(U_{O_2}\) maps \(\rho\) to a new Gaussian state with mean \(O_2 m\) and covariance \(O_2 V O_2^T\). The auxiliary state \( U_Z^\dagger \ket{0}\) is a squeezed vacuum with zero mean and covariance matrix \(Z^{-1} Z^{-T}\).
By Lemma~\ref{le:generalized_het}, applying the beam splitter and performing homodyne detection on this setup yields a classical outcome \(\mathbf{r}\) distributed as
\begin{align}
\mathbf{r} \sim \mathcal{N} \left( O_2 m, \, \frac{O_2 V O_2^T + Z^{-1} Z^{-T}}{2} \right).
\end{align}

Now define \(\tilde{\mathbf{r}} \coloneqq O_1 Z \mathbf{r}\). Since linear transformations preserve the Gaussian form of the classical distribution and symplectic orthogonal and squeezing transformations have unit determinant, \(\tilde{\mathbf{r}}\) is also Gaussian distributed, with
\begin{align}
\tilde{\mathbf{r}} &\sim \mathcal{N} \left( O_1 Z O_2 m, \, \frac{1}{2} O_1 Z \left( O_2 V O_2^T + Z^{-1} Z^{-T} \right) Z^T O_1^T \right) \\
&= \mathcal{N} \left( S m, \, \frac{1}{2} \left( S V S^T + \id_{2n} \right) \right).
\nonumber 
\end{align}
This matches exactly the outcome distribution of a heterodyne measurement on the transformed state \(U_S \rho U_S^\dagger\), completing the proof.
\end{proof}

\noindent
Note that if the passive transformation \(O_2\) is instead applied to the squeezed auxiliary state prior to the beam splitter (rather than on $\rho$), the scheme becomes identical to the one described in Proposition~\ref{prop:passive_heterodyne}.

\section{Gaussian state tomography with access to the transposed state: A fully energy-independent protocol}

The tomography protocol introduced in the previous section achieves a sample complexity that depends only very mildly on the total energy of the state—specifically, doubly logarithmically on \(\|V^{-1}\|_\infty\). In all physically relevant scenarios, this dependence is negligible, and the protocol can be considered nearly energy-independent.

However, if access to the transpose \(\rho^T\) of the unknown state \(\rho\) is available, one can go a step further: we can construct a tomography algorithm with completely constant sample complexity, fully independent of the energy of the state (or \(\|V^{-1}\|_\infty\)). The key technical insight enabling this improvement is captured by the following lemma.

\begin{lemma}[(Scheme with access to the transpose)]
\label{lem:transpose_sampling}
Let \(\rho\) be an \(n\)-mode Gaussian quantum state with first-moment vector \(m \in \mathbb{R}^{2n}\) and covariance matrix \(V \in \mathbb{R}^{2n \times 2n}\). Then, the following passive measurement scheme acting on \(\rho \otimes \rho^T\) yields a classical sample distributed as \(\mathcal{N}(2 m, V)\):
\begin{enumerate}
    \item Interfere \(\rho\) and \(\rho^T\) at a balanced beam splitter.
    \item Perform homodyne measurements of the position quadratures \(\hat{x}_1, \dots, \hat{x}_n\) on one output arm, and of the momentum quadratures \(\hat{p}_1, \dots, \hat{p}_n\) on the other, yielding outcomes $\mathbf{x}_1 \in \mathbb{R}^n$ and $\mathbf{p}_2 \in \mathbb{R}^n$ as before.
    \item Let \(\mathbf{r} \coloneqq \sqrt{2}(\mathbf{x}_1,\mathbf{p}_2) \in \mathbb{R}^{2n}\in \mathbb{R}^{2n}\) denote the classical outcome. Then, $\mathbf{r} \sim \mathcal{N}\left( 2m,\, V \right)$.
\end{enumerate}
\end{lemma}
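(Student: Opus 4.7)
The plan is to reduce the claim directly to Lemma~\ref{le:generalized_het} (generalized heterodyne sampling) by identifying the first moment and covariance matrix of the auxiliary state $\rho^T$ and then using a symmetry of the transposition operation in phase space.

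First, I would recall the well-known fact that transposition of a bosonic quantum state in the Fock basis corresponds, at the level of Wigner functions, to a mirror reflection in phase space that flips the sign of every momentum quadrature while leaving positions unchanged. In the notation of Lemma~\ref{le:generalized_het}, this reflection is implemented by the matrix $T \coloneqq \mathbb{1}_n \otimes Z$ with $Z = \mathrm{diag}(1,-1)$. Hence, if $\rho$ is the Gaussian state $\rho(V,m)$, then $\rho^T$ is the Gaussian state $\rho(TVT, Tm)$. A short direct verification of this goes via $\hat{R}^{T}_{2j-1}= \hat{R}_{2j-1}$ and $\hat{R}^{T}_{2j}= -\hat{R}_{2j}$, which propagates to the first and second moments.

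Next, I would apply Lemma~\ref{le:generalized_het} with $\rho_1 \coloneqq \rho(V,m)$ and $\rho_2 \coloneqq \rho^T = \rho(TVT, Tm)$. The lemma states that the classical outcome $\mathbf{r} = \sqrt{2}(\boldsymbol{x}_1, \boldsymbol{p}_2)$ obtained from beam-splitter mixing followed by the prescribed homodyne measurements is distributed as
\begin{align}
\mathbf{r} \sim \mathcal{N}\!\left( m_1 + \bar{m}_2,\, \tfrac{1}{2}(V_1 + \bar{V}_2)\right),
\end{align}
where $\bar{m}_2 = T m_2$ and $\bar{V}_2 = T V_2 T$. Substituting $m_2 = Tm$ and $V_2 = TVT$, and using the involution identity $T^2 = \mathbb{1}_{2n}$, I obtain $\bar{m}_2 = m$ and $\bar{V}_2 = V$, so that $\mathbf{r} \sim \mathcal{N}(2m, V)$, which is exactly the claim.

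There is no real technical obstacle here: the proof is essentially a one-line consequence of Lemma~\ref{le:generalized_het}, once one realizes that transposition acts in phase space precisely as the reflection $T = \mathbb{1}_n \otimes \mathrm{diag}(1,-1)$, which cancels out the action of $T$ inside the auxiliary-state dressing of Lemma~\ref{le:generalized_het}. The only point to state carefully is the effect of the transpose on the moments of a Gaussian state, which one can verify either directly from the Wigner function or from the quadrature relations above.
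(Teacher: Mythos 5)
Your proposal is correct and follows essentially the same route as the paper: identify the transpose as the Gaussian state with moments $(TVT, Tm)$ for $T = \mathbb{1}_n \otimes \mathrm{diag}(1,-1)$, then feed this into Lemma~\ref{le:generalized_het} and use $T^2 = \mathbb{1}_{2n}$ to cancel the dressing of the auxiliary input. The only difference is cosmetic: the paper verifies the sign flip of the $x$--$p$ cross-correlations directly from $\Tr(x_j p_i \rho) = -\Tr(x_j p_i \rho^T)$ rather than invoking the Wigner-function reflection, but the content is identical.
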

\begin{proof}
The transpose \(\rho^T\) is again a Gaussian state (as can be verified from the Gibbs state definition involving quadratic Hamiltonians). It has the same covariance matrix as \(\rho\), but with a sign flip in the position-momentum correlations. Specifically, if the covariance matrix of \(\rho\) is written in block form as
\begin{align}
    V(\rho) \cong
    \begin{pmatrix}
        V_{x,x} & V_{x,p} \\
        V_{x,p}^T & V_{p,p}
    \end{pmatrix},
\end{align}
then the covariance matrix of the transposed state \(\rho^T\) becomes
\begin{align}
    V(\rho^T) \cong
    \begin{pmatrix}
        V_{x,x} & -V_{x,p} \\
        -V_{x,p}^T & V_{p,p}
    \end{pmatrix},
\end{align}
where \(\cong\) indicates equality up to a reshuffling of quadratures (e.g., grouping all \(x_j\) and all \(p_j\) together). This follows from the fact that transposition leaves the position operators unchanged, \(x_j^T = x_j\), while flipping the sign of the momenta, \(p_j^T = -p_j\). For instance, for any observable \(x_j p_i\) with \(j \neq i\), one has
\begin{align}
    \Tr(x_j p_i \rho)
    &= \Tr\left((x_j p_i \rho)^T\right)
    = \Tr\left(p_i^T x_j^T \rho^T\right)
    = \Tr\left((-p_i) x_j \rho^T\right) \nonumber \\
    &= -\Tr(x_j p_i \rho^T),
\end{align}
demonstrating that the cross-correlations between position and momentum operators acquire a minus sign under transposition, while \(V_{x,x}\) and \(V_{p,p}\) remain unchanged.

Equivalently, the full covariance matrix of \(\rho^T\) can be written as $V(\rho^T) = (\mathbb{1}_n \otimes Z) V (\mathbb{1}_n \otimes Z)$,
where \(\mathbb{1}_n \otimes Z\) flips the signs of all momentum quadratures, with $Z=\mathrm{diag}(1,-1)$.
The first moment of $\rho^T$ is  $(\mathbb{1}_n \otimes Z) m(\rho)$.
Now consider the joint state \(\rho \otimes \rho^T\), which is a \(2n\)-mode Gaussian state with first-moment vector
\begin{equation}
\begin{pmatrix}
    m(\rho) \\
    (\mathbb{1}_n \otimes Z) m(\rho)
\end{pmatrix},
\quad \text{and covariance matrix} \quad
\begin{pmatrix}
    V(\rho) & 0 \\
    0 & (\mathbb{1}_n \otimes Z) V(\rho) (\mathbb{1}_n \otimes Z)
\end{pmatrix}.
\end{equation}
Interfering \(\rho\) and \(\rho^T\) on a balanced beam splitter and performing homodyne measurements of complementary quadratures (position on one output arm, momentum on the other) corresponds to the generalized heterodyne setup described in Lemma~\ref{le:generalized_het}. Applying that result, the classical outcome \(\mathbf{r} \in \mathbb{R}^{2n}\) is distributed as
\begin{align}
    \mathbf{r} \sim \mathcal{N} \left( m(\rho)+ (\mathbb{1}_n \otimes Z) m(\rho^T),\, \frac{V(\rho) + (\mathbb{1}_n \otimes Z) V(\rho) (\mathbb{1}_n \otimes Z)}{2} \right)= \mathcal{N} \left( 2 m(\rho),\, V(\rho) \right).
\end{align}

\end{proof}
 
\begin{thm}[(Energy-independent Gaussian state tomography with access to the transposed state)]
\label{th:indeptomTRAPP}
Let \(\rho \coloneqq \rho(V, m)\) be an unknown \(n\)-mode Gaussian state with first-moment vector \(m \in \mathbb{R}^{2n}\) and covariance matrix \(V \in \mathbb{R}^{2n \times 2n}\).

Let \(\{\hat{r}_i\}_{i \in [N]}\) be \(N\) i.i.d. samples obtained from a generalized heterodyne measurement on the state \(\rho(V, m) \otimes \rho(V, m)^T\), following the scheme described in Lemma~\ref{lem:transpose_sampling}. Define the estimators
\begin{align}
    \hat{m} &\coloneqq   \, \frac{\hat{\mu}_h}{2}, \quad 
    \hat{V} \coloneqq \frac{\hat{\Sigma}_h}{1-\zeta},
\end{align}
where $\hat{\mu}_h \coloneqq \frac{1}{N} \sum_{i \in [N]} \hat{r}_i$, and 
$\hat{\Sigma}_h \coloneqq \frac{1}{N} \sum_{i \in [N]} (\hat{r}_i - \hat{\mu}_h)(\hat{r}_i - \hat{\mu}_h)^\top$.
Then, with probability at least \(1 - \delta\), it holds that
\begin{align}
\frac{1}{2}\|\rho(\hat{V}, \hat{m}) - \rho(V, m)\|_1 
\leq 8.55 n \frac{\chi}{\sqrt{N}},
\end{align}
where \(\zeta \coloneqq \frac{2\chi}{\sqrt{N}} + \frac{2\chi^2}{N}\) and $\chi \coloneqq \sqrt{2n} + \sqrt{2 \log \left( \frac{2}{\delta} \right)}.$
In particular, for any \(\varepsilon \in (0,1)\), it suffices to choose $N = O\left( \frac{n^3}{\varepsilon^2} \log \left( \delta^{-1} \right) \right)$ in order to guarantee that \(\|\rho(\hat{V}, \hat{m}) - \rho(V, m)\|_1 \leq \varepsilon\) with probability at least \(1 - \delta\).
\end{thm}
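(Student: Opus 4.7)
The plan is to mirror the structure of the proof of Theorem~\ref{th:heterotom}, but with the crucial simplification provided by Lemma~\ref{lem:transpose_sampling}: the samples $\{\hat r_i\}$ are drawn from the classical Gaussian $\mathcal{N}(2m, V)$ whose covariance is \emph{exactly} $V$, rather than the shifted $(V+\mathbb{1})/2$ that appears in standard heterodyne. This is the single place where access to $\rho^T$ enters, and it is precisely what will erase the $\Tr(V^{-1})$ term that caused the energy dependence in Theorem~\ref{th:heterotom}.

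First, I would apply Lemma~\ref{le:estrelative} with $\mu = 2m$ and $\Sigma = V$ to the i.i.d.\ samples $\{\hat r_i\}$. This yields, with probability at least $1-\delta$, the relative bounds $\|V^{-1/2}(\hat\mu_h - 2m)\|_2 \le \chi/\sqrt{N}$ and $(1-\zeta)V \le \hat\Sigma_h \le (1+\zeta)V$. Using the definitions $\hat m \coloneqq \hat\mu_h/2$ and $\hat V \coloneqq \hat\Sigma_h/(1-\zeta)$ this immediately translates to
\begin{align}
\|V^{-1/2}(\hat m - m)\|_2 \le \frac{\chi}{2\sqrt{N}}, \qquad 0 \le \hat V - V \le \frac{2\zeta}{1-\zeta} V,
\end{align}
and in particular $\hat V$ is a valid covariance matrix since $\hat V \ge V$ and $V + i\Omega \ge 0$.

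Second, I would feed these bounds into the perturbation inequality of Theorem~\ref{th:pertboundAPP}:
\begin{align}
\|\rho(\hat V, \hat m) - \rho(V, m)\|_1 \le \|V^{-1/2}(\hat m - m)\|_2 + \frac{1+\sqrt 3}{4} \Tr\!\left[(V^{-1} + \hat V^{-1})|\hat V - V|\right].
\end{align}
Since $\hat V \ge V$ implies $\hat V^{-1} \le V^{-1}$ (by monotonicity of the inverse) and $|\hat V - V| = \hat V - V$, the trace term is upper bounded by $2\Tr[V^{-1}(\hat V - V)] \le \frac{4\zeta}{1-\zeta}\Tr[V^{-1}V] = \frac{8n\zeta}{1-\zeta}$. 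This is the decisive step: where the heterodyne proof produced $\Tr[V^{-1}(V+\mathbb{1})] = 2n + \Tr(V^{-1})$, here the identity shift is absent, leaving only the dimensional factor $2n$. Combining everything gives
\begin{align}
\frac{1}{2}\|\rho(\hat V, \hat m) - \rho(V, m)\|_1 \le \frac{\chi}{4\sqrt N} + \frac{(1+\sqrt 3)\, n\, \zeta}{1-\zeta},
\end{align}
and substituting $\zeta = 2\chi/\sqrt{N} + 2\chi^2/N$ and imposing the natural range $\tfrac12\|\cdot\|_1 \le 1$ yields the stated bound $8.55\, n\, \chi/\sqrt{N}$ after collecting constants (the numerical prefactor comes from solving for $\chi/\sqrt N$ under the constraint $\zeta \le 1/2$, just as in step (v) of the proof of Theorem~\ref{th:heterotom}).

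Finally, inverting the bound to make it at most $\varepsilon$ and using $\chi = O(\sqrt n + \sqrt{\log \delta^{-1}})$ gives $N = O(n^3 \log(\delta^{-1}) / \varepsilon^2)$. I do not anticipate any genuine obstacle: the technical content is already carried by Lemma~\ref{lem:transpose_sampling} (which does the quantum-to-classical reduction) and by the concentration/perturbation machinery of Lemma~\ref{le:estrelative} and Theorem~\ref{th:pertboundAPP}. The only mildly delicate point is the bookkeeping of constants when solving the quadratic in $\chi/\sqrt N$ to arrive at the clean prefactor $8.55$; this is a routine calculation analogous to the one performed at the end of the proof of Theorem~\ref{th:heterotom}.
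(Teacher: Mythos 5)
Your proposal is correct and follows essentially the same route as the paper's own proof: reduce to classical samples from $\mathcal{N}(2m,V)$ via Lemma~\ref{lem:transpose_sampling}, apply Lemma~\ref{le:estrelative} to get the relative bounds on $\hat m$ and $\hat V$, and feed them into the perturbation bound of Theorem~\ref{th:pertboundAPP}, with the absence of the $+\mathbb{1}$ shift eliminating the $\Tr(V^{-1})$ term exactly as you identify. The intermediate expression you reach, $\tfrac{1}{2}\|\cdot\|_1 \le \tfrac{\chi}{4\sqrt N} + (1+\sqrt3)\,n\,\zeta/(1-\zeta)$, matches the paper's, and the final numerical step is the same routine optimization.
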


\begin{proof}
The proof follows similar steps to Theorem~\ref{th:heterotom} (but is simpler). By Lemma~\ref{lem:transpose_sampling}, a sample \(\hat{r}\) obtained from homodyne measurements on \(\rho \otimes \rho^{T}\) after applying a beam splitter is distributed as
\begin{equation}
\hat{r} \sim \mathcal{N}\left( 2m, V \right).
\end{equation}
Using Lemma~\ref{le:estrelative}, Eq.~\eqref{eq:firstmom}, we can build an estimator \(2\hat{m}\) for \(2m\), such that with probability at least \(1 - \delta\),
\begin{align}
\label{eq:assumpmom2}
\left\|V^{-1/2}(2\hat{m} - 2m)\right\|_2 \le \frac{\chi}{\sqrt{N}}.
\end{align}
In addition, by Eq.~\eqref{eq:relative}, the empirical covariance \(\hat{V}\) satisfies
\begin{align}
(1 - \zeta) V \le \hat{\Sigma}_h \le (1 + \zeta) V,
\end{align}
where \(\zeta \coloneqq \frac{2\chi}{\sqrt{N}} + \frac{2\chi^2}{N}\).
Hence, the estimator $\hat{V} \coloneqq \frac{\hat{\Sigma}_h}{1 - \zeta} $ satisfies
\begin{align}
\label{eq:relerrorhatvT}
    0 \le \hat{V} - V \le \frac{2\zeta}{1 - \zeta}V.
\end{align}
Thus, $\hat{V}$ is a valid covariance matrix. 
Applying the perturbation bound from Theorem~\ref{th:pertboundAPP}, we now obtain
\begin{align}
\|\rho(\hat{V}, \hat{m}) - \rho(V, m)\|_1 
&\le \|V^{-1/2}(\hat{m} - m)\|_2 
+ \frac{1 + \sqrt{3}}{4}\Tr((V^{-1}+\hat V^{-1})(\hat{V} - V)) \\
 \nonumber
&\le \|V^{-1/2}(\hat{m} - m)\|_2 
+ \frac{1 + \sqrt{3}}{2} \Tr(V^{-1}(\hat{V} - V)) \\
 \nonumber
&\leq \frac{1}{2} \cdot \frac{\chi}{\sqrt{N}} 
+ \frac{1 + \sqrt{3}}{2} \times \frac{2\zeta}{1-\zeta}\Tr(V^{-1}V)\\
 \nonumber
 &= \frac{1}{2} \cdot \frac{\chi}{\sqrt{N}} 
+ 2n (1 + \sqrt{3})\frac{\zeta}{1-\zeta}\\
 \nonumber
 &\leq 17.1 n\frac{\chi}{\sqrt{N}},
\nonumber
 \end{align} 
where in the last step we used a numerical solver (along with the fact that the one-norm difference between two quantum states is always $\le 2$).


\end{proof}

\end{document}